\newtheorem{result}{Result}
\newtheorem{lem}{Lemma}
\DeclareMathOperator{\diag}{diag}
\begin{document}

\title{End-to-end complexity for simulating the Schwinger model on quantum computers}
\author{Kazuki Sakamoto}
\email{kazuki.sakamoto.osaka@gmail.com}
\affiliation{%
Graduate School of Engineering Science, Osaka University\\
1-3 Machikaneyama, Toyonaka, Osaka 560-8531, Japan
}

\author{Hayata Morisaki}
\affiliation{%
Graduate School of Engineering Science, Osaka University\\
1-3 Machikaneyama, Toyonaka, Osaka 560-8531, Japan
}

\author{Junichi Haruna}
\affiliation{%
Center for Quantum Information and Quantum Biology, Osaka University 560-0043, Japan.
}

\author{Etsuko Itou}
\affiliation{Center for Gravitational Physics, Yukawa Institute for Theoretical Physics, Kyoto University, Kitashirakawa Oiwakecho, Sakyo-ku, Kyoto 606-8502, Japan}
\affiliation{Interdisciplinary Theoretical and Mathematical Sciences Program (iTHEMS), RIKEN, Wako 351-0198, Japan}

\author{Keisuke Fujii}
\affiliation{%
Graduate School of Engineering Science, Osaka University\\
1-3 Machikaneyama, Toyonaka, Osaka 560-8531, Japan
}
\affiliation{%
Center for Quantum Information and Quantum Biology, Osaka University 560-0043, Japan.
}
\affiliation{Center for Quantum Computing, RIKEN, Hirosawa 2-1, Wako Saitama 351-0198, Japan}

\author{Kosuke Mitarai}
\email{mitarai.kosuke.es@osaka-u.ac.jp}
\affiliation{%
Graduate School of Engineering Science, Osaka University\\
1-3 Machikaneyama, Toyonaka, Osaka 560-8531, Japan
}
\affiliation{%
Center for Quantum Information and Quantum Biology, Osaka University 560-0043, Japan.
}

\begin{abstract}
{
The Schwinger model is one of the simplest gauge theories. 
It is known that a topological term of the model leads to the infamous sign problem in the classical Monte Carlo method. 
In contrast to this, recently, quantum computing in Hamiltonian formalism has gained attention.
In this work, we estimate the resources needed for quantum computers to compute physical quantities that are challenging to compute on classical computers. 
Specifically, we propose an efficient implementation of block-encoding of the Schwinger model Hamiltonian. 
Considering the structure of the Hamiltonian, this block-encoding with a normalization factor of $\mathcal{O}(N^3)$ can be implemented using $\mathcal{O}(N+\log^2(N/\varepsilon))$ T gates.
As an end-to-end application, we compute the vacuum persistence amplitude. 
As a result, we found that for a system size $N=128$ and an additive error $\varepsilon=0.01$, with an evolution time $t$ and a lattice spacing $a$ satisfying $t/2a=10$, the vacuum persistence amplitude can be calculated using about $10^{13}$ T gates. 
Our results provide insights into predictions about the performance of quantum computers in the FTQC and early FTQC era, clarifying the challenges in solving meaningful problems within a realistic timeframe.
}
\end{abstract}

\maketitle


\section{Introduction}\label{sec1}

Gauge theories capture the laws governing fundamental particles in the natural world. 
The Schwinger model~\cite{Schwinger:1962tp}, which represents quantum electrodynamics in 1+1 dimensions, is one of the simplest yet non-trivial gauge theories, making it a prime candidate for computational studies. 

Theoretical methods and numerical simulations have revealed the interesting phenomena of the Schwinger model, for instance, confinement, chiral symmetry breaking, and dynamical energy-gap generation.
Theoretical approaches such as bosonization, mass perturbation theory, and heavy-mass limit can make clear some properties of the Schwinger model but for only a limited range of the model parameters~\cite{Lowenstein:1971fc, Casher:1974vf, Coleman:1975pw, Coleman:1976uz, Manton:1985jm, Hetrick:1988yg, Jayewardena:1988td, Sachs:1991en, Adam:1993fc, Adam:1994by, Hetrick:1995wq, Narayanan:2012du, Narayanan:2012qf, Lohmayer:2013eka, Tanizaki:2016xcu}.
For wider mass and coupling regimes, the classical Monte Carlo method based on importance sampling is the most popular tool to study the quantitative properties of the model.
However, it is known that the Monte Carlo method suffers from the notorious sign problem if we incorporate the topological $\theta$-term in the model or consider a real-time evolution of its dynamics~\cite{Fukaya:2003ph, Nagata:2021ugx}.

It, therefore, motivates us to use quantum computers, which can simulate general quantum systems with a polynomial computational resource concerning their system sizes~\cite{feynman1982simulating, lloyd1996universal}.
From the perspective of complexity theory, it has been demonstrated that quantum computers are necessary to efficiently perform certain calculations in high energy physics \mbox{\cite{jordan2018bqp}}, and quantum algorithms for high energy physics have been studied extensively \mbox{\cite{jordan2012quantum, jordan2014quantum, roggero2020quantum, watson2023quantum, rubin2024quantum, byrnes2006simulating, klco2019digitization, yeter2019scalar, barata2021single, kan2021lattice, Tong2022provablyaccurate, klco20202, li2023simulating, bauer2023quantum, davoudi2023general, atas2023simulating, Grabowska2024vop}}.
As a simple and essential model of gauge theories,
the Schwinger model has been used as a benchmark to test new computational methods using quantum computation algorithms or tensor network methods~\cite{Banuls:2013jaa, Banuls:2015sta, Banuls:2016lkq,  Buyens:2016ecr, Buyens:2016hhu, Banuls:2016gid, ercolessi2018phase, klco2018quantum, Funcke:2019zna, Chakraborty:2020uhf, Kharzeev:2020kgc, magnifico2020real, avkhadiev2020accelerating, Honda:2021aum, Honda:2021ovk, deJong:2021wsd, Nguyen:2021hyk, Honda:2022edn, thompson2022quantum, xie2022variational, Nagano:2023uaq, Itou:2023img, mueller2023quantum, farrell2024scalable}.
It is customary to use the Hamiltonian formulations of the lattice Schwinger model to simulate the dynamics on quantum computers, and in the Hamiltonian formulations, the sign problem is absent from the beginning.
There are two types of Hamiltonian formulations so far. One is obtained by mapping the electron and positron on a staggered fermion lattice and the electric field on its links \cite{Kogut:1974ag}.
Another is the one obtained by eliminating the degree of freedom of the electric field with Gauss's law \cite{Banks1976, Hamer1997}.

The former is advantageous in that the interactions are geometrically local.
This property leads to the state-of-the-art algorithm by Tong \textit{et al.} that implements $e^{-iHt}$ in time $\widetilde{\mathcal{O}}\left(N t\ \mathrm{polylog}\left(\Lambda_0 \varepsilon^{-1}\right)\right)$, where $N$ is the number of fermionic sites, $\Lambda_0$ is the initial magnitude of the electric field at local links, and $\varepsilon$ is the error measured in operator norm \cite{Tong2022provablyaccurate}.
It is achieved by various sophisticated techniques, including qubitization by a linear combination of unitaries \cite{Low2019hamiltonian}, Hamiltonian simulation in the interaction picture \cite{LowWiebe2018}, and decomposition of the time evolution operator to local blocks using the Lieb-Robinson bound~\cite{haah2018}.
However, this formulation needs more qubits to merely express the system itself than the latter one, due to the need for simulating electric degrees of freedom.
Also, to achieve the $\widetilde{\mathcal{O}}\left(N t\mathrm{polylog}\left(\Lambda_0 \varepsilon^{-1}\right)\right)$ scaling, we would need many ancillary qubits to implement the assumed HAM-T oracles for Hamiltonian simulation in the interaction picture~\cite{LowWiebe2018}.
Trotterization \cite{Shaw2020quantumalgorithms} and its hybridization with qubitization techniques \cite{Rajput2022hybridizedmethods} can conduct the simulation without such ancillary qubits for oracles and with a reduced number of them, respectively.
However, they come at the cost of increased runtime $\widetilde{\mathcal{O}}\left(N^{5 / 2} t^{3 / 2}/\varepsilon^{1 / 2}\right)$ and $\widetilde{\mathcal{O}}\left(N^2 t^2/\varepsilon\right)$, respectively, which can be troublesome, especially for long-time simulations.

Hamiltonians in the latter formulation remove the degree of freedom of the electric field and alleviates the number of qubits needed to express the system, but at the cost of all-to-all interaction on the fermions \cite{Banks1976,Hamer1997}.
Removing the electric field from the Hamiltonian makes it hardware-friendly and helps the experimental realizations on current devices \cite{martinez2016realtime}.
This formulation has been studied mainly via Trotterization, possibly because of its experimental ease.
Nguyen \textit{et al.} \cite{Nguyen2022} have shown that we can achieve $\mathcal{O}\left(N^{4+1 / p} t^{1+1 / p}/\varepsilon^{1/p}\right)$ cost by utilizing $p$-th order product formula.
In this work, we use this formulation to seek feasibility on early fault-tolerant quantum computers which have a limited number of qubits. 

In the field of quantum algorithms, Trotterization can be superior to post-Trotter algorithms (e.g. quantum eigenvalue transformation based on block-encoding \mbox{\cite{Low2019hamiltonian, chakraborty2018power, gilyen2019quantum}}) both asymptotically and non-asymptotically in certain cases. 
For example, in the case of the Fermi-Hubbard model, one can achieve Trotter error that scales with the number of fermions rather than the number of fermionic modes by using physical constraints and commutator scaling \mbox{\cite{clinton2021hamiltonian, campbell2021early, schubert2023trotter}}. 
This feature has also been applied to several Hamiltonians in HEP whose structures are similar to that of the Fermi-Hubbard model \mbox{\cite{watson2023quantum}}.
Additionally, Trotterization can achieve the nearly optimal gate complexity for simulating geometrically local lattice systems \mbox{\cite{childs2019nearly}}, and can achieve the lowest asymptotic gate cost for simulating electronic structure Hamiltonians in second quantization in the plane-wave basis \mbox{\cite{su2021nearly}}. 
However, the Schwinger model cannot benefit from these advantages (mainly due to the large norm and all-to-all connectivity). 
In fact, even if we use the state-of-the-art theory (Theorem 1 in Ref. \mbox{\cite{su2021nearly}}), the total gate complexity is $\mathcal{O}\left(N^{4+1 / p} t^{1+1 / p}/\varepsilon^{1/p}\right)$ considering that the number of Trotter steps is $\mathcal{O}\left(N^{2+1 / p} t^{1+1 / p}/\varepsilon^{1/p}\right)$ and the gate complexity of each Trotter step is $\mathcal{O}\left(N^2\right)$ when we use $p$-th order product formula. 
On the other hand, the Hamiltonian of the Schwinger model has a distinctive structure, which is expected to allow us to implement efficient block-encodings.
These facts motivate us to construct block-encoding based algorithms as described below.

In this work, we develop quantum circuits for the so-called block-encoding \cite{gilyen2019quantum} of the Schwinger model Hamiltonian in the latter formulation.
While we focus on the Schwinger model, the techniques introduced in this work have the possibility to be applied to Hamiltonians with all-to-all interactions and a constant number of parameters (the Schwinger model has only four parameters, and this number is independent of the system size).
Utilizing the techniques developed in Refs.~\cite{babbush2018encoding, von2021quantum} and optimizing it for this particular model, we obtain a block-encoding with $\mathcal{O}(N+\log^2(N/\varepsilon))$ T gates and $\mathcal{O}(\log(N))$ ancillary qubits, and having the normalization factor that scales $\mathcal{O}(N^{3})$.
This provides us a quantum algorithm to simulate $e^{-iHt}$ with $\mathcal{O}(N^4t + (N^3 t + \log(1/\epsilon)) \log^2(Nt/\varepsilon) )$ 
T gates, using the quantum eigenvalue transformation algorithm \cite{gilyen2019quantum}.
Importantly, we give its concrete resource including every constant factor, which is described in Table \ref{tab:summary-block-encoding}. 
Moreover, we give an \textit{end-to-end} complexity for computing the quantity called vacuum persistence amplitude defined as $|\braket{\mathrm{vac}|e^{-iHt}|\mathrm{vac}}|$~\cite{PhysRev.82.664, Muschik:2016tws, Martinez:2016yna}.
The vacuum persistence amplitude observes one of the most interesting quantum effects, since it describes the instability of vacuum due to quantum fluctuations.
We set $\ket{\mathrm{vac}}$ as a computational basis state (N\'{e}el state) which corresponds to a ground state of the Hamiltonian under a certain parameter setting. 
This is in high contrast with the previous estimates on quantum chemistry and condensed matter problems \cite{babbush2018encoding,lee2021even,childs2017quantum,Yoshioka2022}, which neglects the cost for preparing the approximate ground state of the system.
We find that for $N=128$ and an additive error $\varepsilon=0.01$, with an evolution time $t$ and a lattice spacing $a$ satisfying $t/2a=10$, $|\braket{\mathrm{vac}|e^{-iHt}|\mathrm{vac}}|$ can be calculated with $1.72\times 10^{13}$ T gates.
Our work paves the way toward the practical quantum advantage.

This paper is organized as follows.
In Sec.~\ref{seing, wc2}, we briefly review the definition and physical properties of the Schwinger model.
We also review the block encoding and its application to quantum simulations of quantum systems.
Sec.~\ref{sec3} is the main part of our paper. There, we estimate the cost with the block encoding of the Schwinger model Hamiltonian.
In Sec.~\ref{sec:end-to-end-resource-estimates}, we provide the resource estimates for calculating the vacuum persistence amplitude. 
Moreover, we compare the complexities of our approach and the prior one. 
Sec.~\ref{sec:conclusion} is devoted to the conclusions.

\section{Preliminaries}\label{seing, wc2}

\subsection{Schwinger model}\label{sec:Schwinger model}
\subsubsection{Lagrangian, Hamiltonian, and qubit description}
In this section, we briefly review the Schwinger model in the continuum spacetime and on a lattice, and also its qubit description, following Refs.~\cite{Honda:2021aum, Honda:2021ovk}.
The Schwinger model is quantum electrodynamics in $1+1$ dimensions and its Lagrangian in continuous spacetime is given by
\begin{align}
    L = \int dx\, \biggl(\bar{\psi}(x) (i\gamma^\mu D_\mu-m)\psi(x) 
    -
    \frac{1}{4} F^{\mu\nu}(x)F_{\mu\nu}(x)
    +\frac{\theta}{2\pi} \epsilon^{\mu\nu} \partial_\mu A_\nu(x)
    \biggr),
\end{align}
where $\psi$ is a Dirac field which represents the electron.
$D_\mu\psi(x)$ and $F_{\mu\nu}$ are defined as
\begin{align}
    D_\mu \psi(x) &\coloneqq (\partial_\mu - ig A_\mu(x))\psi(x),\\
    F_{\mu\nu}(x) &\coloneqq \partial_\mu A_\nu(x) - \partial_\nu A_\mu(x),
  \quad  F^{\mu\nu}(x)  \coloneqq  \eta^{\mu\rho}\eta^{\nu\sigma}F_{\rho\sigma},
\end{align}
where $A_\mu(x)$ is a $U(1)$ gauge field which represents the photon field, and the temporal derivative of $A_1$ corresponds to the electric field.
$\gamma^\mu$ is the Dirac gamma matrix and $\epsilon^{\mu\nu}$ is the completely anti-symmetric tensor.
The metric $\eta^{\mu\nu}$ in the Minkowski spacetime is defined as $\eta^{\mu\nu} \coloneqq \diag (-1,1)$.
This model has a gauge redundancy and is invariant under the $U(1)$ gauge transformation.
To reduce the gauge redundancy, it is common to introduce a gauge-fixing condition. Here, we choose the temporal gauge, $A_0=0$.
The model has three real parameters, $g$, $m$, and $\theta$, which are respectively called the electric charge, the electron mass, and the coupling of the topological $\theta$-term. In the conventional lattice Mante Carlo method, the topological $\theta$-term causes the sign problem since this term takes a complex value in Euclidean spacetime.

After the Legendre transformation under these conditions, the Hamiltonian is given by
\begin{align}
\label{e:ContinuumSchwingerModelHamiltonian}
    H = \int dx\, \biggl(
    \frac{1}{2}\biggl(
    \Pi(x)-\frac{g\theta}{2\pi}
    \biggr)^2
    -i\bar{\psi}(x)\gamma^1(\partial_1 -igA_1(x))\psi(x) + m \bar{\psi}(x)\psi(x)
    \biggr),
\end{align}
where $\Pi(x)$ denotes  the conjugate momentum of $A_1(x)$, defined as 
    $\Pi(x) \coloneqq \partial_0 A_1(x)+ \frac{g\theta}{2\pi}.$
Furthermore, we impose the Guass' law to project the Hilbert space into the physical one
\begin{align}
\label{e:GaussLaw} 
    \partial_1 \Pi (x) - g \bar{\psi}(x) \gamma^0\psi(x)=0.
\end{align}

Now, we formulate this model on a lattice.
The photon field $A_1(x)$ is replaced by the link variable, $U_n \leftrightarrow \exp(ia g A_1(x))$, where $x\coloneqq na$ and $n=0,\ldots,N-1$ represents the position of the lattice.
The link variable is defined on the link between the $n$-th and $(n+1)$-th sites, and the conjugate momentum is replaced by $L_n \leftrightarrow - \frac{\Pi(x)}{g}$ on the $n$-th site.
To deal with the Dirac fermion on the lattice, we utilize the staggered fermion~\cite{Kogut:1974ag}.
The staggered fermion $\chi_n$ with lattice spacing $a$ represents the discretization of the two-component Dirac fermion, namely up and down spin components, with the lattice spacing $2a$.
Then, we replace $\psi(x)$ by
\begin{equation}
\psi(x)=\begin{pmatrix}
    \psi_{u}(x)\\
    \psi_{d}(x)
\end{pmatrix}
\leftrightarrow \frac{1}{\sqrt{2a}}
\begin{pmatrix}
    \chi_{\lfloor n/2 \rfloor}\\
    \chi_{\lfloor n/2 \rfloor+1}
\end{pmatrix}. 
\label{eq:chi_to_psi}
\end{equation}
Thus, the number of original Dirac fermion is $N/2$ on $N$-site lattice. 
We set $N$ to be an even number to consider the system that includes an integer number of Dirac fermions in this work.

The lattice version of the Schwinger model Hamiltonian \eqref{e:ContinuumSchwingerModelHamiltonian} is given by
\begin{align}
    H = J \sum_{n=0}^{N-2} \biggl(
    L_n+\frac{g\theta}{2\pi}
    \biggr)^2
    -iw \sum_{n=0}^{N-2} (
    \chi_n^\dagger U_n \chi_{n+1}
    - \chi_{n+1}^\dagger U_n^\dagger \chi_{n}
    ) 
    + m \sum_{n=0}^{N-1} (-1)^n \chi_n^\dagger \chi_n,
\end{align}
where $J\coloneqq g^2 a/2$ and $w \coloneqq 1/2a$.
Also, the lattice version of Gauss's law constraint is  
\begin{align}
L_n - L_{n-1} = \chi_n^\dagger \chi_n - \frac{1-(-1)^n}{2}.
\end{align}
In this work, we choose the formulation that solves Gauss's law imposing the boundary condition $L_{-1}=0$ and the gauge condition $U_n=1$.
Then, we can express the photon field $L_n$ in terms of the electron fields $\chi_n$ as
\begin{align}
    L_n = \sum_{i=0}^n \biggl(
    \chi_i^\dagger \chi_i - \frac{1-(-1)^i}{2}
    \biggr) .
\end{align}
Finally, using the Jordan-Wigner transformation \cite{Jordan:1928wi},
\begin{align}
    \chi_n \to \frac{X_n-iY_n}{2} \prod_{j=0}^{n-1} (-iZ_j),
\end{align}
the qubit description of the Schwinger model Hamiltonian is given by
\begin{equation}
\label{eq:schwinger}
H_S = J \sum_{n=0}^{N-2} 
\left(
    \sum_{i=0}^{n}\frac{Z_i + (-1)^i}{2} + \frac{\theta}{2\pi}
\right)^2
    + \frac{w}{2}\sum_{n=0}^{N-2}
\left(
    X_n X_{n+1} + Y_n Y_{n+1}
\right)
+ \frac{m}{2}\sum_{n=0}^{N-1} (-1)^n Z_n.
	\end{equation}
 We can see that the insertion of $\theta$-term does not induce any difficulty of the calculation since it gives a constant shift and $Z_i$ terms to the Hamiltonian.

Now, it is worth comparing this Hamiltonian with ones used in previous works about resource estimation. 
In the condensed matter problems, the Heisenberg model and the Hubbard model are often discussed~\cite{Yoshioka2022}. 
While the Hamiltonians of these models have only local interactions, the Schwinger model Hamiltonian which we investigate has all-to-all interactions, making our problem more challenging than typical condensed matter physics scenarios. 
On the other hand, the electronic Hamiltonian in quantum chemistry problems has even more complex all-to-all interactions. 
This results in a significantly large 1-norm of Hamiltonian coefficients (n.b. this value is an important element of complexity, see Sec.~\ref{subsec:block-encoding}~and~\ref{subsec:hamiltonian_simulation_intro}). 
However, using tensor factorization techniques such as tensor hypercontraction and performing numerical analysis, it has been observed that the norm scales as between $O(N)$ and $O(N^3)$ \cite{lee2021even}. 
This scaling may be smaller than the $O(N^3)$ scaling of the Schwinger model. 
It is noteworthy that the Schwinger model Hamiltonian in Eq.~(\ref{eq:schwinger}) has a similar form to the double low rank factorized electronic Hamiltonian \cite{lee2021even,von2021quantum}. 
In summary, the Hamiltonian we investigate is computationally more challenging than one of condensed matter physics and is comparable to or more difficult than the electronic Hamiltonian in quantum chemistry.

\subsubsection{Vacuum persistence amplitude}
In this work, we investigate the cost for calculating the vacuum persistence amplitude~\cite{PhysRev.82.664, Muschik:2016tws, Martinez:2016yna}.
It expresses the vacuum instability due to quantum fluctuations, which is defined as
\begin{equation}
    \mathcal{G}(t) = \langle \mathrm{vac}| e^{-i H_S t } |\mathrm{vac} \rangle. \label{eq:def-persistent}
\end{equation}
In this work, we set the N\'{e}el state as a vacuum,  $|\mathrm{vac} \rangle =| 1 0 1 0 \cdots \rangle$ in a computational basis, which realizes the ground state for $m \gg g$ region.

It is worth putting a comment on the physical meaning of this quantity. The Schwinger model is one of the relativistic theories, so that the vacuum instability causes the particle-antiparticle pair creation (and pair annihilation) via $\Delta E = \Delta m c^2$. 
The vacuum persistence amplitude is related to the particle production density defined as
\begin{align}
     \nu (t) &= \frac{1}{N} \sum_{n} \langle  \mathrm{vac}| (-1)^n    \hat{\psi}^\dag  \hat{\psi} (t)   | \mathrm{vac} \rangle \\
     &= \frac{1}{2N} \sum_{n} \langle  \mathrm{vac}| \left( (-1)^n  \hat{Z}_n (t)  +1  \right) | \mathrm{vac} \rangle 
\end{align}
as $\nu(t)=- N^{-1} \log ( |\mathcal{G}(t)|^2 )$ in the continuum limit. Here, $\hat{\psi}^\dag\hat{\psi} (t)$ and $\hat{Z}_n (t)$ denote the particle number operators at time $t$ in the Heisenberg picture. During a real-time evolution, this quantity oscillates due to quantum fluctuations. The vacuum persistence amplitude is useful to investigate such a dynamical quantum effect.

	\subsection{Block-encoding}\label{subsec:block-encoding}
	In this work, we utilize the framework of block-encoding \cite{gilyen2019quantum, chakraborty2018power} to implement $H_S$ on quantum computers.
    We say that the $(n_H+l)$-qubit unitary $U$ is an $(\alpha,l,\varepsilon)$-block-encoding of a Hamiltonian $H$ if
	\begin{equation}
	\left\| H - \alpha \left(\bra{0^l}\otimes I\right) U \left(\ket{0^l} \otimes I \right) \right\| \leq \varepsilon .
	\end{equation}
	In this work, we encode a Hamiltonian which is represented as a sum of $L$ unitary operators, i.e., 
	\begin{equation}
	H=\sum_{i=0}^{L-1}a_i U_i \quad (a_i>0,\ U_i: \text{unitary}). 
	\end{equation}
	Such a Hamiltonian can be block-encoded via the linear combination of unitaries (LCU) algorithm.
    The LCU algorithm constructs the block encoding of a Hamiltonian $H=\sum_{i=0}^{L-1}a_i U_i$ with the so-called PREPARE operator and the SELECT operator \cite{babbush2018encoding}. 
    The PREPARE operator $P$ is an $l=\lceil \log_2{L} \rceil$-qubit unitary that acts on the initial state $\ket{0^l}$ as
	\begin{equation}
	P \ket{0^l} = \sum_{i=0}^{L-1} \sqrt{\frac{a_i}{\alpha}} \ket{i} ,
	\end{equation}
	where $\alpha = \|\bm{a}\|_1 = \sum_{i=0}^{L-1} |a_i|$ is a normarization factor.
    The SELECT operator $V$ is an $(n_H+l)$-qubit unitary and is defined as
	\begin{equation}
	V = \sum_{i=0}^{L-1} \ket{i}\bra{i} \otimes U_i + \left( I - \sum_{i=0}^{L-1} \ket{i}\bra{i} \right) \otimes I.
	\end{equation}
    $(P^\dag \otimes I)V(P\otimes I)$ is an $(\alpha,l,0)$-block-encoding of a LCU Hamiltonian $H$ since
	\begin{equation}
	(\bra{0^l}\otimes I) (P^\dag \otimes I)V(P\otimes I) (\ket{0^l} \otimes I) = \frac{1}{\alpha}\sum_{i=0}^{L-1}a_i U_i.
	\end{equation} 
	Given a block-encoding of a Hamiltonian $H$, we can use the quantum eigenvalue transformation (QET) algorithm \cite{gilyen2019quantum} to implement a matrix function $f(H)$ approximately.

    We can also take a linear combination of block-encoded Hamiltonians. 
    Let $U_k$ be a block-encoding of $H_k$, i.e.,
    \begin{equation}
    \alpha_k(\bra{0}\otimes I)U_k(\ket{0}\otimes I)=H_k.
    \end{equation}
    The block-encoding of $\sum_{k=0}^{K-1} a_k H_k$, where $a_k$ is a positive real coefficient, can be implemented via the use of  a $\lceil\log_2 K\rceil$-qubit operation $P_{\mathrm{LC}}$ such that
    \begin{equation}
    \label{eq:PLC}
    P_{\mathrm{LC}}\ket{0} = \sum_{k=0}^{K-1} \sqrt{\frac{a_k \alpha_k}{\sum_l a_l \alpha_l}}\ket{k}.
    \end{equation}
    Namely, $(P_{\mathrm{LC}}^\dagger\otimes I) (\sum_k\ket{k}\bra{k}\otimes U_k) (P_{\mathrm{LC}}\otimes I)$ is a block-encoding of $\sum_k a_k H_k$ \cite{gilyen2019quantum}.

    \subsection{Hamiltonian simulation}\label{subsec:hamiltonian_simulation_intro}
    Using $U_H$ that is a $(\alpha,l,\varepsilon/(2|t|))$-block-encoding of a Hamiltonian $H$, we can implement a block-encoding of a time evolution operator $e^{-iHt}$.
    QET can implement a $(1,l+2,\varepsilon)$-block-encoding of $e^{-iHt}$ by two-types of procedure:
    \begin{itemize}
        \item $6\alpha|t|+9\ln{(6/\varepsilon)}$-times usages of $U_H$ or its inverses
        \item $3$-times usages of controlled-$U_H$ or its inverses
    \end{itemize}
    The process is explained in detail in Appendix~\ref{appsubsec:hamsim-ampest}. 

    \subsection{Amplitude estimation}
	Another important algorithm in this work is amplitude estimation \cite{brassard2000quantum, rall2023amplitude}. 
    Given a unitary $U_\psi$ such that $U_\psi\ket{0} = \ket{\psi}$ and a projector $\Pi$, amplitude estimation algorithm outputs an estimate of $a=|\Pi \ket{\psi}|$. 
    The original algorithm proposed in Ref. \cite{brassard2000quantum} outputs an estimate $\hat{a}$ satisfying $|a-\hat{a}|\leq \varepsilon$ with success probability $\geq 1-\delta$ in $\mathcal{O}(\varepsilon^{-1}\log{(1/\delta)})$ queries to the reflections $2\Pi - I$ and $2\ket{\psi}\bra{\psi}-I$. 
    Recent works improved the query complexity in terms of constant factors \cite{suzuki2020amplitude,grinko2021iterative,rall2023amplitude}. 
    We employ Chebyshev amplitude estimation \cite{rall2023amplitude} which provides the smallest query complexity at present.

\section{Block-encoding of Schwinger model Hamiltonian}\label{sec3}
In this section, we present our strategy to block-encode the Schwinger model Hamiltonian, $H_S$, and its complexity. We first explain our strategy in Sec.~\ref{sec:block-encoding-overview}.
Here, we decompose $H_S$ into several parts. The block-encodings of each term are shown in Sec.~\ref{sec:block1}--\ref{sec:block3}. In Sec.~\ref{sec:full-circuit}, we show the full circuit construction.
The proofs of the following Results are given in Appendix~\ref{appsec:resource-estimates}.

\subsection{Result overview}\label{sec:block-encoding-overview}
\begin{table*}
\centering
\caption{Summary of number of T gates, ancilla qubits, and definitions of the symbols}
\label{tab:summary-block-encoding}
\begin{tabular}{p{2.8cm}p{11.2cm}}
\hline \hline
\textbf{Category} & \textbf{Expression} \\
Number of T gates 
& $20N + 4 d\left(8\left\lceil\log_2{\left(\frac{28d\alpha_S}{\varepsilon}\right)}\right\rceil + 8 \lceil \log_2 N \rceil +2C-2\right) +312 \left\lceil \log_2 \left(\frac{546 \alpha_S}{\varepsilon}\right)\right\rceil + 124 \lceil \log_2 N \rceil
 +38\lceil \log_2 N' \rceil + 38\lceil \log_2 N'' \rceil + f + 78 C + 216$ \\
Number of ancilla qubits & $6\lceil\log_2 N \rceil + \max{(2\lceil \log_2 N' \rceil + \lceil \log_2 (N'-1) \rceil, 3\lceil \log_2 N'' \rceil)} + 6$ \\
Definitions & $d$ is an odd integer such that $d \geq \sqrt{2} \ln{(2\sqrt{14\alpha_S/\varepsilon})}$. \\
& $C = 5+4 \log_2(1+\sqrt{2}) \approx 10.09$. \\
& $f = 16\eta + 8\mu + 8\eta' + 8\mu' + 16\eta'' + 256 \lceil \log_2 L \rceil + 32 \lceil \log_2 K \rceil+ 32 \lceil \log_2 L' \rceil + 32 \lceil \log_2 K' \rceil + 64\lceil \log_2 L'' \rceil $. \\
& $(\eta, L, \eta', L', \eta'', L'',\mu,K, \mu',K') = (z(N), r(N), z(N'), r(N'), z(N''), r(N''), z(N-1), r(N-1), z(N'-1), r(N'-1))$ where, for an integer $M$, $z(M)$ and $r(M)$ are integers such that $M=2^{z(M)} r(M)$.\\
& $N' = \lceil N/2 \rceil$, $N'' = \lfloor N/2 \rfloor$. \\
\hline \hline
\end{tabular}
\end{table*}

\begin{result}[Block-encoding of the Schwinger model Hamiltonian]\label{result:schwinger-block-encoding:main}
Let $N\geq8\in\mathbb{N}$ be a system size and let $H_S$ be the Schwinger model Hamiltonian given by Eq. (\ref{eq:schwinger}). 
Let $b=\lceil\log_2{N}\rceil$.
For any $\varepsilon>0$, we can implement an unitary $U_{H_S}$ which is an $(\alpha_S, 2b+3, \varepsilon)$-block-encoding of
\begin{equation}
H_{S,\mathrm{mod}}\equiv
H_S - \frac{J}{8}\sum_{l=1}^{N-1} l^2 - J\sum_{n=1}^{N-1}\left(\frac{1}{2}\frac{1+(-1)^{n-1}}{2}+\frac{\theta}{2\pi}\right)^2 ,
\end{equation}
where
\begin{equation}\label{eq3.1}
\alpha_S = w(N-1) + \frac{m}{2}N +  \frac{J\theta}{2\pi} \sum_{\substack{l=1 \\ l: \mathrm{even}}}^{N-1} l +\left(\frac{J\theta}{2\pi}+\frac{J}{2}\right) \sum_{\substack{l=1 \\ l: \mathrm{odd}}}^{N-1} l + \frac{J}{8}\sum_{l=1}^{N-1} l^2,
\end{equation}
using the cost listed in Table \ref{tab:summary-block-encoding}.
\end{result}
Note that we can ignore the additive term proportional to the identity since this difference can be easily modified after simulating the Hamiltonian. 
The error $\varepsilon$ is derived from PREPARE operators, especially single-qubit rotation gates and fixed-point amplitude amplification, as shown in Appendix~\ref{appsubsec:prepare}. 

We will describe the overview to obtain the complexity stated above. 
First, we rewrite the Hamiltonian as:
\begin{equation}\label{eq:schwinger-mod}
H_S = H_{XX} + H_{YY} + H_{Z} + H_{Z\text{even}} + H_{Z\text{odd}} + H_{Z^2} + J\sum_{n=1}^{N-1}\left(\frac{1}{2}\frac{1+(-1)^{n-1}}{2}+\frac{\theta}{2\pi}\right)^2,
\end{equation}
where
\begin{align}
\begin{split}
    H_{XX} &= \frac{w}{2}\sum_{n=0}^{N-2}X_n X_{n+1}, ~
H_{YY} = \frac{w}{2}\sum_{n=0}^{N-2} Y_n Y_{n+1},~
H_{Z} = \frac{m}{2}\sum_{n=0}^{N-1}(-1)^n Z_n, \\
H_{Z\text{even}} &= J\frac{\theta}{2\pi} \sum_{\substack{n=1 \\ n: \mathrm{even}}}^{N-1} \sum_{i=0}^{n-1} Z_i, ~
H_{Z\text{odd}} = J \left( \frac{1}{2}+\frac{\theta}{2\pi}\right) \sum_{\substack{n=1 \\ n: \mathrm{odd}}}^{N-1} \sum_{i=0}^{n-1} Z_i, ~
H_{Z^2} = \frac{J}{4}\sum_{n=1}^{N-1}\left(\sum_{i=0}^{n-1}Z_i\right)^2.
\end{split}
\end{align}
Note that $H_Z$, $H_{Z\text{even}}$ and $H_{Z\text{odd}}$ can be summarized into one Hamiltonian in the form of $\sum_{i} a_i Z_i$ where $a_i$ is an elementary function of the index $i$ and parameters $J$, $\theta$, $w$, $N$ and $m$.
It is therefore possible to use arithmetics along with techniques like black-box state preparation \cite{sanders2019black} to construct PREPARE operation.
However, it requires many ancilla qubits and T gate cost and thus we take an alternative approach.

\subsection{Block-encoding of \texorpdfstring{$H_{XX}$}{}, \texorpdfstring{$H_{YY}$}{}, and \texorpdfstring{$H_{Z}$}{}}
\label{sec:block1}
The block-encodings of $H_{XX}$, $H_{YY}$, and $H_{Z}$ can be implemented easily.
The PREPARE operators and SELECT operators for these Hamiltonians are given as follows, respectively: 
\begin{equation}\label{eqapp1.10}
P_{XX}\ket{0^b} = \mathrm{UNI}_{(N-1)}\ket{0^b} = \frac{1}{\sqrt{N-1}} \sum_{n=0}^{N-2}\ket{n}, 
\quad V_{XX}=\sum_{n=0}^{N-2} \ket{n}\bra{n}\otimes X_nX_{n+1} + \sum_{n=N-1}^{2^b-1} \ket{n}\bra{n}\otimes I, 
\end{equation}
\begin{equation}\label{eqapp1.11}
P_{YY}\ket{0^b} = \mathrm{UNI}_{(N-1)}\ket{0^b} = \frac{1}{\sqrt{N-1}} \sum_{n=0}^{N-2}\ket{n}, 
\quad V_{YY}=\sum_{n=0}^{N-2} \ket{n}\bra{n}\otimes Y_n Y_{n+1} + \sum_{n=N-1}^{2^b-1} \ket{n}\bra{n}\otimes I, 
\end{equation}
\begin{equation}\label{eqapp1.12}
P_{Z}\ket{0^b} = \mathrm{UNI}_{(N)}\ket{0^b} = \frac{1}{\sqrt{N}} \sum_{n=0}^{N-1}\ket{n}, 
\quad V_{Z}=\sum_{n=0}^{N-1} \ket{n}\bra{n}\otimes (-1)^n Z_n + \sum_{n=N}^{2^b-1} \ket{n}\bra{n}\otimes I, 
\end{equation}
where $b=\lceil\log_2 N\rceil$. 
$\mathrm{UNI}_{(N)}$ denotes the unitary operator preparing the uniform superposition state, which can be implemented with $O(\log{N})$ T gates \cite{sanders2020compilation}. 
The detailed cost to prepare a uniform superposition state can be found in Appendix~\ref{appsubsec:prepare}. 
Also, the SELECT operators can be implemented with $O(N)$ T gates \cite{babbush2018encoding}. 

\subsection{Block-encoding of \texorpdfstring{$H_{Z\text{even}}$}{} and \texorpdfstring{$H_{Z\text{odd}}$}{}}
\label{sec:block2}
Next, let us discuss the construction of $H_{Z\text{even}}$.
We can construct the block encoding of $H_{Z\text{even}}$ as 
\begin{equation}
(P_{Z\text{even}}^\dag \otimes I) V_{Z^2} (P_{Z\text{even}} \otimes I),
\end{equation}
with a PREPARE operator $P_{Z\text{even}}$ and a SELECT operator $V_{Z^2}$, defined as 
\begin{align}
\label{eqapp1.15}
P_{Z\text{even}} \ket{0^b}\ket{0^b}  
&= \frac{1}{\sqrt{\alpha_{S1}}} \sum_{\substack{n=1 \\ n: \mathrm{even}}}^{N-1} \ket{n} \otimes \sum_{i=0}^{n-1}\ket{i}, \\
\label{eqapp1.16}
V_{Z^2} &= I \otimes \sum_{i=0}^{N-1} \ket{i}\bra{i} \otimes Z_i + I \otimes \left( I - \sum_{i=0}^{N-1} \ket{i}\bra{i} \right) \otimes I,
\end{align}
where $\alpha_{S1} = \sum_{n=1,n: \mathrm{even}}^{N-1}n$ is a normalization factor.
Here, $P_{Z\text{even}}$ can be written as a product of two operators; $P_{Z\text{even}}=P_2 (P_{S1} \otimes I)$.  $P_{S1}$ is a $b$-qubit unitary defined as
\begin{equation}\label{eqapp1.13}
P_{S1}\ket{0^b} = \frac{1}{\sqrt{\alpha_{S1}}} \sum_{\substack{n=1 \\ n: \mathrm{even}}}^{N-1} \sqrt{n}\ket{n},
\end{equation}
while $P_2$ is a $2b$-qubit unitary given by
\begin{equation}\label{eqapp1.14}
P_2 = \sum_{n=1}^{N-1} \ket{n}\bra{n} \otimes \left( \frac{1}{\sqrt{n}}\sum_{i=0}^{n-1}\ket{i}\bra{0^b} + \dots \right) + \ket{0}\bra{0}\otimes W_0 + \sum_{n=N}^{2^b-1} \ket{n}\bra{n} \otimes W_n, 
\end{equation}
where $W_0, W_N, \dots W_{2^b-1}$ are some unitaries whose actions we do not have to consider.
The state preparations by $P_{S1}$ can be executed with $O(\log{N})$ T gates \cite{babbush2018encoding} and $P_2$ can be implemented using $O(\log{(1/\varDelta)}\log{(N/\varepsilon)})$ T gates where $\varepsilon$ and $\Delta$ are error parameters (see Appendix \ref{appsubsec:prepare} for concrete definitions, \cite{gilyen2019quantum, yoder2014fixed, rall2023amplitude}).
The block-encoding of $H_{Z\text{odd}}$ can be implemented by replacing $P_{S1}$ operator with $P_{S2}$ defined as 
\begin{equation}\label{eqapp1.18}
P_{S2}\ket{0^b} = \frac{1}{\sqrt{\alpha_{S2}}} \sum_{\substack{n=1 \\ n: \mathrm{odd}}}^{N-1} \sqrt{n}\ket{n},
\end{equation}
where $\alpha_{S2} = \sum_{n=1,n: \mathrm{odd}}^{N-1}n$.
We provide the detailed construction of these operators $P_{S1},P_{S2},$ and $P_{2}$ in Appendix~\ref{appsubsec:prepare}. 

\subsection{Block-encoding of \texorpdfstring{$H_{Z^2}$}{}}
\label{sec:block3}
Lastly, we construct the block-encoding of $H_{Z^2}$. 
The following Lemma~\ref{lemapp:chev-block} is useful to implement the square of a Hamiltonian. 
\begin{lem}[Consructing the square of the block-encoded operator]\label{lemapp:chev-block}
Let $b\in\mathbb{N}$, $H$ be a Hamiltonian satisfying $\|H\|\leq1$, and $R_0=2\ket{0^b}\bra{0^b}-I$. Suppose that $U$ is an $(1,b,0)$-block-encoding of $H$. Then $U^\dag (R_0\otimes I) U$ is an $(1,b,0)$-block-eocoding of $2H^2-I$. This is a special case of quantum eigenvalue transformation with the Chebyshev polynomial of the first kind \cite{childs2017quantum, von2021quantum}.  
\end{lem}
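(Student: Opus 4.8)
The plan is to verify the block-encoding property by a direct, exact computation, since the target operator $2H^2-I$ is the degree-two Chebyshev polynomial of the first kind $T_2(H)$ and the sandwich $U^\dagger(R_0\otimes I)U$ is the simplest nontrivial QET sequence. First I would unpack the hypothesis: a $(1,b,0)$-block-encoding means the top-left block of $U$ equals $H$ exactly, i.e.\ $H=(\bra{0^b}\otimes I)\,U\,(\ket{0^b}\otimes I)$. Writing the projector $\Pi=\ket{0^b}\bra{0^b}$, we have $R_0\otimes I=2(\Pi\otimes I)-I$, so the goal reduces to showing that the top-left block of $U^\dagger(R_0\otimes I)U$ equals $2H^2-I$.

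Next I would expand the relevant matrix element into two pieces,
\begin{equation*}
(\bra{0^b}\otimes I)\,U^\dagger(R_0\otimes I)U\,(\ket{0^b}\otimes I)
=2\,(\bra{0^b}\otimes I)\,U^\dagger(\Pi\otimes I)U\,(\ket{0^b}\otimes I)
-(\bra{0^b}\otimes I)\,U^\dagger U\,(\ket{0^b}\otimes I).
\end{equation*}
The second term collapses immediately: by unitarity $U^\dagger U=I$ and $\braket{0^b|0^b}=1$, it equals $I$ on the system register, contributing $-I$. For the first term, the key manipulation is to factor the sandwiched projector as $\Pi\otimes I=(\ket{0^b}\otimes I)(\bra{0^b}\otimes I)$ and insert it in the middle, which separates the expression into a product
\begin{equation*}
2\Big[(\bra{0^b}\otimes I)\,U^\dagger(\ket{0^b}\otimes I)\Big]\Big[(\bra{0^b}\otimes I)\,U\,(\ket{0^b}\otimes I)\Big]
=2H^\dagger H.
\end{equation*}
Since $H$ is a Hamiltonian, hence Hermitian, $H^\dagger=H$, so this is $2H^2$. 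Combining the two contributions yields exactly $2H^2-I$, and because every step is an identity the error is $0$, establishing the claimed $(1,b,0)$-block-encoding.

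The main obstacle is nothing more than bookkeeping: recognizing that $\Pi\otimes I$ factors cleanly into $(\ket{0^b}\otimes I)(\bra{0^b}\otimes I)$ so that the middle term separates into $H^\dagger H$, together with invoking Hermiticity. No inequality or approximation enters, because the hypotheses furnish an exact ($\varepsilon=0$) block-encoding; the normalization factor stays $1$ since $\|2H^2-I\|\le 1$ whenever $\|H\|\le1$. As a cross-check, $2x^2-1=T_2(x)$ is the second Chebyshev polynomial of the first kind, matching the general QET statement that the single pair $U^\dagger(R_0\otimes I)U$ realizes $T_2(H)$; one could alternatively cite that theorem directly, but the explicit computation above is cleaner for this special case.
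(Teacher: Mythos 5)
Your computation is correct and complete: unpacking the $(1,b,0)$-block-encoding as the exact identity $H=(\bra{0^b}\otimes I)\,U\,(\ket{0^b}\otimes I)$, splitting $R_0\otimes I=2(\Pi\otimes I)-I$ with $\Pi=\ket{0^b}\bra{0^b}$, collapsing the $-I$ piece via $U^\dagger U=I$, and factoring $\Pi\otimes I=(\ket{0^b}\otimes I)(\bra{0^b}\otimes I)$ to obtain $2H^\dagger H=2H^2$ by Hermiticity is exactly right, and $U^\dagger(R_0\otimes I)U$ is manifestly unitary, so the claimed $(1,b,0)$-block-encoding follows with zero error. The paper, however, proves nothing at this point: it simply asserts the lemma is a special case of quantum eigenvalue transformation with the Chebyshev polynomial $T_2(x)=2x^2-1$ and defers to \cite{childs2017quantum, von2021quantum}. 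Your route is therefore genuinely different in character --- an elementary, self-contained verification in place of an appeal to the general QET machinery. What the citation buys is context and extensibility: the alternating sequence of $U$, $U^\dagger$ and reflections realizes $T_k(H)$ for arbitrary degree $k$, so higher powers would come for free; what your computation buys is transparency --- it makes visible precisely where each hypothesis enters (unitarity for the $-I$ term, Hermiticity to turn $H^\dagger H$ into $H^2$; without it one would only get a block-encoding of $2H^\dagger H - I$) and why $\varepsilon=0$ exactly. Two small remarks: your closing observation that $\|2H^2-I\|\leq 1$ is a consistency check rather than a needed step, since the corner block of any unitary automatically has norm at most $1$; and for the same reason the hypothesis $\|H\|\leq 1$ is itself redundant, being forced by the assumption that $U$ is an exact $(1,b,0)$-block-encoding of $H$.
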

Observing that $(P_{2}^\dag \otimes I) V_{Z^2} (P_{2}\otimes I)$ is a controlled-block-encoding of $n^{-1} \sum_{i=0}^{n-1} Z_i$ for each $n$:
\begin{align}\label{eqapp1.20}
(\bra{n} \otimes \bra{0^b}\otimes I)(P_{2}^\dag \otimes I) V_{Z^2} (P_{2}\otimes I)(\ket{n} \otimes \ket{0^b}\otimes I)
=\frac{1}{n} \sum_{i=0}^{n-1} Z_i,
\end{align}
these terms can be squared using Lemma \ref{lemapp:chev-block} in parallel:
\begin{align}
(\bra{n} \otimes \bra{0^b}\otimes I)(P_{2}^\dag \otimes I) V_{Z^2} (P_{2}\otimes I)
(I\otimes R_0 \otimes I) (P_{2}^\dag \otimes I) V_{Z^2} (P_{2}\otimes I)(\ket{n} \otimes \ket{0^b}\otimes I)
=\frac{2}{n^2} \left(\sum_{i=0}^{n-1} Z_i\right)^2 - I.
\end{align}
Now, we take a linear combination of the above operator using the operator $P_{S3}$, 
\begin{equation}\label{eqapp1.19}
P_{S3}\ket{0^b} = \frac{1}{\sqrt{\alpha_{S3}}} \sum_{n=1}^{N-1} n\ket{n},
\end{equation}
where $\alpha_{S3} = \sum_{n=1}^{N-1}n^2$ ,which can be implemented with $O(\log{N})$ T gates (see Appendix~\ref{appsubsec:prepare}, \cite{sanders2019black}).
Utilizing $P_{S3}$ and Eq.~\eqref{eq:PLC}, we can take the linear combination of the operators $2n^{-2} \left(\sum_{i=0}^{n-1} Z_i\right)^2 - I$ with coefficients $n^2$.
The resulting operator that encodes $H_{Z^2}$ is as in Fig.~\ref{fig:H_Z_2_block}. 
\begin{figure}
\centerline{
\includegraphics[width=150mm, page=1]{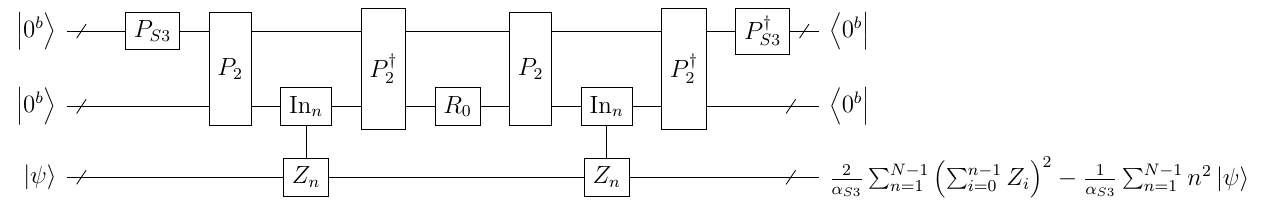}
}
\caption{The block-encoding of $H_{Z^2}$ term. }
\label{fig:H_Z_2_block}
\end{figure}

\subsection{Full circuit construction}
\label{sec:full-circuit}
Finally, we take a linear combination of unitaries that encode each term in Eq.~\eqref{eq:schwinger-mod} and construct the block-encoding.  
To this end, we introduce an operation $P_{\mathrm{split}}$ defined as
\begin{align}\label{eq:split}
\begin{split}
    P_{\mathrm{split}} \ket{000} = 
\frac{1}{\sqrt{\alpha_{S}}} &\left( \sqrt{\frac{w(N-1)}{2}}\ket{000} + \sqrt{\frac{w(N-1)}{2}} \ket{001} + \sqrt{\frac{mN}{2}} \ket{010} \right. \\
& +\sqrt{\frac{\alpha_{S1}J\theta_0}{2\pi}}  \ket{100} + \sqrt{\alpha_{S2}}\sqrt{\frac{J\theta_0}{2\pi}+\frac{ J}{2}} \ket{101} 
\left. + \sqrt{\frac{\alpha_{S3} J}{8}} \ket{110}\right). 
\end{split}
\end{align}
$\ket{000}$, $\ket{001}$, $\ket{010}$, $\ket{100}$, $\ket{101}$ and $\ket{110}$ branches are for $H_{XX}$, $H_{YY}$, $H_{Z}$, $H_{Z \text{even}}$, $H_{Z \text{odd}}$, and $H_{Z^2}$, respectively.
Defining $P_1$ as an operation that first applies $P_{\mathrm{split}}$ and then applies controlled versions of $\mathrm{UNI}_{(N-1)}$, $\mathrm{UNI}_{(N)}$, $P_{S1}$, $P_{S2}$, and $P_{S3}$ (see Appendix~\ref{appsubsec:prepare} for detail), we can implement the block-encoding of the Schwinger model Hamiltonian by the circuit in Fig.~\ref{fig:schwinger-block}.
\begin{figure}
\centerline{
\includegraphics[width=150mm, page=1]{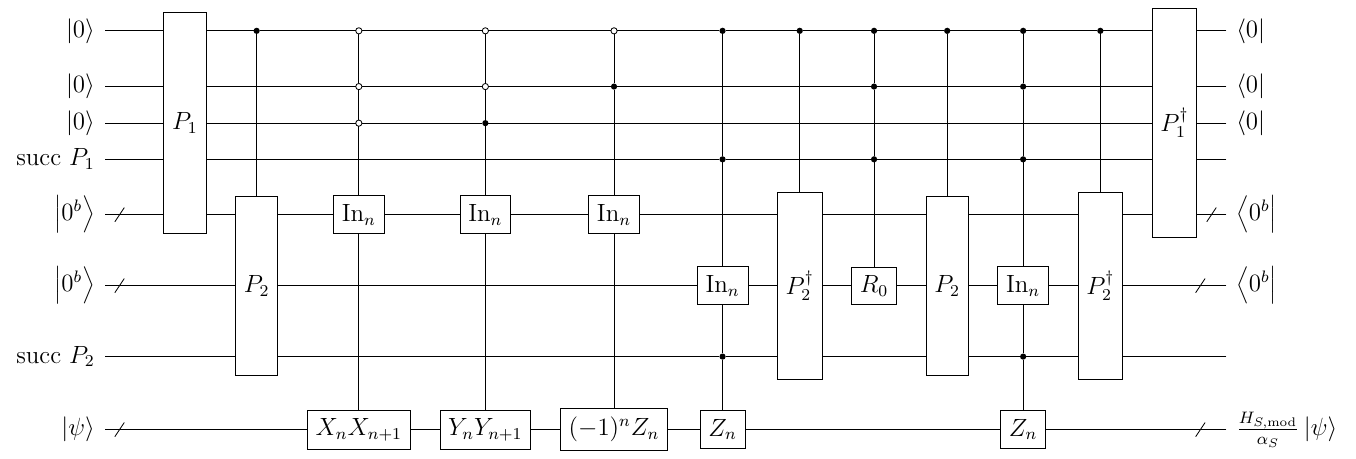}
}
\caption{Circuit for implementing the block-encoding of the Schwinger model Hamiltonian. $\mathrm{succ}\ P_1$ and $\mathrm{succ}\ P_2$ are single-qubit ancilla registers that flag success of the procedures, which enable us to perform an identity operator in the case of failure. }
\label{fig:schwinger-block}
\end{figure}

\section{End-to-end resource estimates}\label{sec:end-to-end-resource-estimates}

\subsection{Formulae for estimates}
Next, we give the result for the Hamiltonian simulation of the Schwinger model. 
The result can be obtained straightforwardly by using the quantum eigenvalue transformation introduced in \cite{gilyen2019quantum} to implement $e^{iH_St}$ from the block-encoding $U_{H_S}$.

\begin{result}[Simulating the Schwinger model Hamiltonian]\label{cor:simulation-schwinger-main}
Let $t\in \mathbb{R}$, $\varepsilon\in(0,1)$, $N\geq 8 \in \mathbb{N}$ be a system size, $b=\lceil \log_2{N} \rceil$, and $r$ be the smallest even number $\geq 2\alpha_S |t| + 3\ln{(9/\varepsilon)}$.
Let $\alpha_S$, $C$ and $H_{S,\mathrm{mod}}$ be as defined in Result~\ref{result:schwinger-block-encoding:main}.
Suppose that $U_{H_S}$ is a unitary which is an $(\alpha_S,2b+3,\varepsilon/(3|t|))$-block-encoding of $H_{S,\mathrm{mod}}$ and let $C_{H_S}$ be the number of T gates required to implement $U_{H_S}$ given in Result \ref{result:schwinger-block-encoding:main}. 
We can implement a unitary $U_{\mathrm{time}}$ which is an $(1,2b+5,\varepsilon)$-block-encoding of $e^{-i H_{S,\mathrm{mod}} t}$ using 
\begin{equation}
 r \left( 3C_{H_S} + 48 \left\lceil\log_2{\left(\frac{18(2r+1)}{\varepsilon} \right)}\right\rceil + 24 b + 12 C + 24 \right) \\
+ 3C_{H_S} + 24 \left\lceil\log_2{\left(\frac{18(2r+1)}{\varepsilon} \right)}\right\rceil + 40 b + 6C + 120
 \end{equation}
T gates and
\begin{equation}
 6b + \max{(2\lceil \log_2 N' \rceil + \lceil \log_2 (N'-1) \rceil, 3\lceil \log_2 N'' \rceil)} + 6
\end{equation}
ancilla qubits.
\end{result}

Finally, we give the cost for estimating the vacuum persistence amplitude. 
In this work, we use Chebyshev amplitude estimation~\cite{rall2023amplitude}. 
Let $\ket{\psi}$ be a state, $\Pi$ be a projector, and $\omega=\|\Pi \ket{\psi}\|$. 
Chebyshev amplitude estimation samples from a random variable $\hat{\omega}$ satisfying $\mathrm{Pr}[|\hat{\omega}-\omega|\geq \varepsilon]\leq \delta$ by calling reflection operators $2\Pi - I$ and $2\ket{\psi}\bra{\psi}-I$.
The number of queries to $2\Pi - I$ and $2\ket{\psi}\bra{\psi}-I$ in the algorithm, which we denote $Q_{\Pi}$ and $Q_{\psi}$ respectively, depends on $\omega$, $\varepsilon$, and $\delta$.
In Appendix~\ref{appsubsec:hamsim-ampest}, we show that $Q_{\Pi}+Q_{\psi}\approx 2000$ for $\delta=0.05$ and $\varepsilon=0.005$ by numerical simulation. 
Also, it has been shown in Ref.~\cite{rall2023amplitude} that $Q_{\Pi}\approx Q_{\psi}$ as the algorithm calls two reflection operators alternatively in quantum circuits.

To estimate the vacuum persistence amplitude, we set $\Pi = \ket{\text{vac}'}\bra{\text{vac}'}$, where $\ket{\text{vac}'} = \ket{0^{\nu-N}}\ket{\text{vac}}$ and $\nu = N+2\lceil \log_2{N} \rceil+5$, and implement $2\ket{\psi}\bra{\psi}-I$ via $U_{\text{time}}(2\ket{\text{vac}'}\bra{\text{vac}'}-I)U_{\text{time}}^\dagger$, where $U_{\text{time}}$ is the unitary which block-encodes $e^{-iH_{S,\text{mod}}t}$.
We therefore need two calls to $U_{\text{time}}$ and a single call of $2\ket{\text{vac}'}\bra{\text{vac}'}-I$ to implement $2\ket{\psi}\bra{\psi}-I$.
Noting that $2\ket{\text{vac}'}\bra{\text{vac}'}-I$ is equivalent to $2\ket{0^\nu}\bra{0^\nu}-I$ up to X gates and that $2\ket{0^\nu}\bra{0^\nu}-I$ can be implemented with $4\nu - 8$ T gates, we get the following result:

\begin{result}[Estimating the vacuum persistence amplitude, based on empirical assumption]\label{cor:vacuum-amplitude-estimate-main}
Let $t\in \mathbb{R}$, $N\geq 8 \in \mathbb{N}$ be a system size, $b=\lceil \log_2{N} \rceil$, $\ket{\mathrm{vac}} =\ket{1 0 1 0 \cdots }$, and
$C_{\mathrm{time}}$ be the number of T gates required to implement a $(1,2b+5,0.005)$-block-encoding of $e^{-i H_{S,\mathrm{mod}} t}$ via Result \ref{cor:simulation-schwinger-main}. 
Then we can sample from a random variable $\hat{G}$ satisfying $\mathrm{Pr}[|\hat{G}- |\braket{\mathrm{vac}|e^{-iH_S t}|\mathrm{vac}}| |\geq 0.01]\leq 0.05 $ using about $2000(C_{\mathrm{time}} + 4N + 8b + 12)$ T gates on average and $\max (N+2b+3, 6b + \max (2\lceil \log_2 N' \rceil + \lceil \log_2 (N'-1) \rceil, 3\lceil \log_2 N'' \rceil) + 6)$ ancilla qubits. 
\end{result}

A detailed analysis of this result is given in Appendix \ref{appsubsec:hamsim-ampest}.
Also, note that even considering constant factors, Chebyshev amplitude estimation is better than incoherent measurements (see Appendix \mbox{\ref{appsubsec:hamsim-ampest}} for detail).

\subsection{End-to-end T counts with realistic parameters}

\begin{table*}
\centering
\caption{T count and estimated runtime for calculating the vacuum persistence amplitude. The runtime assumes the T gate speed of 1~MHz.}
\label{tab:runtimes-end-to-end}
\begin{tabular}{lllll}
\hline \hline
system size & & & evolution time & \\
\hline
 & & $wt=1$ & $wt=10$ & $wt=100$ \\
\hline
$N=16$ & T count        & $9.11\times 10^{9}$ & $7.77\times 10^{10}$ & $8.25\times 10^{11}$ \\
     & runtime [days]   & $0.106$              & $0.899$               & $9.55$               \\
$N=32$ & T count        & $3.00\times 10^{10}$ & $3.25\times 10^{11}$ & $3.83\times 10^{12}$ \\
    & runtime [days]  & $0.347$              & $3.76$               & $44.3$                \\
$N=64$ & T count        & $1.88\times 10^{11}$ & $2.19\times 10^{12}$ & $2.54\times 10^{13}$ \\
     & runtime [days]  & $2.18$               & $25.3$               & $294$                \\
$N=128$ & T count       & $1.60\times 10^{12}$ & $1.72\times 10^{13}$ & $1.97\times 10^{14}$ \\
    & runtime [days]  & $18.5$               & $200$                & $2276$               \\
$N=256$ & T count       & $1.41\times 10^{13}$ & $1.61\times 10^{14}$ & $1.82\times 10^{15}$ \\
    & runtime [days]  & $163$                & $1864$               & $20990$              \\
\hline \hline
\end{tabular}
\end{table*}

Here, we estimate the runtime of the algorithm based on its T count using the results obtained in the previous section. 
We set the parameters as $a=0.2$, $m=0.1$, $\theta =\pi$, $g=1$, $w=1/(2a) = 2.5$, and $J=g^2 a/2=0.1$. 
Then, using Results~\ref{result:schwinger-block-encoding:main}--\ref{cor:vacuum-amplitude-estimate-main}, the runtimes to calculate the vacuum persistence amplitude with an additive error of $\varepsilon = 0.01$ with T gate consumption rate of 1~MHz,
which is also employed in the recent resource estimate ~\cite{Yoshioka2022}, are given as in Table \ref{tab:runtimes-end-to-end} and Fig.~\ref{fig:T-size-dep},~\ref{fig:T-time-dep}.

According to the results, in order to solve the problem within 100 days, the maximum size of the system that can be simulated is approximately $N=200$, $100$, and $40$ for $wt=1$, $10$, and $100$ respectively. 
Furthermore, in the case of calculating the vacuum persistence amplitude, we can say that the rate of 1~kHz is insufficient for solving the problem in a realistic timeframe, and the rate of 1~MHz is a minimum requirement, which is feasible considering the state-of-the-art magic state distillation protocol~\cite{litinski2019magic} and gate time.

\begin{figure}
\centering
\subfloat[Dependence on system size \( N \).]{
    \includegraphics[keepaspectratio, width=.4\linewidth]{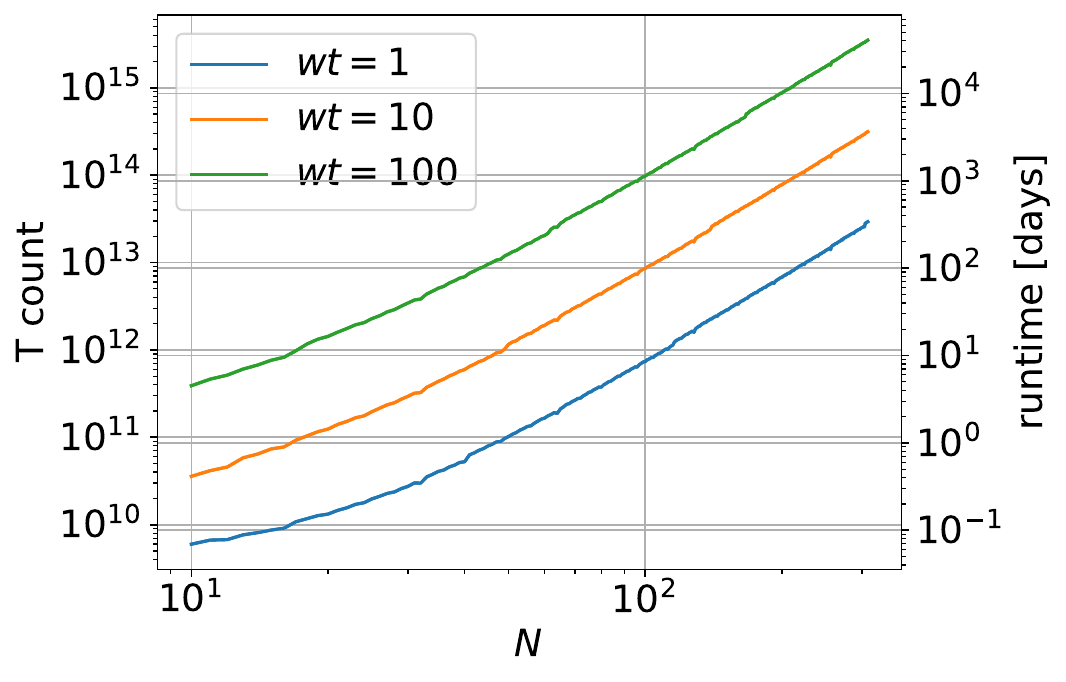}
    \label{fig:T-size-dep}
}
\hfil 
\subfloat[Dependence on evolution time \( t \).]{
    \includegraphics[keepaspectratio, width=.51\linewidth]{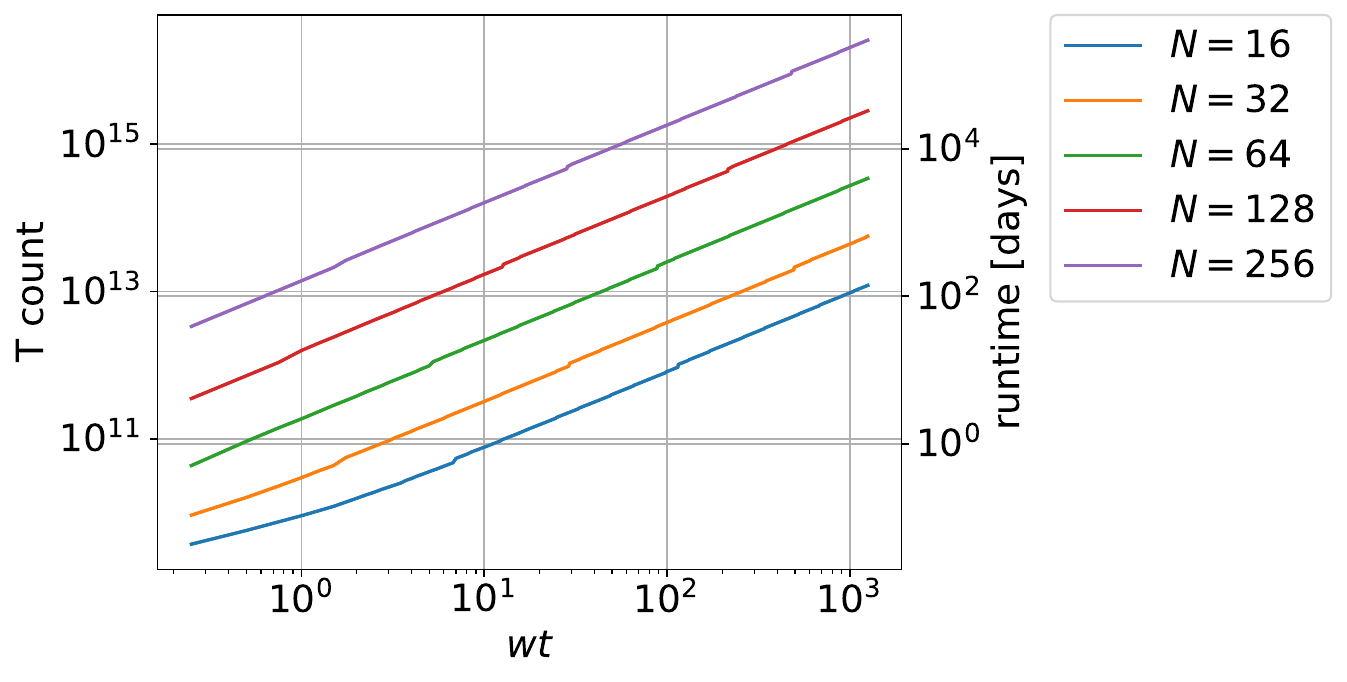}
    \label{fig:T-time-dep}
}
\caption{T counts for calculating the vacuum persistence amplitude, where the precision \( \varepsilon=0.01 \) is fixed.}
\label{fig:T-counts}
\end{figure}

\subsection{Comparison with the previous work}
Let us compare our results with the previous one by Shaw \textit{et al} \cite{Shaw2020quantumalgorithms}. 
Here, we consider the number of T gates needed to implement $e^{iH_S t}$ as Ref.~\cite{Shaw2020quantumalgorithms} only provides a formula for calculating the T count for this task. 
We set the parameters as $a=0.2$, $m=0.1$, $\theta =\pi$, $g=1$, $w=1/(2a) = 2.5$, and $J=g^2 a/2=0.1$. 
Note that we rescale the evolution time $T$ appeared in Ref.~\cite{Shaw2020quantumalgorithms} as $T\rightarrow 2^{-1}ag^2t$ since the Hamiltonian used in Ref.~\cite{Shaw2020quantumalgorithms} is non-dimensionalized by rescaling with a factor $2/(ag^2)$. 
Then, Fig.~\ref{fig:comp-T-size-dep}--\ref{fig:comp-T-precision-dep} denote T counts of our approach and the previous one.

Roughly speaking, T count of our approach is \( \tilde{\mathcal{O}}(N^4 t + N^3 t \log^3{(1/\varepsilon)}) \) and T count of the previous one is \( \tilde{\mathcal{O}}(N^{2.5}t^{1.5}/\varepsilon^{0.5}) \). 
Therefore, our approach has an advantage in the cost of simulating the Hamiltonian in the long-time and high-precision domain as shown in Fig.~\ref{fig:comp-T-time-dep} and \ref{fig:comp-T-precision-dep}. 
Note that, however, our approach needs a larger number of T gates than the previous one in the large-system domain as in Fig.~\ref{fig:comp-T-size-dep}. 
This disadvantage comes from the difference of the Hamiltonian formulation
rather than the choice between QSVT or Trotterization.

Finally, we would like to emphasize again that our approach uses fewer qubits.
Specifically, our approach requires $O(N)$ qubits while the previous work requires $O(N \log{N})$ qubits. 
This difference arises because the Hamiltonian formulation used in the previous work includes electric fields, requiring $O(\log{N})$ qubits to represent each electric field.
In contrast, the Hamiltonian formulation we chose removes the degrees of freedom of the electric fields, thereby eliminating the need for qubits to represent electric fields.

\begin{figure}
\centering
\subfloat[Dependence on system size \( N \). (\( wt=10 \), \( \varepsilon=0.01 \))]{%
    \includegraphics[keepaspectratio, width=.3\linewidth]{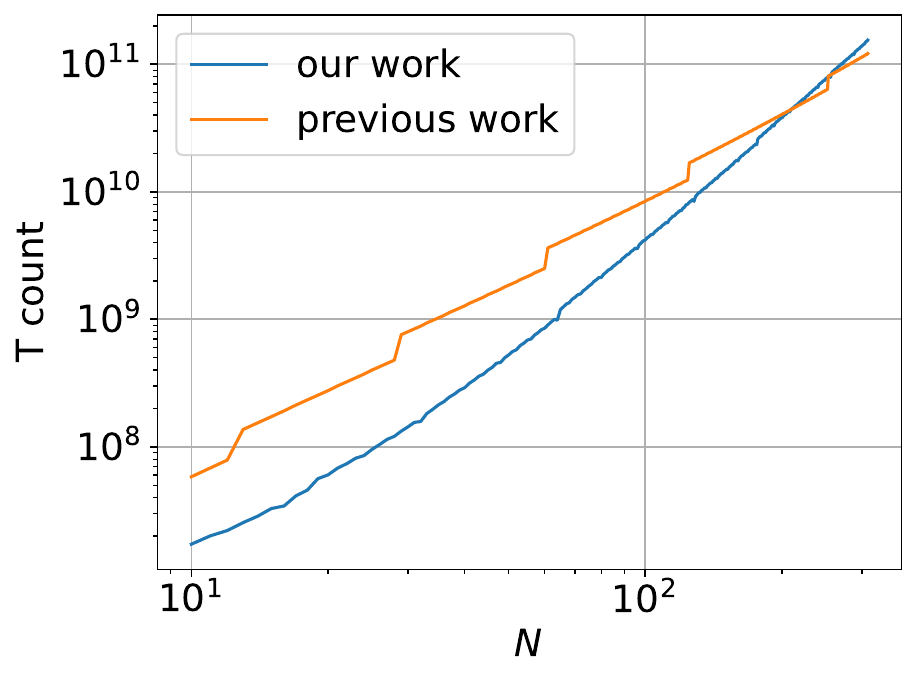}
    \label{fig:comp-T-size-dep}
}
\hfil 
\subfloat[Dependence on evolution time \( t \). (\( N=64 \), \( \varepsilon=0.01 \))]{%
    \includegraphics[keepaspectratio, width=.3\linewidth]{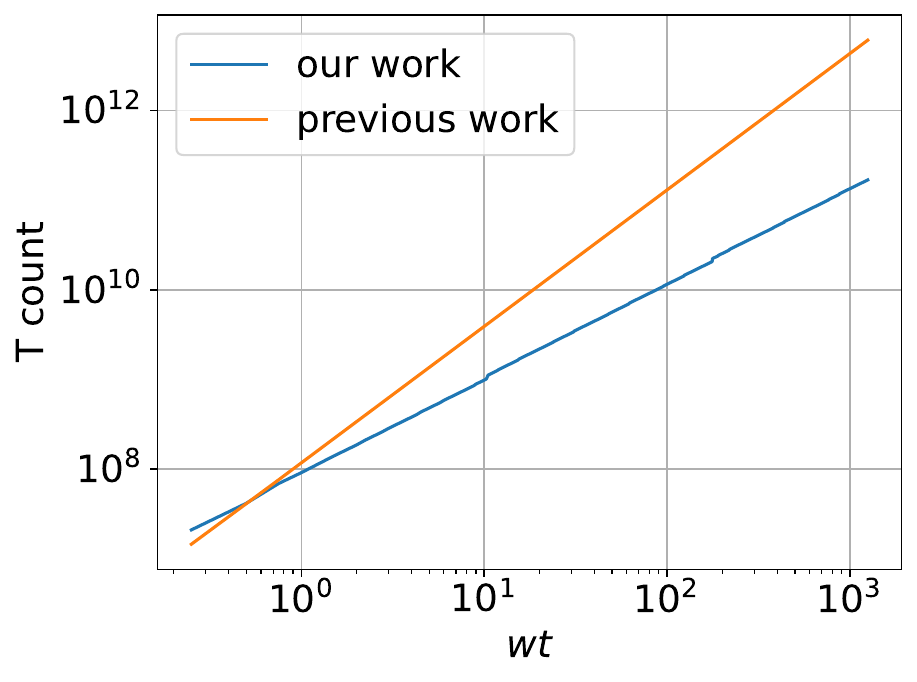}
    \label{fig:comp-T-time-dep}
}
\hfil 
\subfloat[Dependence on precision \( \varepsilon \). (\( N=64 \), \( wt=10 \))]{%
    \includegraphics[keepaspectratio, width=.3\linewidth]{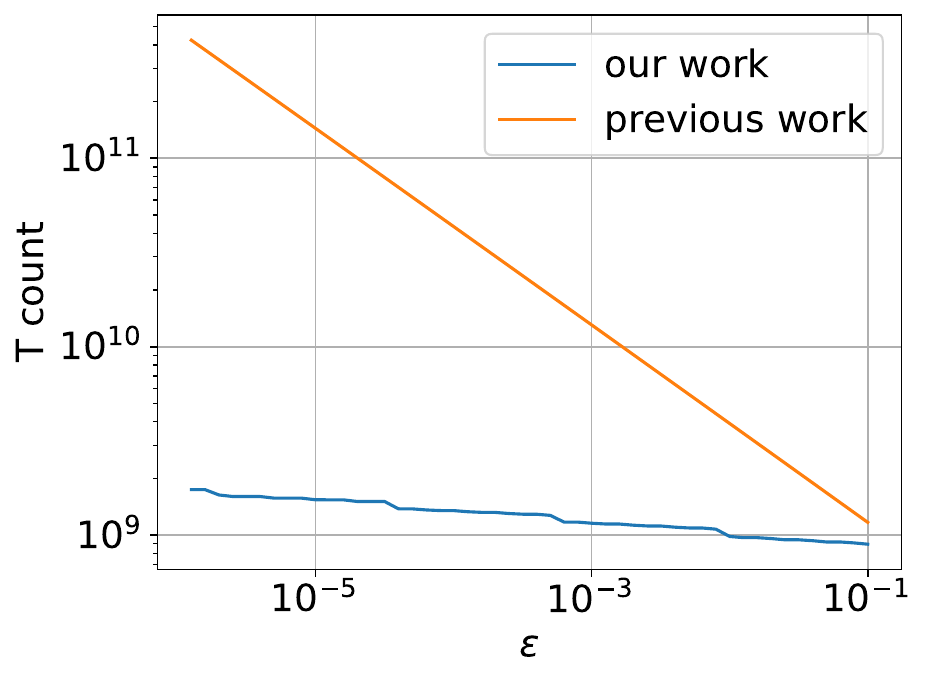}
    \label{fig:comp-T-precision-dep}
}

\caption{Comparison of the number of T gates for simulating the Schwinger model Hamiltonian. }
\label{fig:comp-T-overall}
\end{figure}

\subsection{Rough estimation of the number of physical qubits}
Using Result~\ref{cor:vacuum-amplitude-estimate-main}, we also roughly estimate the number of physical qubits assuming the surface code. 
Let us explain the method of estimating the number of physical qubits used in this work. 
Let $p_{\mathrm{phys}}$ and $p_{\mathrm{L}}$ be the physical error rate and the logical error rate achieved by error correction, respectively. 
For the standard scheme using the surface code with a code distance $d$, the logical error rate is approximately given by $p_L = 0.1(100 p_{\mathrm{phys}})^{(d+1)/2}$~\cite{litinski2019magic}. 
Let $M$ be the number of gates counted in terms of Clifford+T gate set. We roughly estimate $M$ to be 100 times the number of T gates.
This is justified by the fact that most of the subroutines used in the algorithm are primarily composed of Toffoli gates. 
For example, the number of Clifford gates used in inequality tests is at most four times the number of T-gates \mbox{\cite{berry2019qubitization}}.
Some subroutines are composed only of Clifford gates, but the number of such subroutines is almost the same as the number of subroutines mainly composed of Toffoli gates (see circuit descriptions of PREPARE operators in Appendix \mbox{\ref{appsubsec:prepare}}).
Furthermore, even if this estimate of the number of Clifford gates cannot be applied, the logical error rate decays exponentially with the code distance, so the influence on the number of physical qubits is minimal.
Next, we choose the code distance $d$ such that $p_L<1/M$ depending on $p_{\mathrm{phys}}$. 
Given the code distance $d$, the number of physical qubits required for one logical qubit is approximately $2d^2$. 
Let $N_{\mathrm{tot}}$ be the number of logical qubits required to run the algorithm (i.e. the number of qubits which we estimated in Result~\ref{cor:vacuum-amplitude-estimate-main}). 
In order to implement fault-tolerant quantum computation, we need ancillary logical qubits in addition to $N_{\mathrm{tot}}$ logical qubits for, e.g., routing and distillation. 
We assume having $4N_{\mathrm{tot}}$ in total is sufficient for those operations~\cite{litinski2019game}.
Thus, the total number of physical qubits is estimated as $4N_{\mathrm{tot}}\times 2d^2$. 

Fig.~\ref{fig:physical-qubit} shows the number of physical qubits assuming the physical error rate of $p_{\mathrm{phys}}=10^{-3}$ and $10^{-4}$ based on the assumptions described above and Result~\ref{cor:vacuum-amplitude-estimate-main}.  
Here, we set the parameters as $a=0.2$, $m=0.1$, $\theta =\pi$, $g=1$, $w=1/(2a) = 2.5$, and $J=g^2 a/2=0.1$, $wt=10$, $\varepsilon = 0.01$. 
For $N=64$ and $p_{\mathrm{phys}}=10^{-3}$ (or $10^{-4}$), we found that the number of physical qubits needed for calculating the vacuum persistence amplitude is about $9\times 10^{5}$ (or $2\times 10^{5}$), which is comparable to the estimate for Hubbard model estimated in Ref.~\cite{babbush2018encoding}.

\begin{figure}
\centerline{
\includegraphics[width=70mm, page=1]{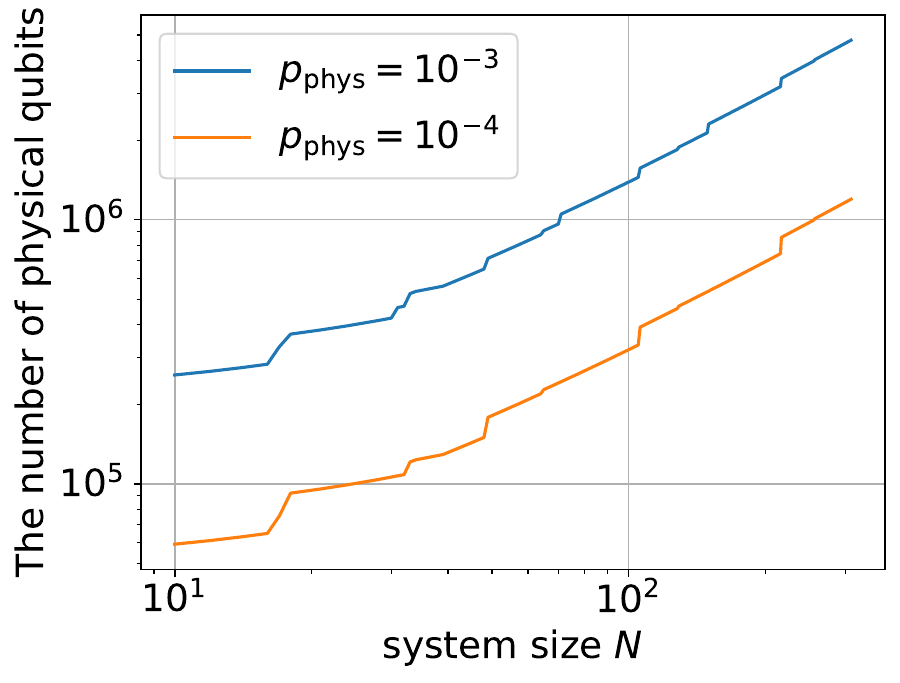}
}
\caption{The number of physical qubits required for estimating the vacuum persistence amplitude. }
\label{fig:physical-qubit}
\end{figure}

\section{Conclusion}\label{sec:conclusion}
The contributions of this article can be summarized in two main aspects. 
First, we have proposed an efficient implementation of block encoding of the Schwinger model Hamiltonian. 
We decompose the Hamiltonian into six terms based on coefficient differences and efficiently construct block-encodings of each term. By careful decomposition of the Hamiltonian, we can construct the block encoding without QROM. 
Furthermore, we believe that our techniques can be applied to more general Hamiltonians with all-to-all interactions and a constant number of parameters.
Second, we have provided an estimate for the resource required to solve the end-to-end problem, i.e., calculating the vacuum persistence amplitude. 
In the context of applications in quantum chemistry and condensed matter physics which persue ground state energies, it is often assumed that we have a state which has a large overlap with the ground state. 
On the other hand, when we estimate the vacuum persistence amplitude, there is no such assumption since N\'eel states can be generated using only X gates. 
This result provides an insight into whether fault-tolerant quantum computers can solve problems that are computationally challenging for classical computers within a realistic timeframe. 

We have also clarified several future challenges. Firstly, further improvements in quantum algorithms are required. 
As our results indicate, estimating the vacuum persistence amplitude with certain parameters is difficult. 
Therefore, we need to make efforts to improve a constant-factor performance of quantum algorithms and to reduce the normalization factor of the block-encoding. Additionally, the consumption rate of T gates may significantly limit the application of quantum computers. In the case of this article, the rate of 1 kHz is insufficient for solving the problem in a realistic timeframe, and the rate of 1 MHz is minimum requirement. 

Finally, let us discuss about the future directions of our results. One avenue is characterizing Hamiltonians that can reduce the number of SELECT operators in their block-encodings. In this work, we observed that three terms, $H_{Z\text{even}}$, $H_{Z\text{odd}}$, and $H_{Z^2}$, in the Schwinger model Hamiltonian have similar structures, which allowed us to minimize the use of SELECT operators. Conversely, it is worth establishing conditions for Hamiltonians that reduce the use of SELECT operators and provide block-encodings which require fewer T gates. Another avenue is seeking for models which have simple forms like the Schwinger model and are computationally difficult for classical computers. From the perspective of verifying the performance of quantum computer, estimating the cost of solving end-to-end problems in such models will provide valuable information for not only theoretical but also experimental research.

\begin{acknowledgments}
This work is supported by MEXT Q-LEAP Grant No. JPMXS0118067394 and JPMXS0120319794, JST COINEXT Grant No. JPMJPF2014, JST PRESTO Grant No. JPMJPR2019 and JPMJPR2113, JSPS KAKENHI Grant No. 23H03819 and 23H05439, JSPS Grant-in-Aid for Transformative Research Areas (A) JP21H05190, JST Grant Number JPMJPF2221 and JPMXP1020230411.
\end{acknowledgments}


\bibliographystyle{quantum}
\bibliography{ref,Nf2_Schwinger,QFT}


\appendix

\clearpage

\section{Subroutines}\label{appsec:subroutines}
In this section, we provide details about subroutines which we used to construct the block-encoding in Section. \ref{sec3}.
They are summarized as Table~\ref{tab:select},~\ref{tab:prepare}. 

\subsection{SELECT operator}\label{appsubsec:select}
We use multi-qubit controlled SELECT operators in our construction (see Fig.~\ref{fig:schwinger-block}).
These operations can be constructed with the cost listed in Table \ref{tab:select} using the following facts \cite{babbush2018encoding}.

\begin{itemize}
    \item Let $m\in\mathbb{N}$ and let $U$ be an unitary. Then a $m$-qubit controlled-$U$, that is, $\ket{0^m}\bra{0^m}\otimes U + (I-\ket{0^m}\bra{0^m})\otimes I$, can be implemented using $4m-4$ T gates, one single-qubit controlled-$U$, and $m-1$ ancilla qubits.
    \item Let $N\in\mathbb{N}, n\in \{0,\dots, N-1 \}$. Then a single-qubit controlled version of $\text{SELECT}$ operator, i.e.,  
\begin{equation}\label{eq:selectex}
\ket{1}\bra{1} \otimes \left( \sum_{n=0}^{N-1} \ket{n}\bra{n} \otimes U_n + \left( I - \sum_{n=0}^{N-1} \ket{n}\bra{n}\right) \otimes I\right) + \ket{0}\bra{0} \otimes I \otimes I, 
\end{equation}
can be implemented using $4N-4$ T gates, $\lceil\log_2{N}\rceil$ ancilla qubits, and one single-qubit controlled-$U_n$ for each $n$.  
\end{itemize}

 \begin{table}
    \centering
    \footnotesize
\caption{Subroutines for SELECT. The ancilla qubits are defined as those that can be reused in subsequent operations. $N$ represents system size of the Schwinger model Hamiltonian.}
\label{tab:select}
\renewcommand{\arraystretch}{2}
    \begin{tabular}{p{5cm}p{4cm}p{4cm}}\hline\hline
         Subroutine &  T Cost & Note\\\hline
 3-qubit-controlled $V_{XX}$ ($\text{c3-}V_{XX}$)& $4N$&Defined in \eqref{eqapp1.10}.\\
         3-qubit-controlled $V_{YY}$ ($\text{c3-}V_{YY}$)& $4N$&Defined in \eqref{eqapp1.11}.\\
         2-qubit-controlled $V_{Z}$ ($\text{c2-}V_{Z}$)& $4N$&Defined in \eqref{eqapp1.12}.\\
         3-qubit-controlled $V_{Z2}$ ($\text{c3-}V_{Z2}$)& $4N+4$&Defined in \eqref{eqapp1.16}.\\
         4-qubit-controlled $V_{Z2}$ ($\text{c4-}V_{Z2}$)& $4N+8$&Defined in \eqref{eqapp1.16}.\\ \hline\hline
    \end{tabular}
\end{table}

\subsection{PREPARE operators}\label{appsubsec:prepare}

\begin{table}
    \centering
    \scriptsize
\caption{Subroutines used in PREPARE. Unless otherwise specified, $\varepsilon$ denotes an error measured in operator norm. The ancilla qubits are defined as those that can be reused in subsequent operations.}
\label{tab:prepare}
    \begin{tabular}{p{1.5cm}p{0.7cm}p{3cm}p{2cm}p{1.45cm}p{3.8cm}}\hline\hline
         Subroutine &  Param &  T Cost & \# of ancilla & \# of unreusable ancilla & Note\\\hline
         $R_Y$ &  $\varepsilon$&  $4\log_2(1/\varepsilon)+C$ &0&0&Worst case result of \cite{ross2016optimal}. $C = 5+4 \log_2(1+\sqrt{2}) \approx 10.09$.\\
 $R_0$& $s$& $4s-8$& $s-2$&0&$s$-bit reflection operator, $R_0=2\ket{0^s}\bra{0^s}-I$.\\
 Inequality test ($\texttt{ineq}$)& $s$& $4s$& $s-1$&0&Calculates $a\leq b$ for $s$-bit integers in registers $\ket{a}$ and $\ket{b}$ \cite{gidney2018halving, berry2019qubitization}. Inversion can be done without T gate.\\
 controlled-$s$-bit-SWAP (\texttt{cswap})& $s$& $7s$& 0&0&\\
 Subtraction ($\texttt{sub}$)& $s$& $4s-4$& $s-1$&0&Calculates $a - b$ for $s$-bit integers $a$ and $b$ \cite{gidney2018halving}.\\
         $\mathrm{UNI}$& $N$, $\varepsilon$&  $2R_Y(\varepsilon/2)+2\texttt{ineq}(l)+R_0(l+1)$&$\texttt{ineq}(l)+l+1$ &1& $\eta$ and $L$ are defined as integers such that $N=2^\eta L$. $l=\lceil \log_2 L\rceil$. Additional $b$ qubits are for storing $\ket{L-1}$.\\
 controlled-UNI (cUNI)& $N$, $\varepsilon$& $\mathrm{UNI}+4\eta + 4l + 12$ & $\texttt{ineq}(l)+l+1$&1&\\
 $P_{S1}$& $N, \varepsilon$& $\mathrm{UNI}(N',\varepsilon/2)+\mathrm{UNI}(N'-1,\varepsilon/2)+\texttt{sub}(b')+\texttt{ineq}(b'+1)+\texttt{cswap}(b')+4$& \raggedright $\text{max} (\texttt{ineq}(b'+1)+\nu, \text{UNI}(N') + \text{UNI}(N'-1))$&$2b'+\tilde{b}'+2+\text{UNI}(N') + \text{UNI}(N'-1)$&$N'=\lceil N/2\rceil, b'=\lceil\log_2 N'\rceil, \tilde{b}'=\lceil\log_2 (N'-1)\rceil$, $\nu = b'+1 - \tilde{b}'$.\\
 controlled-$P_{S1}$ ($\text{c}P_{S1}$)& $N, \varepsilon$& $\mathrm{cUNI}(N',\varepsilon/2)+\mathrm{cUNI}(N'-1,\varepsilon/2)+2\texttt{sub}(b')+\texttt{ineq}(b'+1)+\text{controlled-}\texttt{cswap}(b')$& \raggedright $\text{max} (\texttt{ineq}(b'+1)+\nu+1, \text{cUNI}(N') + \text{cUNI}(N'-1)+1)$& $2b'+\tilde{b}'+2+\text{UNI}(N') + \text{UNI}(N'-1)$ &controlled-\texttt{cswap}$(b')$ needs $7b'+4$ T gates and one ancilla qubit.\\
 $P_{S2}$& $N, \varepsilon$& $2\mathrm{UNI}(N'',\varepsilon/2)+\texttt{sub}(b'')+\texttt{ineq}(b''+1)+\texttt{cswap}(b'')+4$& \raggedright $\text{max} (\texttt{ineq}(b''+1) +1, 2\text{UNI}(N''))$& $3b''+2+2\text{UNI}(N'')$ &$N''=\lfloor N/2\rfloor, b''=\lceil\log_2 N''\rceil$. \\
 controlled-$P_{S2}$ ($\text{c}P_{S2}$)& $N,\varepsilon$ & $2\mathrm{cUNI}(N'',\varepsilon/2)+2\texttt{sub}(b'')+\texttt{ineq}(b''+1)+\text{controlled-}\texttt{cswap}(b'')$& \raggedright $\text{max} (\texttt{ineq}(b''+1) +2, 2\text{cUNI}(N'')+1)$& $3b''+2+2\text{UNI}(N'')$ &$b=\lceil \log_2 N\rceil$.\\
 $P_{S3}'$& $N,\varepsilon$& $3\text{UNI}(N,\varepsilon/3)+\texttt{ineq}(b)+4$& $\text{max}(\text{UNI}(N,2\varepsilon),$\par$\texttt{ineq}(b))+1$& $b+1+\text{UNI}(N)$ & $\varepsilon$ represents the operator norm error of $R_Y$ appearing through the circuit.\\
 controlled-$P_{S3}'$ ($\text{c}P_{S3}'$)& $N,\varepsilon$& $\text{cUNI}(N,\varepsilon/3)+2\text{UNI}(N,\varepsilon/3)+\texttt{ineq}(b)+8$& $\text{max}(\text{cUNI}(N,2\varepsilon),$\par$\texttt{ineq}(b)+1)+1$& $b+1+\text{UNI}(N)$ &$b=\lceil \log_2 N\rceil$.\\
 $P_{S3}$& $N,\varepsilon$& $3P_{S3}'(N,6\varepsilon/20)+2R_Y(\varepsilon/20)+R_0(2b+3)+2R_0(b+3)$& $R_0(2b+3)$& $P_{S3}'(N)+2$ & $b=\lceil \log_2 N\rceil$.\\
 controlled-$P_{S3}$ ($\text{c}P_{S3}$)& $N,\varepsilon$& $\text{c}P_{S3}'(N,6\varepsilon/20)+2P_{S3}'(N,6\varepsilon/20)+2R_Y(\varepsilon/20)+R_0(2b+4)+2R_0(b+4)$& $R_0(2b+4)$& $P_{S3}'(N)+2$ &$b=\lceil \log_2 N\rceil$.\\
 $P_1$& $N, \varepsilon$& $\text{c}P_{S1}(N,\frac{4\varepsilon}{39})+\text{c}P_{S2}(N,\frac{4\varepsilon}{39})+\text{c}P_{S3}(N,\frac{20\varepsilon}{39})+\text{cUNI}(N-1,\frac{2\varepsilon}{39})+\text{cUNI}(N,\frac{2\varepsilon}{39})+7R_Y(\frac{\varepsilon}{39})+44$& $\text{max}(\text{c}P_{S1}(N,\varepsilon)+2,$ $\text{c}P_{S2}(N,\varepsilon)+2, \text{c}P_{S3}(N,\varepsilon)+1)$ & $\text{max}(\text{c}P_{S1}(N),$ $\text{c}P_{S2}(N),$ $ \text{c}P_{S3}(N))$ & The constant factor 44 is needed to make each of the operations multi-qubit controlled (see Fig.~\ref{fig:p1}). \\
 $\text{una}$ & $s$ & $4s-4$& $2s$&  & Produce a register which has zeros matching the leading zeros in $s$-bit integer $a$ and ones after that.\\
 $P_2$& $N,\varepsilon,\Delta$& $\frac{d-1}{2}(2R_Y(\varepsilon/d)+\texttt{ineq}(b)+8b+R_0(b+1))+R_Y(\varepsilon/d)+2\texttt{ineq}(b)+4b+\texttt{sub}(b)+\text{una}(b)$& $\texttt{ineq}(b)+\text{una}(b)+\texttt{sub}(b)+2b+2$& 0 & $d$ is the smallest odd integer larger than $\sqrt{2} \ln (2/\sqrt{\Delta})$. $b=\lceil \log_2 N\rceil$. The implemented $\tilde{P}_2$ satisfies $\|P_2'-\tilde{P}_2\|\leq \varepsilon$ where $P_2'$ is defined in Eq.~\eqref{eq:lemp2} and is an approximation to the desired $P_2$ defined in Eq.~\eqref{eqapp1.14}. $\Delta$ is an error parameter that measures the difference between $P_2'$ and the desired $P_2$ as $P_2'\ket{n}\otimes \ket{0}=\frac{\xi_n}{\sqrt{n}}\ket{n}\sum_{i=0}^{n-1}\ket{i} + \cdots$ and $1-|\xi_n|^2 > \Delta$.\\
 controlled-$P_2$ (c$P_2$)& $N,\varepsilon,\Delta$& $\frac{d-1}{2}(4R_Y(\varepsilon/(2d))+\texttt{ineq}(b)+8b+R_0(b+1))+2R_Y(\varepsilon/(2d))+2\texttt{ineq}(b)+8b+\texttt{sub}(b)+\text{una}(b)+4$& $\texttt{ineq}(b)+\text{una}(b)+\texttt{sub}(b)+2b+2$& 0 &\\\hline\hline
    \end{tabular}
\end{table}

In this subsection, we describe the constructions of $\mathrm{UNI}_{(N)}$, $P_{S1}$, $P_{S2}$, $P_{S3}$, and $P_2$. Results are summarized as Table~\ref{tab:prepare}.
We will use the following facts implicitly in this section.
\begin{itemize}
    \item Let $U_i$ and $V_i$ be unitaries for $i\in\{1,\dots M\}$, then
\begin{equation}\label{eq:unitary-subadditivity}
\left\| U_M \dots U_2 U_1 - V_M \dots V_2 V_1 \right\| \leq \sum_{i=1}^{M} \left\| U_i - V_i \right\|,
\end{equation}
as shown in \cite{nielsen2010quantum}. 
\item Let $R_Z(\theta)$ be a single-qubit Z rotation operator by angle $\theta$: 
\begin{equation}\label{eq:z-rotation}
R_Z(\theta)=e^{-i \theta Z / 2}=
\begin{bmatrix}
e^{-i \theta / 2} & 0 \\
0 & e^{i \theta / 2}
\end{bmatrix}.
\end{equation}
Then for any $\varepsilon>0$, we can find an operator $U$ expressible in the single-qubit Clifford+T gate set, such that 
\begin{equation}\label{eq:z-rotation2}
\left\| R_Z(\theta) - U \right\| \leq \varepsilon. 
\end{equation}
In the worst case, $U$ is implemented with $(4\log_2(1/\varepsilon)+C)$ T gates, where 
\begin{equation}\label{eq:C}
    C = 5+4 \log_2(1+\sqrt{2}) \approx 10.09
\end{equation} as shown in \cite{selinger2015efficient,ross2016optimal}. Note that X rotations and Y rotations can also be implemented with the same complexity since $X=HZH$ and $Y=SHZHS^\dag$. 
Moreover, a single-qubit controlled Y rotation can be implemented with twice the number of T gates since $\text{controlled-}R_Y(2\theta)=R_Y(\theta)\mathrm{CNOT} R_Y(-\theta)\mathrm{CNOT}$. 
\item The $s$-qubit reflection gate $R_0=2\ket{0^s}\bra{0^s}-I$ can be performed with $4s-8$ T gates and $s-2$ ancilla qubits.
\item Inequality test which calculates $a\leq b$ for $s$-bit integers in registers $\ket{a}$ and $\ket{b}$ can be performed with $4s$ T gates and $s-1$ ancilla qubits using the out-of-place adder \cite{gidney2018halving, berry2019qubitization}. Inversion can be done without T gate.
\item Subtraction which calculates $a - b$ for $s$-bit integers $a$ and $b$ can be performed with $4s-4$ T gates and $s-1$ ancilla qubits \cite{gidney2018halving}.
\end{itemize}

\subsubsection{\texorpdfstring{$\mathrm{UNI}_{(N)}$}{} operation}
We construct $\mathrm{UNI}_{(N)}$ following Ref. \cite{sanders2020compilation}. 
The steps closely follow that of \cite{sanders2020compilation}, but for completeness, we describe the procedure in detail.

\begin{lem}[Preparation of a uniform superposition state]\label{lem:uniform}
Let $N\in\mathbb{N}$ and let $b=\lceil \log_2 N \rceil$. Let $\eta$ and $L$ be integers satisfying $N=2^\eta \cdot L$, and $l=\lceil \log_2 L \rceil$. Define $\mathrm{UNI}_{(N)}$ as an operator satisfying,
\begin{equation}\label{eq:uniform}
\left(I\otimes \bra{1}R_Y(\theta) \right) \mathrm{UNI}_{(N)} \ket{0^b}\ket{0} = \frac{1}{\sqrt{N}}\sum_{n=0}^{N-1}\ket{n}, 
\end{equation}
where $\theta = 2\arcsin(2^{-1}\sqrt{2^l/L})$. 
For any $\varepsilon>0$, we can implement a unitary $\tilde{U}$ such that $\left\| \mathrm{UNI}_{(N)} - \tilde{U} \right\| \leq \varepsilon$
with $8\lceil\log_2{(2/\varepsilon)}\rceil + 12l + 2C- 4$ T gates and $2l$ ancilla qubits, where $C$ is the constant defined in Eq. \eqref{eq:C}. 
Moreover, a controlled version of $\tilde{U}$ can be implemented with $8\lceil\log_2{(2/\varepsilon)}\rceil +4\eta + 16l +  2C + 8$ T gates and $2l$ ancilla qubits. 
\end{lem}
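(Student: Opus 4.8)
The plan is to reproduce the uniform-superposition construction of Ref.~\cite{sanders2020compilation}, exploiting the factorization $N=2^{\eta}L$ to split the register into a trivial block and a block handled by a single \emph{exact} round of amplitude amplification. First I would put the $\eta$ low-order qubits into a uniform superposition with bare Hadamard gates; these are Clifford and cost no T~gates, and they account exactly for the factor $2^{\eta}$. The remaining task is to prepare $\tfrac{1}{\sqrt{L}}\sum_{x=0}^{L-1}\ket{x}$ on the $l=\lceil\log_2 L\rceil$ high-order qubits, after which the tensor product of the two blocks is $\tfrac{1}{\sqrt N}\sum_{n=0}^{N-1}\ket{n}$.

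For the $L$-block I would apply $H^{\otimes l}$ together with a flag rotation $R_Y(\theta)$ and use an inequality test against $L$ to mark the ``good'' subspace in which the register value satisfies $x<L$ and the flag is set. With the stated angle $\theta=2\arcsin(\tfrac12\sqrt{2^{l}/L})$ the initial amplitude on this subspace is exactly $\sqrt{L/2^{l}}\,\sin(\theta/2)=\tfrac12$. Writing this as $\sin\alpha$ with $\alpha=\pi/6$, a single amplitude-amplification step — the oracle that phases the good subspace (built from the inequality test and a controlled phase) followed by the reflection $R_0=2\ket{0}\bra{0}-I$ conjugated by the state preparation — maps $\sin\alpha\mapsto\sin 3\alpha=\sin(\pi/2)=1$. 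Because these reflections act only within the two-dimensional plane spanned by the (uniform) good and bad vectors, the relative amplitudes inside the good subspace stay uniform, so the output is \emph{exactly} the target state; the defining identity $(I\otimes\bra{1}R_Y(\theta))\,\mathrm{UNI}_{(N)}\ket{0^{b}}\ket{0}$ simply packages the final flag rotation and the resulting probability-one projection. The angle is well defined since $2^{l-1}<L\le 2^{l}$ forces $2^{l}/L\in[1,2)$ and hence $\tfrac12\sqrt{2^{l}/L}\le 1$. Establishing this exactness — that one round suffices and preserves uniformity — is the step I expect to require the most care.

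The remaining work is error analysis and counting. The only ingredients that must be \emph{approximately} synthesized, and hence contribute to the error, are the two $R_Y$ rotations; the Hadamards, the inequality test, and the reflection $R_0$ are realized exactly in the Clifford$+$T gate set (contributing T~gates but no approximation error). I would therefore synthesize each $R_Y$ to operator-norm error $\varepsilon/2$, and the subadditivity bound in Eq.~\eqref{eq:unitary-subadditivity} then gives $\|\mathrm{UNI}_{(N)}-\tilde U\|\le\varepsilon$. Summing the tabulated costs of the ingredients — $2R_Y(\varepsilon/2)=2\bigl(4\log_2(2/\varepsilon)+C\bigr)$, two inequality tests $2\,\texttt{ineq}(l)=8l$ (the inverse comparison being T-free), and the reflection $R_0(l+1)=4(l+1)-8$ — reproduces $8\lceil\log_2(2/\varepsilon)\rceil+12l+2C-4$ T~gates and $2l$ ancillas.

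Finally, for the controlled version I would add a control only to the operations that create amplitude: the $\eta$ Hadamards on the low block acquire a control, and the two rotations and the reflection each pick up an additional control. Tracking the standard overheads for making these operations controlled contributes a further $4\eta+4l+12$ T~gates, so that adding them to the uncontrolled count yields $8\lceil\log_2(2/\varepsilon)\rceil+4\eta+16l+2C+8$ T~gates with the same $2l$ ancillas. Apart from the exactness argument above, every step here is routine bookkeeping of standard Clifford$+$T subroutine costs.
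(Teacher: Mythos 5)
Your uncontrolled construction is essentially identical to the paper's proof: both follow Ref.~\cite{sanders2020compilation}, splitting off the $\eta$ power-of-two qubits with free Hadamards, boosting the good-subspace amplitude to exactly $1/2$ with the $R_Y(\theta)$ flag rotation, and performing one exact round of amplitude amplification ($\sin\alpha \mapsto \sin 3\alpha$ with $\alpha = \pi/6$), with the only approximation error coming from synthesizing the two $R_Y$ gates to precision $\varepsilon/2$ each; your cost bookkeeping $2R_Y(\varepsilon/2) + 2\,\texttt{ineq}(l) + R_0(l+1)$ and the $2l$ ancilla count match the paper's exactly.

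However, your controlled-version paragraph contains a genuine slip. You propose adding controls to the $\eta$ Hadamards, the \emph{two rotations}, and the reflection, and assert this costs $4\eta + 4l + 12$ extra T gates. This breakdown cannot produce that total: a controlled $R_Y$ is implemented as $\text{controlled-}R_Y(2\theta)=R_Y(\theta)\,\mathrm{CNOT}\, R_Y(-\theta)\,\mathrm{CNOT}$ and hence costs \emph{twice} the rotation-synthesis T count, not a constant $4$ — controlling both rotations would change the leading term from $8\lceil\log_2(2/\varepsilon)\rceil$ to roughly $16\lceil\log_2(2/\varepsilon)\rceil$, contradicting the lemma's stated controlled cost (which keeps the rotation term unchanged). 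Moreover, nothing in your list accounts for the $4l$ term. The paper's construction instead leaves the rotations uncontrolled and adds controls to the \emph{initial Hadamard layers on both blocks} (the $\eta$ qubits and the $l$ qubits, at $4$ T gates per controlled Hadamard, giving $4\eta + 4l$), plus the controlled-Z in the oracle, the reflection $R_0$, and the flag CNOT (each picking up one control at $+4$ T gates, giving the $+12$). You should either adopt that assignment of controls or verify that your alternative choice still acts as the identity on all registers when the control is off — with your choice it does not reproduce the claimed formula.
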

\begin{proof}
First, we apply Hadamard gates on the $\eta$ qubits since
\begin{equation}\label{eq:lemuni:separation}
\frac{1}{\sqrt{N}}\sum_{n=0}^{N-1}\ket{n} = \frac{1}{\sqrt{L}}\sum_{i=0}^{L-1}\ket{i} \frac{1}{\sqrt{2^\eta}}\sum_{j=0}^{2^\eta-1}\ket{j}.
\end{equation}
We thus consider how to generate the state $\frac{1}{\sqrt{L}}\sum_{i=0}^{L-1}\ket{i}$ below.
\begin{enumerate}
\item Perform Hadamard gates on $l$ qubits. 
\item Add one ancilla qubit and apply $R_Y$, to make an amplitude for success 1/2, i.e.,
\begin{equation}\label{eq:lemuni:rotation}
    \left(\sqrt{\frac{L}{2^l}}\frac{1}{\sqrt{L}}\sum_{i=0}^{L-1}\ket{i}+ \sqrt{\frac{2^l-L}{2^l}}\frac{1}{\sqrt{2^l-L}}\sum_{i=L}^{2^l-1}\ket{i}\right)\ket{0}
    \rightarrow \frac{1}{2}\frac{1}{\sqrt{L}}\sum_{i=0}^{L-1}\ket{i}\ket{1} + \ket{\perp},
\end{equation}
where $\ket{\perp}$ is an unnormalized state which is orthogonal to $\frac{1}{2}\frac{1}{\sqrt{L}}\sum_{i=0}^{L-1}\ket{i}\ket{1}$.
\item Reflect about the target state $\frac{1}{\sqrt{L}}\sum_{i=0}^{L-1}\ket{i}\ket{1}$ using the inequality tests and a controlled Z gate. 
\item Reflect about the state in Eq. \eqref{eq:lemuni:rotation} using Hadamard gates, the inverse of the Y rotation, and $R_0$. 
\item Perform the inequality tests again to flag success of the state preparation. 
\end{enumerate}
We show the implementation of $U$ in Fig. \ref{fig:uniform}.
Considering the error of Y rotations, we can implement $\tilde{U}$ performing the Y rotations with $\varepsilon/2$-precision (see Eq.~\eqref{eq:unitary-subadditivity}). 
As a result, the overall cost of this procedure is $2(4\lceil\log_2{(2/\varepsilon)}\rceil+C)+12l-4$ T gates and $\max{(2l-1,l-1)}+1=2l$ ancilla qubits. 
Furthermore, we can implement a controlled version of $\tilde{U}$ replacing Hadamard gates in the first step, the controlled Z gate, $R_0$, and the CNOT gate with their controlled versions, respectively. 
Thus we need the additional cost of $4\eta + 4l + 12$ T gates since a controlled Hadamard gate requires $4$ T gates \cite{lee2021even}. 
\end{proof}
\begin{figure}[htbp]
\centerline{
\includegraphics[width=140mm, page=1]{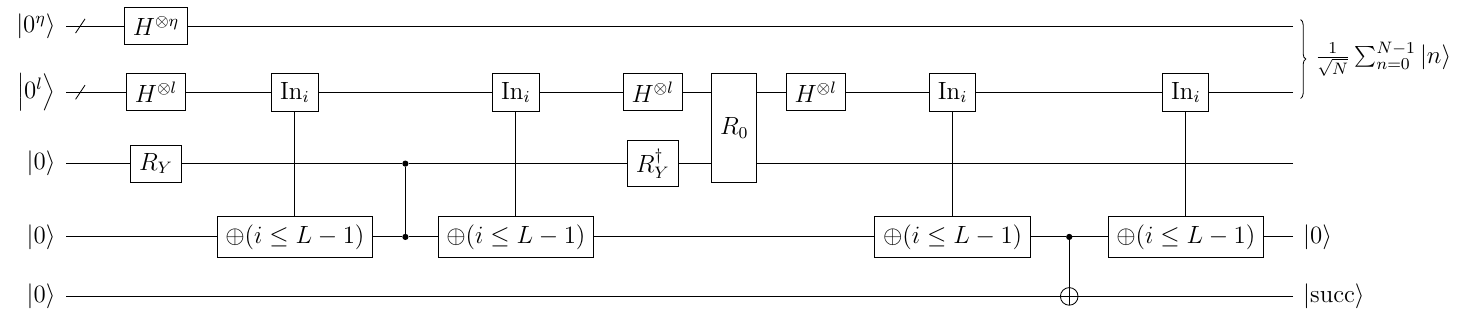}
}
\caption{Circuit $\text{UNI}_{(N)}$ for preparing a uniform superposition state.}
\label{fig:uniform}
\end{figure}

\subsubsection{\texorpdfstring{$P_{S1}$}{} and \texorpdfstring{$P_{S2}$}{} operation}
We construct $P_{S1}$ as,
\begin{equation}\label{eq:ps1:variable}
P_{S1}\ket{0^b}\ket{0^{\beta_{S_1}}} = \frac{1}{\sqrt{\alpha_{S1}}} \sum_{\substack{n=1 \\ n: \mathrm{even}}}^{N-1} \sqrt{n}\ket{n} \ket{\mathrm{temp}_n} , 
\end{equation}
instead of the one defined naively in \eqref{eqapp1.13}.
$\ket{\mathrm{temp}_n}$ is a $\beta_{S_1}=2\lceil\log_2 N' \rceil + \lceil\log_2 (N'-1) \rceil +4$-qubit unspecified junk register entangled with $\ket{n}$, where the number of unreusable ancilla qubits used in $\mathrm{UNI}$ operator is contained in $\beta_{S_1}$. 
The state preparation procedures introduced in Lemma. \ref{lem:ps1} and Lemma. \ref{lem:ps2} is based on the method proposed in Ref.~\cite{babbush2018encoding}.
Using the notations in \cite{babbush2018encoding}, for preparing $\sum_{n=1}^{N'-1} \sqrt{n}\ket{n} \ket{\mathrm{temp}_n}$, we set $\texttt{alt}_n = N'-1-n$  and $\texttt{keep}_n = 2n$ to achieve our goal. In the following, we describe a more detailed procedure.
\begin{lem}[$P_{S1}$ operator]\label{lem:ps1}
Let $N\in\mathbb{N}$, $b=\lceil \log_2 N \rceil$, $C$ be a constant defined in \eqref{eq:C}, $N'=\lceil N/2 \rceil$, $\eta', \mu, L',K'$ be integers such that $N' = 2^{\eta'}L', \ N'-1 = 2^{\mu'}K'$. 
For any $\varepsilon>0$, we can implement a unitary $\tilde{P}_{S1}$ that approximates $P_{S1}$ in \eqref{eq:ps1:variable} as $\left\| P_{S1} - \tilde{P}_{S1} \right\| \leq \varepsilon$, with $16\lceil\log_2{(4/\varepsilon)}\rceil +15\lceil\log_2 N' \rceil + 12 \lceil\log_2 L' \rceil + 12 \lceil\log_2 K' \rceil +  4C-4$ T gates and $\max (2\lceil\log_2 N' \rceil - \lceil\log_2 (N'-1) \rceil +1, 2\lceil\log_2 L'\rceil + 2\lceil\log_2 K' \rceil) $ ancilla qubits. 
Moreover, a controlled version of $\tilde{P}_{S1}$ can be implemented with $16\lceil\log_2{(4/\varepsilon)}\rceil +19\lceil\log_2 N' \rceil + 4\eta' + 4\mu' + 16 \lceil\log_2 L' \rceil + 16 \lceil\log_2 K' \rceil +  4C + 16$ T gates and $\max (2\lceil\log_2 N' \rceil - \lceil\log_2 (N'-1) \rceil +2, 2\lceil\log_2 L'\rceil + 2\lceil\log_2 K' \rceil+1)$ ancilla qubits. 
\end{lem}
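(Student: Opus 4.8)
The plan is to realize $P_{S1}$ as an instance of the coherent alias-sampling state preparation of Ref.~\cite{babbush2018encoding}, specialized so that the usual QROM load of the tables is replaced by cheap reversible arithmetic. First I would reduce the problem. Since $N$ is even, the even indices $1\le n\le N-1$ are exactly $n=2j$ with $1\le j\le N'-1$, so up to the global constant $\sqrt{2}$ (absorbed into $\alpha_{S1}$) and the index relabeling $\ket{j}\mapsto\ket{2j}$ (a free left bit-shift appending a $\ket{0}$), it suffices to prepare $\frac{1}{\sqrt{\sum_j j}}\sum_{j=1}^{N'-1}\sqrt{j}\,\ket{j}$ on $b'=\lceil\log_2 N'\rceil$ qubits, i.e.\ a state whose probabilities are linear in the index.

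For the core construction I would prepare the index register uniform over $\{0,\dots,N'-1\}$ with $\mathrm{UNI}_{(N')}$ and a comparison register uniform over $2(N'-1)$ values; the latter is a $\mathrm{UNI}$ whose $2$-adic odd part coincides with that of $N'-1$, which is why the cost carries $\lceil\log_2 K'\rceil$ with $N'-1=2^{\mu'}K'$. Taking $\texttt{alt}_n=N'-1-n$ (one $\texttt{sub}(b')$), $\texttt{keep}_n=2n$, the inequality test $[\,m<2n\,]$ on $b'+1$ bits, and the conditional swap $n\leftrightarrow\texttt{alt}_n$ via $\texttt{cswap}(b')$ completes the routine; the comparison register and the flag become the junk register $\ket{\mathrm{temp}_n}$. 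The crux is verifying that the amplitude profile is \emph{exactly} linear. Because $\texttt{alt}$ is an involution, the outcome probability collapses to
\begin{equation}
p_j=\frac{1}{N'\cdot 2(N'-1)}\Big(\texttt{keep}_j+2(N'-1)-\texttt{keep}_{N'-1-j}\Big)=\frac{4j}{2N'(N'-1)}\propto j,
\end{equation}
where the constant term cancels precisely thanks to $\texttt{keep}_j=2j$ and $\texttt{keep}_{N'-1-j}=2(N'-1-j)$, with $p_0=0$ and $\sum_j p_j=1$. This is what fixes the comparison register size at $2(N'-1)$ rather than a power of two.

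For the accounting I would sum the T-costs from Table~\ref{tab:prepare}: the two $\mathrm{UNI}$ calls contribute the $16\lceil\log_2(4/\varepsilon)\rceil$, $12\lceil\log_2 L'\rceil+12\lceil\log_2 K'\rceil$ and $4C$ pieces, while $\texttt{sub}$, $\texttt{ineq}(b'+1)$ and $\texttt{cswap}(b')$ supply the remaining terms linear in $\lceil\log_2 N'\rceil$ and the additive constant. The only non-Clifford-exact elements are the $R_Y$ rotations inside the two $\mathrm{UNI}$s, so running each at precision $\varepsilon/2$ and invoking the subadditivity bound Eq.~\eqref{eq:unitary-subadditivity} gives $\|P_{S1}-\tilde P_{S1}\|\le\varepsilon$. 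The controlled version follows by replacing each $\mathrm{UNI}$ with $\mathrm{cUNI}$, the swap with a controlled swap, and controlling the internal reflection, which adds exactly the stated $4\eta'+4\mu'$, extra $\lceil\log_2 N'\rceil$ and constant overhead.

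The main obstacle is twofold. The first part is the correctness check above, namely that QROM-free \emph{affine} tables reproduce the linear probability profile while keeping every step reversible; this is what forces the non-power-of-two comparison register and drives the $\lceil\log_2 K'\rceil$ bookkeeping. The second part is the careful separation of reusable working ancillas from the unreusable junk of $\ket{\mathrm{temp}_n}$: the peak usages during the two $\mathrm{UNI}$ stage ($2\lceil\log_2 L'\rceil+2\lceil\log_2 K'\rceil$) and during the inequality/swap stage ($2\lceil\log_2 N'\rceil-\lceil\log_2(N'-1)\rceil+1$) do not coincide, which is precisely why the ancilla count appears as a maximum of the two. The remaining gate summation is routine.
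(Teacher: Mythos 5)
Your construction is essentially the paper's own proof: the same reduction of the even-indexed target to $\sqrt{2/\alpha_{S1}}\sum_{n=1}^{N'-1}\sqrt{n}\ket{n}$ via the trailing $\ket{0}$ bit, the same QROM-free alias-sampling circuit built from two $\mathrm{UNI}$ calls, one $\texttt{sub}(b')$ computing $N'-1-i$, a free doubling, $\texttt{ineq}(b'+1)$, and $\texttt{cswap}(b')$, the same $\varepsilon/2$ budget on the two $\mathrm{UNI}$s' rotations via Eq.~\eqref{eq:unitary-subadditivity}, and the same identification of the two non-coinciding ancilla peaks. The one place you deviate is the comparison register: the paper prepares $\mathrm{UNI}_{(N'-1)}$ over $\{0,\dots,N'-2\}$ and lets the test $j<2i$ saturate, verifying the exactly linear profile by the two-case computation in Eqs.~\eqref{eq:lemps1:step4-norm}--\eqref{eq:lemps1:step4-norm-2}, whereas you take the comparison space of size $2(N'-1)$ so that $\texttt{keep}_n=2n$ never saturates and the linearity follows from the standard alias identity with the involution $n\mapsto N'-1-n$; the two are mathematically equivalent and your cancellation argument is a cleaner way to see why the profile is exactly linear. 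However, your variant does not reproduce the stated constants exactly: since $2(N'-1)=2^{\mu'+1}K'$, the controlled uniform-superposition cost picks up $4(\mu'+1)$ rather than $4\mu'$ (four extra T gates in the controlled count), and the comparison register occupies $\lceil\log_2(N'-1)\rceil+1$ qubits, enlarging the junk register by one relative to $\beta_{S_1}$ in Eq.~\eqref{eq:ps1:variable}; to match the lemma's formulas verbatim one should use the paper's $(N'-1)$-sized register with the saturation casework.
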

\begin{proof}
First, we rewrite the target state as follows:
\begin{equation}\label{eq:lemps1:rewrite}
\frac{1}{\sqrt{\alpha_{S1}}} \sum_{\substack{n=1 \\ n: \mathrm{even}}}^{N-1} \sqrt{n}\ket{n}
= \frac{1}{\sqrt{\alpha_{S1}}} \sum_{n=1}^{N'-1} \sqrt{2n}\ket{n} \ket{0} 
= \sqrt{\frac{2}{\alpha_{S1}}} \sum_{n=1}^{N'-1} \sqrt{n}\ket{n} \ket{0}.
\end{equation}
Thus we only have to consider how to prepare the state $\sqrt{\frac{2}{\alpha_{S1}}} \sum_{n=1}^{N'-1} \sqrt{n}\ket{n}$ and add one ancilla qubit $\ket{0}$. 
Using the above fact, $P_{S1}$ can be realized by the circuit in Fig. \ref{fig:ps1}. 
We describe each step of the circuit:
\begin{enumerate}
\item Prepare uniform superposition states on the two registers: 
\begin{equation}\label{eq:lemps1:step1}
\frac{1}{\sqrt{N'}}\sum_{i=0}^{N'-1}\ket{i} \frac{1}{\sqrt{N'-1}} \sum_{j=0}^{N'-2}\ket{j} \ket{0} \ket{0} \ket{0}.
\end{equation}
We flag simultaneous success of the procedure using a one Toffoli gate. 
\item Calculate $N'-1-i$. The state after this step is:
\begin{equation}\label{eq:lemps1:step2}
\frac{1}{\sqrt{N'}}\sum_{i=0}^{N'-1}\ket{i} \frac{1}{\sqrt{N'-1}} \sum_{j=0}^{N'-2}\ket{j} \ket{N'-1-i} \ket{0} \ket{0}.
\end{equation}
\item Calculate $2i$. It only requires CNOT gates. After this operation, the state becomes:
\begin{equation}\label{eq:lemps1:step3}
\frac{1}{\sqrt{N'}}\sum_{i=0}^{N'-1}\ket{i} \frac{1}{\sqrt{N'-1}} \sum_{j=0}^{N'-2}\ket{j} \ket{N'-1-i} \ket{2i} \ket{0}.
\end{equation}
\item Perform an inequality test:
\begin{equation}
\frac{1}{\sqrt{N'(N'-1)}} \sum_{i=0}^{N'-1} \left( \ket{i}\sum_{j=0}^{\min{(2i-1,N'-2)}}\ket{j} \ket{N'-1-i} \ket{2i} \ket{0}
+\ket{i}\sum_{j=\min{(N'-1,2i)}}^{N'-2}\ket{j} \ket{N'-1-i} \ket{2i} \ket{1}  \right).
\end{equation}
 
\item Perform controlled SWAP gates with $7\lceil\log_2 N' \rceil$ T gates \cite{amy2013meet}:
\begin{equation}\label{eq:lemps1:step4}
\frac{1}{\sqrt{N'(N'-1)}} \sum_{i=0}^{N'-1} \left( \ket{i}\sum_{j=0}^{\min{(2i-1,N'-2)}}\ket{j} \ket{N'-1-i} \ket{2i} \ket{0}
+\ket{N'-1-i}\sum_{j=\min{(N'-1,2i)}}^{N'-2}\ket{j} \ket{i} \ket{2i} \ket{1}  \right).
\end{equation}
\end{enumerate}
The state in Eq. \eqref{eq:lemps1:step4} is equivalent to $\sqrt{\frac{2}{\alpha_{S1}}} \sum_{n=1}^{N'-1} \sqrt{n}\ket{n}\ket{\text{temp}_n}$.
This can be seen as follows.
Projecting the state onto the subspace spanned by $\ket{n}\otimes I\otimes I\otimes I \otimes I$, we obtain
\begin{equation}\label{eq:lemps1:step4-projected}
\begin{split}
\frac{1}{\sqrt{N'(N'-1)}}& \left( \ket{n}\sum_{j=0}^{\min{(2n-1,N'-2)}}\ket{j} \ket{N'-1-n} \ket{2n} \ket{0} \right. \\
& \left. +\ket{n}\sum_{j=\min{(N'-1,2(N'-1-n))}}^{N'-2}\ket{j} \ket{N'-1-n} \ket{2(N'-1-n)} \ket{1}  \right).
\end{split}
\end{equation}
When $n<\frac{N'-1}{2}$, i.e.,$2n-1<N'-2$ and $2(N'-1-n)>N'-1$, the norm of this vector is
\begin{equation}\label{eq:lemps1:step4-norm}
\frac{1}{\sqrt{N'(N'-1)}}\sqrt{2n+0}= \sqrt{\frac{2}{N'(N'-1)}}\sqrt{n} \propto \sqrt{n}.
\end{equation}
When $n\geq\frac{N'-1}{2}$, i.e.,$2n-1\geq N'-2$ and $2(N'-1-n)\leq N'-1$, the norm of this vector is
\begin{equation}\label{eq:lemps1:step4-norm-2}
\frac{1}{\sqrt{N'(N'-1)}}\sqrt{N'-1+N'-2-(2(N'-1-n)-1)}= \sqrt{\frac{2}{N'(N'-1)}}\sqrt{n} \propto \sqrt{n}.
\end{equation}
Thus the circuit in Fig. \ref{fig:ps1} generates the state $\frac{1}{\sqrt{\alpha_{S1}}} \sum_{\substack{n=1 \\ n: \mathrm{even}}}^{N-1} \sqrt{n}\ket{n} \ket{\mathrm{temp}_n}$.

$\tilde{P}_{S1}$ uses $\text{UNI}_{(N')}$, $\text{UNI}_{(N'-1)}$, subtraction of $\lceil \log_2 N'\rceil$-bit integers, inequality test of $\lceil \log_2 N'\rceil$-bit integers, controlled-SWAP, and a Toffoli gate for flagging. Therefore, the overall cost of $\tilde{P}_{S1}$ is $16\lceil\log_2{(4/\varepsilon)}\rceil +15\lceil\log_2 N' \rceil + 12 \lceil\log_2 L' \rceil + 12 \lceil\log_2 K' \rceil +  4C-4$ T gates and $\max (2\lceil\log_2 N' \rceil - \lceil\log_2 (N'-1) \rceil + 1 , 2\lceil\log_2 L'\rceil + 2\lceil\log_2 K' \rceil) $ ancilla qubits. 
On the other hand, a controlled version of $P_{S1}$ can be implemented as Fig. \ref{fig:ps1-controlled}.
The cost of each step is as follows:
\begin{itemize}
\item Preparing uniform superpositions with $16\lceil\log_2{(4/\varepsilon)}\rceil + 4\eta' + 4\mu' + 16 \lceil\log_2 L' \rceil + 16 \lceil\log_2 K' \rceil +  4C + 16$ T gates and $2\lceil\log_2 L'\rceil + 2\lceil\log_2 K'\rceil +1$ ancilla qubits.
\item Subtraction with $4\lceil\log_2 N' \rceil-4$ T gates and $\lceil\log_2 N' \rceil - 1$ ancilla qubits. 
\item Inequality test and its inverse using out-of-place adder with $4\lceil\log_2 N' \rceil + 4$ T gates and $2\lceil\log_2 N' \rceil - \lceil\log_2 (N'-1) \rceil + 1$ ancilla qubits.
\item Controlled SWAP gates with $7\lceil\log_2 N' \rceil+4$ T gates and one ancilla qubit. 
\item Inverse of subtraction with $4\lceil\log_2 N' \rceil-4$ T gates and $\lceil\log_2 N' \rceil - 1$ ancilla qubits.
\end{itemize}
Thus overall cost of controlled $\tilde{P}_{S1}$ is $16\lceil\log_2{(4/\varepsilon)}\rceil +19\lceil\log_2 N' \rceil + 4\eta' + 4\mu' + 16 \lceil\log_2 L' \rceil + 16 \lceil\log_2 K' \rceil +  4C + 16$ T gates and $\max (2\lceil\log_2 N' \rceil - \lceil\log_2 (N'-1) \rceil + 2 , 2\lceil\log_2 L'\rceil + 2\lceil\log_2 K' \rceil+1)$ ancilla qubits.

\end{proof}
\begin{figure}
\centerline{
\includegraphics[width=125mm, page=1]{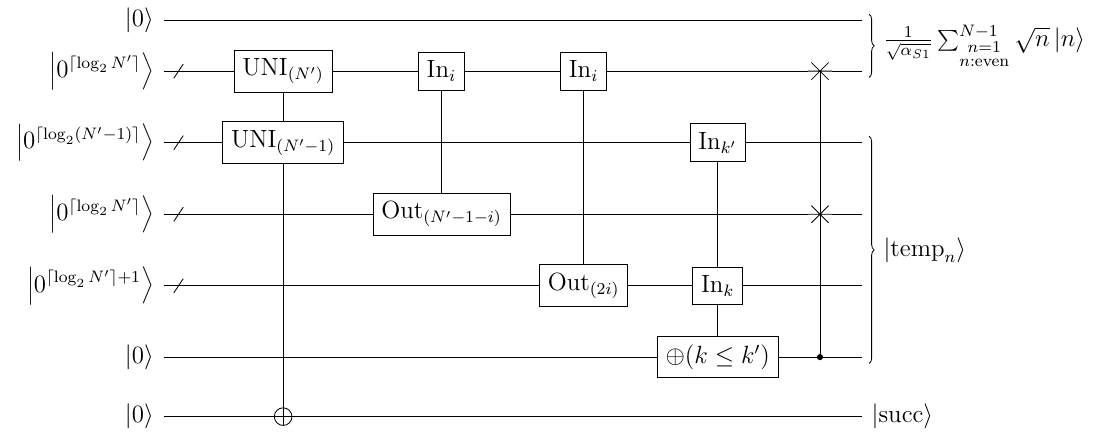}
}
\caption{Circuit for $P_{S1}$ operator.}
\label{fig:ps1}
\end{figure}

\begin{figure}
\centerline{
\includegraphics[width=150mm, page=1]{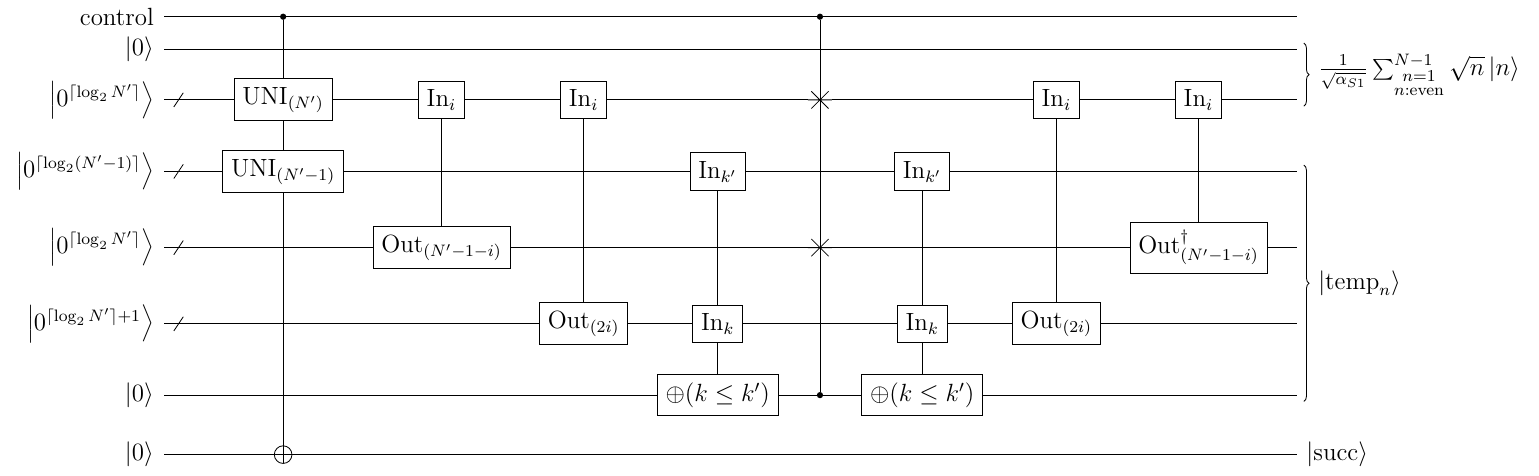}
}
\caption{Circuit for controlled $P_{S1}$ operator.}
\label{fig:ps1-controlled}
\end{figure}

$P_{S2}$ operator can be implemented in a similar way. 
\begin{lem}[$P_{S2}$ operator]\label{lem:ps2}
Let $N\in\mathbb{N}$, $b=\lceil \log_2 N \rceil$, $C$ be a constant defined in \eqref{eq:C}, $N''=\lfloor N/2 \rfloor$, $N'' = 2^{\eta''} \cdot L'' (L'': \text{odd})$. 
Define a unitary $P_{S2}$ that satisfies: 
\begin{equation}\label{eq:lemps2-2}
P_{S2}\ket{0^b}\ket{0^{\beta_{S_2}}} = \frac{1}{\sqrt{\alpha_{S2}}} \sum_{\substack{n=1 \\ n: \mathrm{odd}}}^{N-1} \sqrt{n}\ket{n} \ket{\mathrm{temp}_n} , 
\end{equation}
where $\alpha_{S2}$ is a normalization factor and $\ket{\mathrm{temp}_n}$ is a $\beta_{S_2}=3\lceil\log_2 N'' \rceil +4$-qubit unspecified junk register entangled with $\ket{n}$. 
For any $\varepsilon>0$, we can implement a unitary $\tilde{P}_{S2}$ such that
$\left\| P_{S2} - \tilde{P}_{S2} \right\| \leq \varepsilon$
with $16\lceil\log_2{(4/\varepsilon)}\rceil +15\lceil\log_2 N'' \rceil + 24 \lceil\log_2 L'' \rceil +  4C-4$ T gates and $\max (\lceil\log_2 N'' \rceil +1, 4\lceil\log_2 L''\rceil) $ ancilla qubits. 
Moreover, a controlled version of $\tilde{P}_{S2}$ can be implemented with $16\lceil\log_2{(4/\varepsilon)}\rceil +19\lceil\log_2 N'' \rceil + 8\eta'' + 32 \lceil\log_2 L'' \rceil +  4C + 16$ T gates and $\max (\lceil\log_2 N'' \rceil +1, 4\lceil\log_2 L''\rceil)+1$ ancilla qubits. 
\end{lem}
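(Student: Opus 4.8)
The plan is to mirror the construction of Lemma~\ref{lem:ps1} almost verbatim, adapting only the arithmetic so that the output amplitudes are $\sqrt{n}$ for \emph{odd} indices. First I would rewrite the target state by substituting $n=2m+1$: since $N$ is even, the odd indices run over $n=1,3,\dots,N-1$, which correspond to $m=0,1,\dots,N''-1$ with $N''=\lfloor N/2\rfloor$. Writing $\ket{2m+1}=\ket{m}\ket{1}$ (the least significant qubit flags odd parity), the problem reduces to preparing $\frac{1}{\sqrt{\alpha_{S2}}}\sum_{m=0}^{N''-1}\sqrt{2m+1}\,\ket{m}$ and appending the parity qubit with a single $X$ gate, which costs no $T$ gates. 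The crucial bookkeeping observation is that $\sum_{m=0}^{N''-1}(2m+1)=N''^2$, so the total desired squared weight equals the number of basis states in a product of \emph{two} registers each of dimension $N''$; this is why $P_{S2}$ uses $2\,\mathrm{UNI}(N'')$ rather than the asymmetric $N',N'-1$ pair appearing in $P_{S1}$ (there $\sum_{m}2m=N'(N'-1)$).

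Next I would carry out the same five-step circuit as in Fig.~\ref{fig:ps1}, with $b''=\lceil\log_2 N''\rceil$. (1) Prepare $\frac{1}{\sqrt{N''}}\sum_{i}\ket{i}$ and $\frac{1}{\sqrt{N''}}\sum_{j}\ket{j}$ with two $\mathrm{UNI}_{(N'')}$ operators, flagging joint success with one Toffoli. (2) Compute the alternate index $\texttt{alt}_i=N''-1-i$ with a subtraction. (3) Compute the keep value $\texttt{keep}_i=2i+1$, which needs only CNOT and $X$ gates and hence no $T$ gates, but requires a $(b''+1)$-bit register since $2i+1$ can reach $2N''-1$. (4) Perform the inequality test $j\le 2i$ (equivalently $j<2i+1$). (5) Apply a controlled swap exchanging the register holding $i$ with the one holding $N''-1-i$ conditioned on the test outcome. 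Assigning precision $\varepsilon/2$ to each of the two $\mathrm{UNI}$ operators and noting every other step is exact in Clifford$+T$, subadditivity of errors (Eq.~\eqref{eq:unitary-subadditivity}) bounds the total approximation error by $\varepsilon$.

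The correctness check is the analogue of Eqs.~\eqref{eq:lemps1:step4-norm}--\eqref{eq:lemps1:step4-norm-2}: projecting the final state onto $\ket{m}$, the squared norm receives a ``keep'' contribution of $\min(2m+1,N'')$ terms from $i=m$ and an ``alt'' contribution from $i=N''-1-m$, and one verifies by splitting into the cases $m<N''/2$ and $m\ge N''/2$ that the two always sum to exactly $2m+1$, giving amplitude $\propto\sqrt{2m+1}$ as required. Finally I would tally the cost: the two $\mathrm{UNI}(N'',\varepsilon/2)$ contribute $16\lceil\log_2(4/\varepsilon)\rceil+24\lceil\log_2 L''\rceil+4C-8$ $T$ gates, while the subtraction, the $(b''+1)$-bit inequality test, the controlled swap, and the flagging Toffoli add $15\lceil\log_2 N''\rceil+4$ more, reproducing the stated count; the controlled version follows by promoting the internal Hadamards, reflections, controlled-$Z$ and CNOT to their controlled forms (as in Fig.~\ref{fig:ps1-controlled}) and adding one extra subtraction for uncomputation. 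The main obstacle is not any single hard step but getting the assignments $(\texttt{keep}_i,\texttt{alt}_i)=(2i+1,\,N''-1-i)$ exactly right so the case analysis closes to $2m+1$ on the nose, and tracking which ancillae are reusable versus junk so that the bound $\max(\lceil\log_2 N''\rceil+1,\,4\lceil\log_2 L''\rceil)$ matches; the parity-constant $+1$ (absent in the even case) is precisely what forces the symmetric register sizing and the widened inequality register.
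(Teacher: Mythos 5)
Your proposal is correct and follows essentially the same route as the paper's proof of Lemma~\ref{lem:ps2}: the reduction $n=2m+1$ with the parity qubit factored off, the two symmetric $\mathrm{UNI}_{(N'')}$ preparations with Toffoli flag, the $(\texttt{alt}_i,\texttt{keep}_i)=(N''-1-i,\,2i+1)$ arithmetic, the case analysis showing the keep/alt contributions sum to exactly $2m+1$, and the identical T-gate and ancilla tallies (including the extra subtraction in the controlled version). Your explicit remark that $\sum_{m=0}^{N''-1}(2m+1)=N''^2$ forces the symmetric $N'',N''$ register sizing (versus $N',N'-1$ in $P_{S1}$) is a nice clarification the paper leaves implicit, but it is the same argument.
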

\begin{proof}
First, we rewrite the target state as below: 
\begin{equation}\label{eq:lemps2:rewrite}
\frac{1}{\sqrt{\alpha_{S2}}} \sum_{\substack{n=1 \\ n: \mathrm{odd}}}^{N-1} \sqrt{n}\ket{n}
= \frac{1}{\sqrt{\alpha_{S2}}} \sum_{n=0}^{N''-1} \sqrt{2n+1}\ket{n} \ket{1} 
= \sqrt{\frac{2}{\alpha_{S2}}} \sum_{n=0}^{N''-1} \sqrt{n+\frac{1}{2}}\ket{n} \ket{1} ,
\end{equation}
where $N''=\lfloor N/2\rfloor$. 
Thus we only have to prepare the state $\sqrt{\frac{2}{\alpha_{S2}}} \sum_{n=0}^{N''-1} \sqrt{n+\frac{1}{2}}\ket{n}$ and add one ancilla qubit $\ket{1}$. 
We provide a quantum circuit for $P_{S2}$ in Fig. \ref{fig:ps2}.
It is the same as the one for $P_{S1}$ except that the initial uniform superposition is both over $N''$ computational basis and that in atep 3 we calculate $2i+1$ instead of $2i$.
As such, we omit the detailed description of each circuit.
The final state after controlled SWAP gates reads:
\begin{equation}\label{eq:lemps2:step4}
\frac{1}{N''} \sum_{i=0}^{N''-1} \left( \ket{i}\sum_{j=0}^{\min{(2i,N''-1)}}\ket{j} \ket{N''-1-i} \ket{2i+1} \ket{0}
+\ket{N''-1-i}\sum_{j=\min{(N'',2i+1)}}^{N''-1}\ket{j} \ket{i} \ket{2i+1} \ket{1}  \right).
\end{equation}

$\tilde{P}_{S2}$ uses $\text{UNI}_{(N'')}$ two times, subtraction of $\lceil \log_2 N''\rceil$-bit integers, inequality test of $\lceil \log_2 N''\rceil$-bit integers, and controlled-SWAP.
The overall cost of $\tilde{P}_{S2}$ is $16\lceil\log_2{(4/\varepsilon)}\rceil +15\lceil\log_2 N'' \rceil + 24 \lceil\log_2 L'' \rceil +  4C-4$ T gates and $\max (\lceil\log_2 N'' \rceil + 1 , 4\lceil\log_2 L''\rceil) $ ancilla qubits. 

A controlled version of $P_{S2}$ can be implemented as in Fig. \ref{fig:ps2-controlled}.
The cost of each step is as follows:
\begin{itemize}
\item Preparing uniform superpositions with $16\lceil\log_2{(4/\varepsilon)}\rceil + 8\eta'' + 32 \lceil\log_2 L'' \rceil + 4C + 16$ T gates and $4\lceil\log_2 L''\rceil +1$ ancilla qubits.
\item Subtraction with $4\lceil\log_2 N'' \rceil-4$ T gates and $\lceil\log_2 N'' \rceil - 1$ ancilla qubits. 
\item Inequality test and its inverse using out-of-place adder with $4\lceil\log_2 N'' \rceil+4$ T gates and $\lceil\log_2 N'' \rceil + 1$ ancilla qubits.
\item Controlled SWAP gates with $7\lceil\log_2 N'' \rceil+4$ T gates and one ancilla qubit. 
\item Inverse of subtraction with $4\lceil\log_2 N'' \rceil-4$ T gates and $\lceil\log_2 N'' \rceil - 1$ ancilla qubits.
\end{itemize}
Thus overall cost of controlled $\tilde{P}_{S2}$ is $16\lceil\log_2{(4/\varepsilon)}\rceil +19\lceil\log_2 N'' \rceil + 8\eta'' + 32 \lceil\log_2 L'' \rceil +  4C + 16$ T gates and $\max (\lceil\log_2 N'' \rceil + 2 , 4\lceil\log_2 L''\rceil +1)$ ancilla qubits. 

Lastly, we confirm that the state in Eq. \eqref{eq:lemps2:step4} is equivalent to $\sqrt{\frac{2}{\alpha_{S2}}} \sum_{n=0}^{N''-1} \sqrt{n+\frac{1}{2}}\ket{n}$. 
Projecting the state onto the subspace spanned by $\ket{n}\otimes I\otimes I\otimes I \otimes I$, we obtain
\begin{equation}\label{eq:lemps2:step4-projected}
\begin{split}
\frac{1}{N''} \sum_{i=0}^{N''-1}& \left( \ket{n}\sum_{j=0}^{\min{(2n,N''-1)}}\ket{j} \ket{N''-1-n} \ket{2n+1} \ket{0} \right.\\
& \left. +\ket{n}\sum_{j=\min{(N'',2(N''-1-n)+1)}}^{N''-1}\ket{j} \ket{N''-1-n} \ket{2(N''-1-n)+1} \ket{1}  \right).
\end{split}
\end{equation}
When $n<\frac{N''-1}{2}$, i.e.,$2n<N''-1$ and $2(N''-1-n)+1>N''$, the norm of this vector is
\begin{equation}\label{eq:lemps2:step4-norm}
\frac{1}{N''}\sqrt{2n+1}= \frac{\sqrt{2}}{N''}\sqrt{n+\frac{1}{2}} \propto \sqrt{n+\frac{1}{2}}.
\end{equation}
When $n\geq\frac{N''-1}{2}$, i.e.,$2n\geq N''-1$ and $2(N''-1-n)+1\leq N''$, the norm of this vector is
\begin{equation}\label{eq:lemps2:step4-norm-2}
\frac{1}{N''}\sqrt{N''+N''-1-(2(N''-1-n)+1-1)}= \frac{\sqrt{2}}{N''}\sqrt{n+\frac{1}{2}} \propto \sqrt{n+\frac{1}{2}}.
\end{equation}
Thus the circuit in Fig. \ref{fig:ps2} generates the state $\frac{1}{\sqrt{\alpha_{S2}}} \sum_{\substack{n=1 \\ n: \mathrm{odd}}}^{N-1} \sqrt{n}\ket{n} \ket{\mathrm{temp}_n}$. 
\end{proof}
\begin{figure}
\centerline{
\includegraphics[width=125mm, page=1]{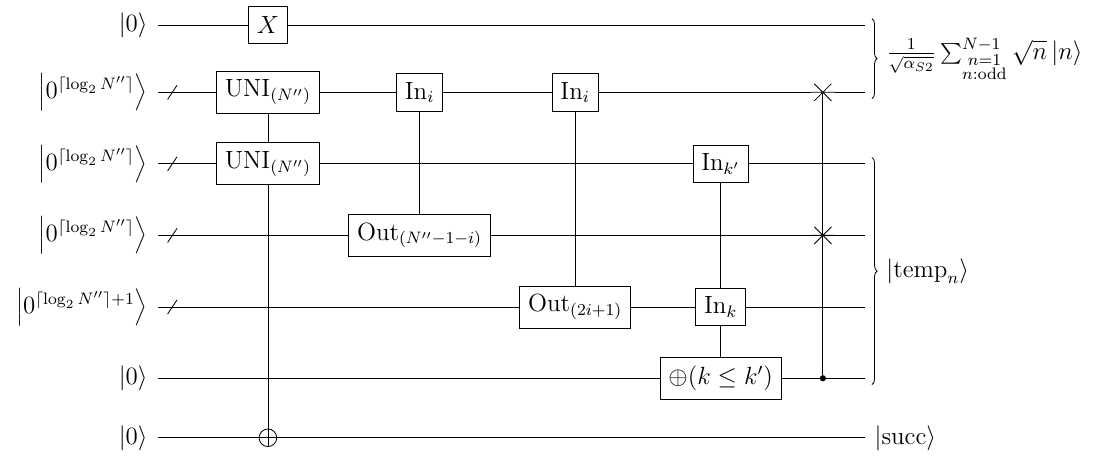}
}
\caption{Circuit for $P_{S2}$ operator.}
\label{fig:ps2}
\end{figure}
\begin{figure}
\centerline{
\includegraphics[width=150mm, page=1]{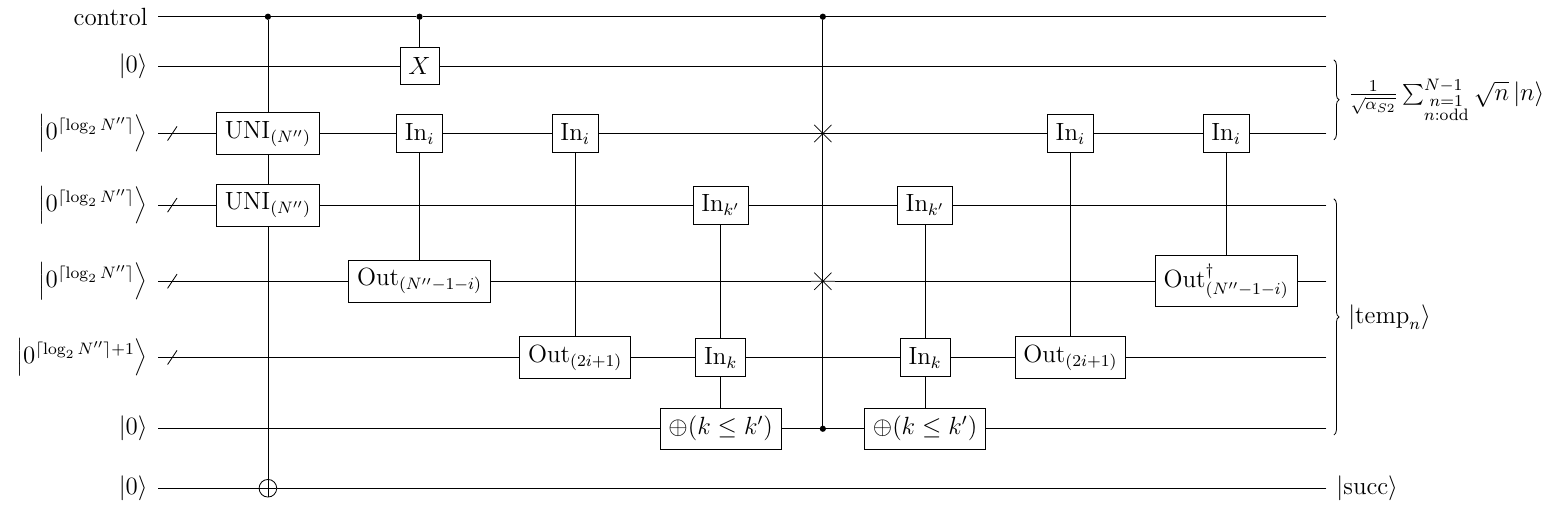}
}
\caption{Circuit for controlled $P_{S2}$ operator.}
\label{fig:ps2-controlled}
\end{figure}

\subsubsection{\texorpdfstring{$P_{S3}$}{} operator}

\begin{lem}[$P_{S3}$ operator]\label{lem:ps3}
Let $N\geq 8 \in\mathbb{N}$, $b=\lceil \log_2 N \rceil$, let $C$ be a constant defined in \eqref{eq:C}, $\eta,L$ be integers such that $N=2^\eta \cdot L$, and $l=\lceil \log_2 L \rceil$. 
Define $P_{S3}$ as a unitary satisfying: 
\begin{equation}\label{eq:lemps3-1}
(I\otimes \bra{0^{b}} \bra{1} \bra{1} \bra{1}R_Y(\theta)) P_{S3} \ket{0^b}\ket{0^{b}}\ket{0}\ket{0}\ket{0} = \frac{1}{\sqrt{\alpha_{S3}}} \sum_{n=1}^{N-1} n\ket{n}, 
\end{equation}
where $\alpha_{S3}$ is a normalization factor and $\theta = 2 \arcsin(\sqrt{3N^3/(2N(N-1)(2N-1))})$. 
For any $\varepsilon>0$, we can implement a unitary $\tilde{P}_{S3}$ such that
$\left\| P_{S3} - \tilde{P}_{S3} \right\| \leq \varepsilon$
with $80 \lceil\log_2{(20/\varepsilon)}\rceil + 28 b + 108 l + 20C - 12$ T gates, one unreusable ancilla qubit, and $2b$ ancilla qubits. 
Moreover, a controlled version of $\tilde{P}_{S3}$ can be implemented with $80 \lceil\log_2{(20/\varepsilon)}\rceil + 28 b + 4\eta + 112 l + 20C + 12$ T gates, one unreusable ancilla qubit, and $2b+1$ ancilla qubits. 
\end{lem}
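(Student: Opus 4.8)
The plan is to realize $P_{S3}$ by first building a cheaper, imperfect preparation $P_{S3}'$ whose successful branch already carries amplitudes proportional to $n$, and then boosting that branch to the prescribed normalization with a single round of amplitude amplification, inserting two $R_Y$ rotations tuned so that the post-selected state is \emph{exactly} the target $\alpha_{S3}^{-1/2}\sum_{n=1}^{N-1}n\ket{n}$. I take $P_{S3}'$, together with its cost and its (approximate) action, as given by Table~\ref{tab:prepare}: it uses three $\mathrm{UNI}_{(N)}$ blocks and one inequality test to produce, up to the error incurred in synthesizing its $R_Y$ rotations, a state of the form $c\,\alpha_{S3}^{-1/2}\sum_{n} n\ket{n}\ket{\mathrm{success}} + \ket{\perp}$, in direct analogy with how $P_{S1}$ (Lemma~\ref{lem:ps1}) realizes amplitudes proportional to $\sqrt{n}$ from two uniform registers and one comparison. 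The essential new ingredient relative to $P_{S1}$ is that the linear-in-$n$ weight is the square of the counting used there, which is what the extra uniform register supplies.

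First I would fix the amplification geometry. Writing the success amplitude of $P_{S3}'$ as $c$, one round of amplitude amplification $P_{S3}'\,R_0\,P_{S3}'^{\dagger}\,R_0\,P_{S3}'$ (the three invocations of $P_{S3}'$ that appear in the cost, with the reflections $R_0$ of sizes $2b+3$ and $b+3$) rotates that amplitude in the usual two-dimensional invariant subspace. I would then choose $\theta=2\arcsin\bigl(\sqrt{3N^3/(2N(N-1)(2N-1))}\bigr)$, for which $\sin^2(\theta/2)=N^3/(4\alpha_{S3})$ using $\alpha_{S3}=\sum_{n=1}^{N-1}n^2=(N-1)N(2N-1)/6$, so that after amplification the coefficient on the flagged subspace is precisely the one demanded by the left-hand side of Eq.~\eqref{eq:lemps3-1} once the junk register is projected onto $\ket{0^b}$ and the success flags onto $\ket{1}$. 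Verifying that the amplitudes are then exactly proportional to $n$ — the analogue of the explicit norm computations, Eqs.~\eqref{eq:lemps1:step4-norm}--\eqref{eq:lemps1:step4-norm-2}, for $P_{S1}$ — is the step that must be done with exact arithmetic, and it is the main obstacle: the reflections and rotations have to be arranged so that no $n$-dependent phase or weight is introduced beyond the intended factor $n$.

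The error analysis then reduces to the fact that the only non-Clifford$+$T-exact primitives are the $R_Y$ rotations, since every $R_0$ is exact. I would apportion the budget as $\varepsilon=3\cdot(6\varepsilon/20)+2\cdot(\varepsilon/20)$, implementing each of the three $P_{S3}'$ to error $6\varepsilon/20$ and each of the two explicit $R_Y$ to error $\varepsilon/20$, so that the triangle inequality for unitaries, Eq.~\eqref{eq:unitary-subadditivity}, gives $\|P_{S3}-\tilde{P}_{S3}\|\le\varepsilon$. Substituting $\mathrm{UNI}_{(N)}$ at precision $\varepsilon/10$ into $P_{S3}'(N,6\varepsilon/20)=3\,\mathrm{UNI}(N,\varepsilon/10)+\texttt{ineq}(b)+4$ and adding $2R_Y(\varepsilon/20)$ together with the two reflections, the $\lceil\log_2(20/\varepsilon)\rceil$, $b$, $l$, $C$ and constant contributions collect to $80\lceil\log_2(20/\varepsilon)\rceil+28b+108l+20C-12$ T gates, with the ancillas inherited from the largest reflection $R_0(2b+3)$ and from $P_{S3}'$, giving $2b$ reusable and one unreusable ancilla. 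For the controlled version I would control only the reflections (enlarging them to $R_0(2b+4)$ and $R_0(b+4)$) and a single copy of $P_{S3}'$ (replacing it by $\mathrm{c}P_{S3}'$ while leaving the other two uncontrolled, as the amplification structure permits), reproducing the stated controlled cost $80\lceil\log_2(20/\varepsilon)\rceil+28b+4\eta+112l+20C+12$.
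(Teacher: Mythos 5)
Your proposal is correct and follows essentially the same route as the paper's proof: the same $P_{S3}'$ subroutine (three $\mathrm{UNI}_{(N)}$ calls plus one inequality test, with success amplitude $\sqrt{\alpha_{S3}/N^{3}}\geq 1/2$ for $N\geq 8$), the same tuning $\sin^{2}(\theta/2)=N^{3}/(4\alpha_{S3})$ bringing the flagged amplitude to exactly $1/2$ so that a single amplification round $P_{S3}'R_0 P_{S3}'^{\dagger}R_0 P_{S3}'$ produces the target exactly, the same $20$-rotation error split $3\cdot(6\varepsilon/20)+2\cdot(\varepsilon/20)$, and the same controlled construction (one $\mathrm{c}P_{S3}'$ with enlarged reflections $R_0(2b+4)$, $R_0(b+4)$). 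The one descriptive slip is attributing the linear-in-$n$ weight to an ``extra uniform register'': it actually arises from inverting $\mathrm{UNI}_{(N)}$ on the second register and projecting onto $\ket{0^{b}}$, i.e., the overlap $n/\sqrt{N}$ of $\sum_{n'=0}^{n-1}\ket{n'}$ with the uniform state, which is precisely the computation the paper carries out and which resolves the ``exact arithmetic'' step you flag as the remaining obstacle.
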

\begin{proof}
We construct $P_{S3}$ based on the technique proposed in Ref. \cite{sanders2019black}. 
First, we define $P_{S3}'$ operator as in Fig. \ref{fig:ps3prime} which provides the target state in a specific subspace. 
This circuit works as follows: 
\begin{enumerate}
\item Prepare uniform superposition states on the first two register:
\begin{equation}\label{eq:lemps3:prime-step1}
\frac{1}{\sqrt{N}}\sum_{n=0}^{N-1}\ket{n}\frac{1}{\sqrt{N}}\sum_{n'=0}^{N-1}\ket{n'}  \ket{0} \ket{1},
\end{equation}
Note that it needs two unreusable ancilla qubits.
\item Perform an inequality test: 
\begin{equation}\label{eq:lemps3:prime-step2}
\frac{1}{N}\sum_{n=0}^{N-1}\ket{n} \left( \sum_{n'=0}^{n-1}\ket{n'}  \ket{1} + \sum_{n'=n}^{N-1}\ket{n'}  \ket{0} \right) \ket{1}.
\end{equation}
\item Invert the uniform superposition procedure. Now, one of the two unreusable ancilla qubits becomes reusable. 
\end{enumerate}
Projecting the state after this process onto the subspace spanned by $I\otimes \ket{0^b}\ket{1}\ket{1}$, we obtain the target state, i.e.,
\begin{align}\label{eq:lemps3:prime-projected}
&\left(I\otimes \bra{0^b}\bra{1}\bra{1}\right) \left(I\otimes \text{UNI}_{(N)} \otimes I \otimes I\right)
\frac{1}{N}\sum_{n=0}^{N-1}\ket{n} \left( \sum_{n'=0}^{n-1}\ket{n'}  \ket{1} + \sum_{n'=n}^{N-1}\ket{n'}  \ket{0} \right) \ket{1} \\
&=\left(I\otimes \frac{1}{\sqrt{N}}\sum_{i=0}^{N-1}\bra{i}\bra{1}\bra{1}\right)
\frac{1}{N}\sum_{n=0}^{N-1}\ket{n} \left( \sum_{n'=0}^{n-1}\ket{n'}  \ket{1} + \sum_{n'=n}^{N-1}\ket{n'}  \ket{0} \right) \ket{1} \\
&= \frac{1}{N\sqrt{N}}\sum_{n=1}^{N-1} n \ket{n}.
\end{align}
For $N\geq8$, the norm of this vector $\sqrt{\frac{(N-1)N(2N-1)}{6N^3}}$ is larger than $1/2$.
Thus, after adding a qubit to make the amplitude exactly $1/2$, we can perform a single iteration of amplitude amplification to obtain the desired state. This leads to $P_{S3}$ operator as in Fig. \ref{fig:ps3}. 

Finally, we summarize the cost of each operator. 
Noting that there are 20 $R_Y$ rotations in Fig. \ref{fig:ps3}, setting the error of each rotation to $\varepsilon/20$ give $\tilde{P}_{S3}$ with $\varepsilon$-precision.
Using this precision for rotations, the cost of $P_{S3}'$ is $24\lceil\log_2{(20/\varepsilon)}\rceil + 4b + 36l + 6C- 8$ T gates, one unreusable ancilla qubit, and $\max{(2l,b-1)}+2$ ancilla qubits.
Note that we can use one of the ancilla qubits of $\mathrm{UNI}_{(N)}$ as the ancilla qubit for the inequality test. 
A controlled version of ${P}_{S3}'$ can be implemented by replacing $\mathrm{UNI}_{(N)}$ on the first register and a Toffoli gate with controlled versions of them respectively. 
Therefore, controlled ${P}_{S3}'$ requires $24\lceil\log_2{(20/\varepsilon)}\rceil + 4b + 4\eta + 40l + 6C+4$ T gates and $\max{(2l,b)+2}$ ancilla qubits. 
$\tilde{P}_{S3}$ has a cost of
$80 \lceil\log_2{(20/\varepsilon)}\rceil + 28 b + 108 l + 20C - 12$ T gates, one unreusable ancilla qubit, and $2b$ ancilla qubits. 
Replacing ${P}_{S3}'$ on the first step and three multi controlled gates with controlled versions of them, we obtain controlled $\tilde{P}_{S3}$ at a cost of $80 \lceil\log_2{(20/\varepsilon)}\rceil + 28 b + 4\eta + 112 l + 20C + 12$ T gates, one unreusable ancilla qubit, and $2b+1$ ancilla qubits. 

\begin{figure}[htbp]
\centerline{
\includegraphics[width=125mm, page=1]{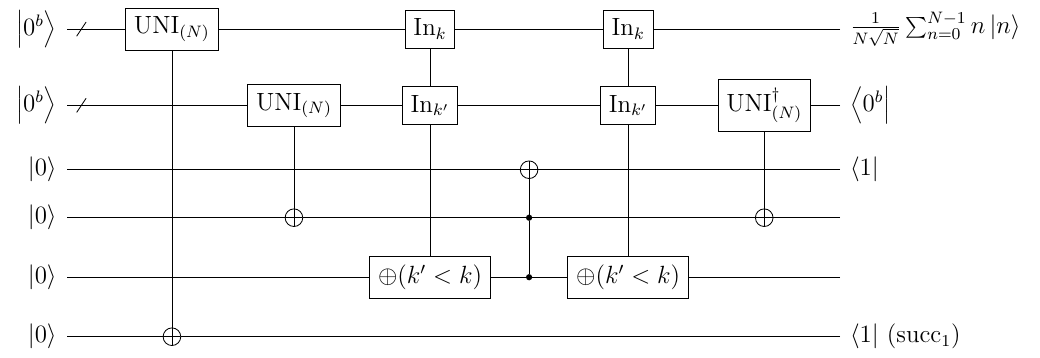}
}
\caption{Circuit for $P_{S3}'$ operator which is used as the subroutine to implement $P_{S3}$ operator.}
\label{fig:ps3prime}
\end{figure}
\begin{figure}[htbp]
\centerline{
\includegraphics[width=125mm, page=1]{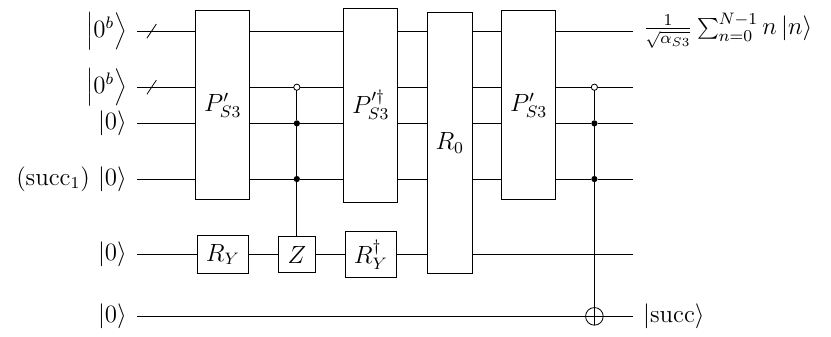}
}
\caption{Circuit for $P_{S3}$ operator.}
\label{fig:ps3}
\end{figure}
\end{proof}

\subsection{\texorpdfstring{$P_1$}{} operator}
Using Y rotations, $\mathrm{UNI}$, $P_{S1}$, $P_{S2}$, and $P_{S3}$, we can implement $P_1$ as Fig.~\ref{fig:p1}.
In the first few steps, using the appropriate Y rotation gates $R_1, R_{21}, R_{22}$, $R_3$ and a controlled Hadamard gate, we implement $P_{\rm split}$ (See Eq. \eqref{eq:split}) which generates the following state:
\begin{equation}\label{eqapp1.24}
  \begin{split}
  \frac{1}{\sqrt{\alpha_S}} &\left( \sqrt{\frac{1}{2}w\left( N-1 \right)}\ket{0}\ket{0}\ket{0} + \sqrt{\frac{1}{2}w\left( N-1 \right)} \ket{0}\ket{0}\ket{1} + \sqrt{\frac{m}{2}N} \ket{0}\ket{1}\ket{0} \right. \\
  &\left. + \sqrt{\frac{J\theta_0}{2\pi}  \alpha_{S1}} \ket{1}\ket{0}\ket{0} + \sqrt{ \left(\frac{J\theta_0}{2\pi}+\frac{ J}{2}\right)\alpha_{S2}} \ket{1}\ket{0}\ket{1} + \sqrt{\frac{ J}{8} \alpha_{S3}} \ket{1}\ket{1}\ket{0} \right).
  \end{split}
  \end{equation} 
This process is based on Ref. \cite{stateprep}.
Then, we perform the controlled versions of $\mathrm{UNI}_{(N-1)}$, $\mathrm{UNI}_{(N)}$, $P_{S1}$, $P_{S2}$, and $P_{S3}$ to obtain: 
\begin{equation}\label{eqapp1.25}
  \begin{split}
  \frac{1}{\sqrt{\alpha_S}} &\left( \sqrt{\frac{w}{2}}\ket{0}\ket{0}\ket{0} \sum_{n=0}^{N-2} \ket{n} + \sqrt{\frac{w}{2}} \ket{0}\ket{0}\ket{1} \sum_{n=0}^{N-2} \ket{n} + \sqrt{\frac{m}{2}} \ket{0}\ket{1}\ket{0} \sum_{n=0}^{N-1} \ket{n} \right. \\
  &\left. +  \sqrt{\frac{J\theta_0}{2\pi}} \ket{1}\ket{0}\ket{0} \sum_{\substack{n=1 \\ n: \mathrm{even}}}^{N-1} \sqrt{n}\ket{n} +\sqrt{\frac{J\theta_0}{2\pi}+\frac{J}{2}} \ket{1}\ket{0}\ket{1} \sum_{\substack{n=1 \\ n: \mathrm{odd}}}^{N-1} \sqrt{n}\ket{n}\right. \\
  & \left. + \sqrt{\frac{J}{8} } \ket{1}\ket{1}\ket{0} \sum_{n=1}^{N-1} n\ket{n} \right). 
  \end{split}
\end{equation}
This leads to the following result.

\begin{figure}
\centerline{
\includegraphics[width=140mm, page=1]{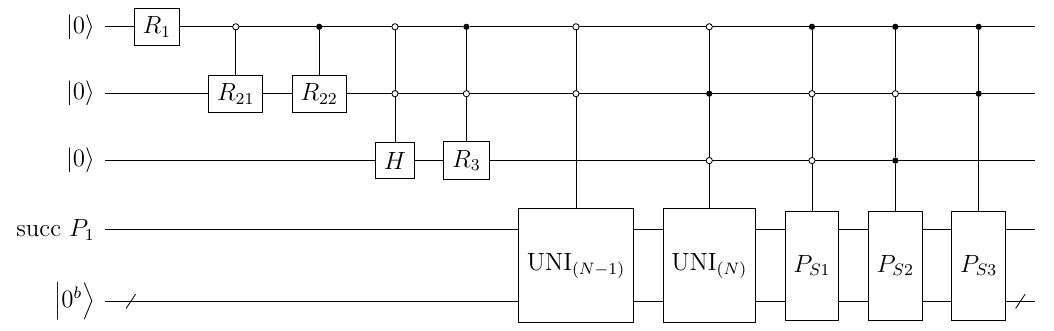}
}
\caption{Circuit for $P_1$ operator. The first three qubits are used for taking a linear combination of unitaries that encode each term in Eq.~\eqref{eq:schwinger-mod}. }
\label{fig:p1}
\end{figure}

\begin{lem}[$P_1$ operator]\label{lem:p1}
Let $N\geq 8 \in\mathbb{N}$, $b=\lceil \log_2 N \rceil$, let $C$ be a constant defined in \eqref{eq:C}, $N'=\lceil N/2\rceil$, and $N''=\lfloor N/2\rfloor$. 
Further, let $\eta'$, $L'$, $\mu'$, $K'$, $\eta''$ and $L''$ be integers such that $N' = 2^{\eta'} \cdot L', \ N'-1 = 2^{\mu'} \cdot K', and \ N'' = 2^{\eta''} \cdot L'' \ (L',K',L'': \text{odd})$, and  
$P_1$ be a unitary that generates the state \eqref{eqapp1.25} from $\ket{0}\ket{0}\ket{0}\ket{0^b}$.
We can implement a unitary $\tilde{P}_{1}$ such that
$\left\| P_{1} - \tilde{P}_{1} \right\| \leq \varepsilon$
with 
\begin{equation}
\begin{split}
    &156\lceil \log_2(39/\varepsilon) \rceil + 28 b + 19\lceil \log_2 N' \rceil + 19\lceil \log_2 N'' \rceil + 8\eta + 4\mu + 4\eta' + 4\mu' + 8\eta'' \\
    &+ 128 \lceil \log_2 L \rceil + 16 \lceil \log_2 K \rceil+ 16 \lceil \log_2 L' \rceil + 16 \lceil \log_2 K' \rceil + 32\lceil \log_2 L'' \rceil + 39 C + 104
\end{split}
\end{equation}
T gates, 
\begin{equation}
\max{(2\lceil \log_2 N' \rceil + \lceil \log_2 (N'-1) \rceil, 3\lceil \log_2 N'' \rceil)} + 4    
\end{equation}
unreusable ancilla qubits, and 
\begin{equation}
    \max{(2\lceil \log_2 L' \rceil+2\lceil \log_2 K' \rceil +1, 4\lceil \log_2 L'' \rceil +1, 2b )} + 2
\end{equation}
ancilla qubits. 
\end{lem}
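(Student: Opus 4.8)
The plan is to follow the circuit in Fig.~\ref{fig:p1}, verifying correctness first and then accounting separately for the error, the T-count, and the ancilla requirements. The operator $P_1$ factorizes as $P_{\mathrm{split}}$ (implemented with the Y rotations $R_1,R_{21},R_{22},R_3$ and a controlled Hadamard) followed by the controlled subroutines $\mathrm{cUNI}_{(N-1)}$, $\mathrm{cUNI}_{(N)}$, $\mathrm{c}P_{S1}$, $\mathrm{c}P_{S2}$, and $\mathrm{c}P_{S3}$, each conditioned on the appropriate three-qubit branch label. I would first confirm that this sequence maps $\ket{0}\ket{0}\ket{0}\ket{0^b}$ to the target state in Eq.~\eqref{eqapp1.25}: the $P_{\mathrm{split}}$ step produces the branch amplitudes of Eq.~\eqref{eqapp1.24}, and the controlled preparations then load the correct index superpositions branch by branch, using the state-preparation guarantees of Lemmas~\ref{lem:uniform},~\ref{lem:ps1},~\ref{lem:ps2}, and~\ref{lem:ps3}.

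For the error analysis I would invoke subadditivity of operator-norm errors (Eq.~\eqref{eq:unitary-subadditivity}) to distribute the total budget $\varepsilon$ across the constituent operations. Counting contributions, there are seven $R_Y$ rotation instances in $P_{\mathrm{split}}$ (controlled rotations being counted with their doubled cost) and five controlled subroutines; assigning $\varepsilon/39$ to each rotation, $2\varepsilon/39$ to each of $\mathrm{cUNI}_{(N-1)}$ and $\mathrm{cUNI}_{(N)}$, $4\varepsilon/39$ to each of $\mathrm{c}P_{S1}$ and $\mathrm{c}P_{S2}$, and $20\varepsilon/39$ to $\mathrm{c}P_{S3}$ makes the weights sum to $7+2+2+4+4+20=39$, so the accumulated error is at most $\varepsilon$. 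These are exactly the precisions listed in the $P_1$ row of Table~\ref{tab:prepare}; the allocation reflects that $\mathrm{c}P_{S3}$ is the dominant (amplitude-amplified) cost and so is allotted the largest share.

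The T-count then follows by summing the controlled-subroutine costs evaluated at the precisions above, i.e.\ the controlled versions from Lemmas~\ref{lem:ps1},~\ref{lem:ps2}, and~\ref{lem:ps3} together with $\mathrm{cUNI}$ from Lemma~\ref{lem:uniform}, plus $7R_Y(\varepsilon/39)$ and the additive constant $44$; the latter accounts for promoting each controlled subroutine to a \emph{multi}-qubit-controlled operation keyed on its branch label, as indicated in the table note and Fig.~\ref{fig:p1}. Substituting $R_Y(\delta)=4\log_2(1/\delta)+C$ and the explicit controlled-subroutine formulas, and collecting the $\lceil\log_2(\cdot/\varepsilon)\rceil$, $b$, $\eta$-type, and $L$-type terms, yields the stated expression. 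For the ancilla count I would separate reusable from unreusable qubits: the reusable working registers can be shared across the mutually exclusive branches, so their number is the maximum over the subroutines, giving the $\max(2\lceil\log_2 L'\rceil+2\lceil\log_2 K'\rceil+1,\,4\lceil\log_2 L''\rceil+1,\,2b)$ term plus a constant for routing, while the unspecified junk registers $\ket{\mathrm{temp}_n}$ cannot be reused and contribute the additive $\max(2\lceil\log_2 N'\rceil+\lceil\log_2(N'-1)\rceil,\,3\lceil\log_2 N''\rceil)$ term.

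I expect the main obstacle to be the bookkeeping rather than any conceptual difficulty: one must track which ancillae are genuinely reusable versus the junk registers entangled with the index, ensure the branch-control overhead is counted exactly once per subroutine, and verify that the error weights and the precisions fed into each lemma are mutually consistent so that the final constants match Table~\ref{tab:prepare} term by term.
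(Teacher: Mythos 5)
Your proposal is correct and takes essentially the same route as the paper, whose entire proof is the one-liner: implement the circuit of Fig.~\ref{fig:p1}, set every Y rotation in it to precision $\varepsilon/39$, and count T gates and ancillas --- your weighted budget $7+2+2+4+4+20=39$ reproduces exactly the per-subroutine precisions in the $P_1$ row of Table~\ref{tab:prepare}, and your treatment of the $44$-T-gate branch-control overhead and of reusable versus junk ($\ket{\mathrm{temp}_n}$) ancillas matches the paper's accounting. One cosmetic correction: the weights are not an optimization reflecting that $\mathrm{c}P_{S3}$ is the dominant cost; they arise mechanically because each subroutine's error parameter equals its internal rotation count ($20$ for $\mathrm{c}P_{S3}$, $4$ for $\mathrm{c}P_{S1}$ and $\mathrm{c}P_{S2}$, $2$ per $\mathrm{cUNI}$) times the uniform per-rotation precision $\varepsilon/39$, which is numerically identical to your allocation.
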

\begin{proof}
Set the precision of Y rotations in the circuit (Fig. \ref{fig:p1}) to $\varepsilon/39$, which give the overall precision $\varepsilon$, and count the T gates and ancilla qubits. 
\end{proof}

\subsection{\texorpdfstring{$P_2$}{} operator}

We construct $P_2$ operator using fixed-point amplitude amplification. 
\begin{lem}[Fixed-point amplitude amplification \cite{gilyen2019quantum, yoder2014fixed, rall2023amplitude}]\label{lem:fpAA}
Let $\delta,\varDelta,a',\kappa\in(0,1)$, $\delta<\varDelta$, $a'>\kappa$ and $\Pi = \ket{\psi_0}\bra{\psi_0}$ be a projector. 
Suppose that $\tilde{\Pi}$ is a projector such that $a'\ket{\psi_G} = \tilde{\Pi}\ket{\psi_0}$.
Then there exist a unitary $U'$ such that $\tilde{\Pi}U'\ket{\psi_0}=\xi \ket{\psi_G}$, which can be implemented with $d \geq \frac{1}{\kappa}\ln{(2/\sqrt{\varDelta})} (:\text{odd})$ $C_{\Pi}NOT$, $C_{\tilde{\Pi}}NOT$, and Z rotation gates. 
Here, $\xi\in\mathbb{C}$ satisfies $|\xi|^2=(1-\delta)>(1-\varDelta)$ and $C_{\Pi}NOT=X\otimes\Pi+I\otimes(I-\Pi)$.
\end{lem}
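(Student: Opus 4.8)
The plan is to recognize this statement as the amplitude-amplification specialization of quantum singular value transformation (QSVT) and to reduce it to the existence of a suitable amplifying polynomial, invoking the machinery of Ref.~\cite{gilyen2019quantum} and the fixed-point construction of Ref.~\cite{yoder2014fixed}. First I would use Jordan's lemma to restrict attention to the two-dimensional subspace relevant to the pair of projectors $\Pi = \ket{\psi_0}\bra{\psi_0}$ and $\tilde{\Pi}$. Writing $\ket{\psi_0} = a'\ket{\psi_G} + \sqrt{1 - a'^2}\,\ket{\psi_G^\perp}$ with $\ket{\psi_G}\in\mathrm{range}(\tilde{\Pi})$ and $\ket{\psi_G^\perp}\in\ker(\tilde{\Pi})$, the entire dynamics generated by reflections about $\Pi$ and $\tilde{\Pi}$ stays inside $\mathrm{span}\{\ket{\psi_G},\ket{\psi_G^\perp}\}$, so it suffices to analyze a single $2\times 2$ block parametrized by the overlap amplitude $a'$.

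Second, I would set up the QSVT iterate. The gates $C_{\Pi}NOT$ and $C_{\tilde{\Pi}}NOT$, together with single-qubit $Z$-rotations on the flag ancilla, realize the projector-controlled phase rotations $e^{i\phi_j(2\Pi - I)}$ and $e^{i\phi_j(2\tilde{\Pi} - I)}$ that are the elementary building blocks of quantum signal processing. By the QSVT/QSP theorem of Ref.~\cite{gilyen2019quantum}, the alternating product of $d$ such phased reflections, with phase angles $\Phi = (\phi_1,\dots,\phi_d)$, implements on the amplitude $a'$ a degree-$d$ polynomial $P(a')$ of definite parity equal to $d\bmod 2$, subject to $|P(x)|\leq 1$ on $[-1,1]$. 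Here the $Z$ rotations supply exactly the phases $\phi_j$ and account for the quoted gate count, which is why the cost is $d$ uses each of $C_{\Pi}NOT$, $C_{\tilde{\Pi}}NOT$, and $Z$ rotations.

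Third --- and this is the analytic heart --- I would invoke the fixed-point amplifying polynomial of Yoder, Low, and Chuang~\cite{yoder2014fixed}. One needs an odd polynomial $P$ of degree $d$, built from nested Chebyshev polynomials of the first kind, satisfying $|P(x)|^2 \geq 1 - \varDelta$ for every $x\in[\kappa,1]$. Since the achievable overlap is $|\xi| = |P(a')|$, choosing this polynomial guarantees $|\xi|^2 = 1 - \delta$ for some $\delta < \varDelta$ whenever $a' > \kappa$, which is precisely the \emph{fixed-point}, amplitude-agnostic property requested. The existence of phase angles $\Phi$ realizing this particular $P$ follows from the completeness direction of the QSP characterization, and $U'$ is then the corresponding alternating product.

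The main obstacle is the degree bound $d \geq \kappa^{-1}\ln(2/\sqrt{\varDelta})$. This requires the sharp estimate that the Yoder--Low--Chuang polynomial already attains uniform $1-\varDelta$ amplification on $[\kappa,1]$ at this degree, which follows from the growth rate $T_d(1+\epsilon)\approx\cosh(d\sqrt{2\epsilon})$ of Chebyshev polynomials together with the substitution linking $\epsilon$ to $\kappa^2$. Tracking the odd-parity requirement and pinning down the constant $2$ inside the logarithm is where the careful bookkeeping lies; once the polynomial and its degree are in hand, the remainder reduces entirely to the cited QSVT and Jordan-decomposition machinery.
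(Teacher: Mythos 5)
Your proposal is correct and takes essentially the same route as the paper, which gives no in-text proof but imports this lemma wholesale from the cited references \cite{gilyen2019quantum, yoder2014fixed}: the Jordan-lemma reduction to a two-dimensional invariant subspace, the realization of phased reflections $e^{i\phi_j(2\Pi-I)}$ and $e^{i\phi_j(2\tilde{\Pi}-I)}$ from $C_{\Pi}NOT$, $C_{\tilde{\Pi}}NOT$, and $Z$ rotations, and the Yoder--Low--Chuang Chebyshev polynomial giving $|P(x)|^2 \geq 1-\varDelta$ on $[\kappa,1]$ at odd degree $d \geq \kappa^{-1}\ln\bigl(2/\sqrt{\varDelta}\bigr)$ are precisely the contents of Theorem 27 of Gilyén \emph{et al.} and its underlying fixed-point construction. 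Nothing further is required.
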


\begin{lem}[$P_2$ operator]\label{lem:p2}
Let $N \in\mathbb{N}$, $b=\lceil \log_2 N \rceil$ and let $C$ be a constant defined in \eqref{eq:C}. 
For $n\in\{0,\dots N-1\}$, let $\delta_n,\varDelta \in(0,1)$, $\delta_n<\varDelta$, $\xi_n\in\mathbb{C}$ and $|\xi_n|^2=(1-\delta_n)>(1-\varDelta)$. 
Suppose that $P_2'$ is a unitary defined as:
\begin{equation}\label{eq:lemp2}
P_2' = \sum_{n=1}^{N-1} \ket{n}\bra{n} \otimes \left( \left(\xi_n\frac{1}{\sqrt{n}}\sum_{i=0}^{n-1}\ket{i} + \ket{\perp_n}\right)\bra{0^b} + \dots \right) + \ket{0}\bra{0}\otimes W_0 + \sum_{n=N}^{2^b-1} \ket{n}\bra{n} \otimes W_n, 
\end{equation}
where $W_0, W_N, \dots W_{2^b-1}$ are some unitaries whose actions we do not specify and $\ket{\perp_n}$ are unnormalized vectors such that $\braket{\perp_n|\perp_n}= 1-|\xi_n|^2= \delta_n<\varDelta $. 
Then we can implement $\tilde{P}_2$ such that
$\left\| P_{2}' - \tilde{P}_{2} \right\| \leq \varepsilon$
with $d(4\lceil\log_2{(d/\varepsilon)}\rceil + 8 b +C-2)+12b-6$ T gates and $6 b$ ancilla qubits, where $d\geq \sqrt{2}\ln{(2/\sqrt{\varDelta})}$ is an odd number. 
Moreover, a controlled version of $\tilde{P}_2$ can be implemented with $d(8\lceil\log_2{(2d/\varepsilon)}\rceil + 8 b +2C-2)+16b-2$ T gates and $6 b$ ancilla qubits. 
\end{lem}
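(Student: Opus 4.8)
The plan is to implement $P_2$ as a \emph{state preparation by amplitude amplification}: for each value $n$ held in the control register I want the target register to carry the uniform superposition $n^{-1/2}\sum_{i=0}^{n-1}\ket{i}$, and I realize this coherently over all $n$ by first producing a superposition in which this good state appears with an $n$-independent lower bound on its amplitude, and then amplifying it with fixed-point amplitude amplification (Lemma~\ref{lem:fpAA}). The naive route---Hadamards on all $b$ target qubits followed by an inequality test flagging $i<n$---gives the good-subspace amplitude $\sqrt{n/2^b}$, which can be as small as $\sqrt{1/2^{b}}$ and would force the number of amplification rounds to scale with $N$; removing this is the crux.

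First I would fix the amplitude with the $\texttt{una}$ subroutine of Table~\ref{tab:prepare}. Applied to $n$ (using a subtraction $\texttt{sub}$ to locate the correct power-of-two range) it returns a mask whose ones occupy the bit positions $0,\dots,k$ with $2^{k}\le n<2^{k+1}$. Controlling Hadamard gates on the target register by this mask prepares the uniform superposition over $\{0,\dots,2^{k+1}-1\}$ instead of over all $2^b$ states, so that the inequality test $i<n$ now flags the good subspace with amplitude $\sqrt{n/2^{k+1}}$; since $2^{k}\le n$ this is bounded below by $1/\sqrt2$ for \emph{every} $n$. This uniform bound is exactly $\kappa=1/\sqrt2$, which lets me apply Lemma~\ref{lem:fpAA} with a single $n$-independent round count $d\ge\sqrt2\,\ln(2/\sqrt\varDelta)$ and obtain $\xi_n\,n^{-1/2}\sum_{i=0}^{n-1}\ket{i}+\ket{\perp_n}$ with $\braket{\perp_n|\perp_n}=\delta_n<\varDelta$, i.e.\ precisely the operator $P_2'$ of Eq.~\eqref{eq:lemp2}.

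Next I would assemble the circuit and count gates. The mask is computed once and uncomputed at the end; inside the $d$ rounds each reflection about the prepared state is $A\,R_0\,A^\dagger$, so the mask-controlled Hadamards run twice per round, while each reflection about the good subspace costs one inequality test whose inverse is T-free. The two generalized-reflection phases of each round are single-qubit $R_Z$ rotations, and these are the \emph{only} source of error in $\|P_2'-\tilde P_2\|$. Synthesizing each rotation to precision $\varepsilon/d$ and bounding the accumulated error by the subadditivity inequality~\eqref{eq:unitary-subadditivity} over the $d$ rotations gives total error $\le\varepsilon$, at $4\lceil\log_2(d/\varepsilon)\rceil+C$ T gates per rotation. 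Adding the rotation, inequality-test, controlled-Hadamard, $R_0$, $\texttt{sub}$ and $\texttt{una}$ contributions over the $\tfrac{d-1}{2}$ interior rounds together with the boundary operations reproduces the stated T count and the $6b$ ancilla qubits.

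For the controlled version I would exploit that the whole amplification collapses to the identity when all its generalized-reflection phases vanish, since then each reflection about the prepared state becomes $A A^\dagger=I$ and each reflection about the good subspace becomes $I$; it therefore suffices to control only the $R_Z$ phase rotations, each of which then costs twice as many T gates (synthesized to $\varepsilon/(2d)$), accounting for the doubling in the controlled cost. The step I expect to be the main obstacle is the one carrying all the weight: establishing the $n$-independent lower bound $\sqrt{n/2^{k+1}}\ge1/\sqrt2$ via the masked preparation so that a single $d$ drives the fixed-point amplification across the entire control register, and tracking the bookkeeping so that the amplification tolerance $\varDelta$ is entirely absorbed into the definition of $P_2'$ through $\xi_n$ and $\ket{\perp_n}$, leaving $\varepsilon$ to measure gate synthesis alone.
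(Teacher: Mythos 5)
Your uncontrolled construction is essentially identical to the paper's: a subtraction, the \texttt{una} mask (the paper applies it to $n-1$, which is what the \texttt{sub} step computes, so that powers of two are prepared exactly; your bound $\sqrt{n/2^{k+1}}\ge 1/\sqrt{2}$ holds either way), mask-controlled Hadamards giving an $n$-independent overlap $\ge 1/\sqrt{2}$ with $(1/\sqrt{n})\sum_{i=0}^{n-1}\ket{i}$, fixed-point amplitude amplification with $d\ge\sqrt{2}\ln(2/\sqrt{\varDelta})$ rounds via Lemma~\ref{lem:fpAA}, an inequality test to flag success, and T-free uncomputation. Your error accounting also matches: the $R_Z$ phases are the sole source of error in $\|P_2'-\tilde P_2\|$, synthesized to $\varepsilon/d$ each and summed via Eq.~\eqref{eq:unitary-subadditivity}, with $\varDelta$ absorbed into $\xi_n$ and $\ket{\perp_n}$; your round-by-round count reproduces the stated T count.

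The controlled version, however, contains a genuine gap. You argue that zeroing the generalized-reflection phases collapses the circuit to the identity, so it suffices to control only the $R_Z$ rotations. That is true of the \emph{interior amplification rounds} (each becomes $A\,I\,A^\dagger\,I=I$), but $\tilde P_2$ is not just the rounds: it begins with one uncompensated application of the preparation $A$ (the step-3 mask-controlled Hadamards) and ends with the success-flag test. With the control qubit off and only the phases zeroed, the circuit acts as \texttt{sub}, \texttt{una}, $A$, (identity rounds), flag, inverses --- leaving the target register in the superposition $A\ket{0^b}$ and potentially flipping the flag qubit, which is not the identity. This is precisely why the paper's proof also controls the step-3 Hadamard layer and the flag CNOT, and it accounts for the extra cost in the stated formula: the boundary term goes from $12b-6$ to $16b-2$, i.e. an extra $4b+4$ T gates from promoting the $b$ controlled Hadamards to doubly-controlled ones ($4b\to 8b$) and the flag CNOT to a Toffoli ($+4$). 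Your accounting, which attributes the entire change to rotation doubling at precision $\varepsilon/(2d)$, cannot reproduce this term, and the construction it describes would fail to act as the identity on the off-branch.
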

\begin{proof}
A circuit for $\tilde{P}_2$ is given in Fig. \ref{fig:p2}. 
The step-by-step description of this circuit is as follows: 
\begin{enumerate}
\item Perform subtraction using out-of-place adder.
\item Produce a register which has zeros matching the leading zeros in the binary representation of $n-1$ and ones after that. This operation named $\mathrm{una}_{(n-1)}$ has a cost of $4b-4$ T gates and $2b$ ancilla qubits using out-of-place adder \cite{lee2021even}. 
\item Perform $b$ controlled Hadamard gates with $4b$ T gates. 
\item Perform fixed-point amplitude amplification with $d(4\lceil\log_2{(d/\varepsilon)}\rceil + 8 b +C-2)-4b + 2 $ T gates and $b - 1$ ancilla qubits. 
\item Flag success of the entire procedure using an inequality test. 
\item Invert $\mathrm{una}_{(n-1)}$ without T gates. 
\item Invert subtraction without T gates. 
\end{enumerate}
The register calculated by $\mathrm{una}_{(n-1)}$ in Step 2 effectively finds an integer $\lceil \log_2 n \rceil$. 
The controlled Hadamard gates at Step 3 using the information from the previous step generates a state that is sufficiently close to $(1/\sqrt{n})\sum_{i=0}^{n-1}\ket{i}$ for each $n$, that is, a state that has overlap larger than $1/\sqrt{2}$ with $(1/\sqrt{n})\sum_{i=0}^{n-1}\ket{i}$.
Therefore, the fixed-point amplitude amplification with iteration $d\geq \sqrt{2}\ln{(2/\sqrt{\varDelta})}$ can produce the operation in \eqref{eq:lemp2} with the error parameter $\Delta$ using Lemma \ref{lem:fpAA}.

The overall cost of $P_2'$ is $d(4\lceil\log_2{(d/\varepsilon)}\rceil + 8 b +C-2)+12b-6$ T gates and $6 b$
ancilla qubits. 
Furthermore, we obtain a controlled version of $\tilde{P}_2$ by replacing controlled Hadamard gates on the third step, Z rotations during amplitude amplification, and CNOT gate on the fifth step with controlled versions of them. 
Controlled $\tilde{P}_2$ requires $d(8\lceil\log_2{(2d/\varepsilon)}\rceil + 8 b +2C-2)+16b-2$ T gates and $6 b$ ancilla qubits. 
\end{proof}

\begin{figure}[htbp]
\centerline{
\includegraphics[width=180mm, page=1]{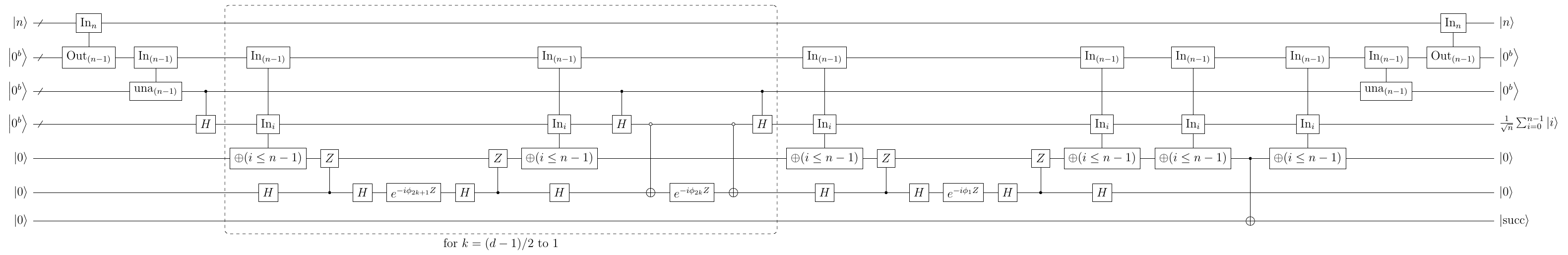}
}
\caption{Circuit for $P_2$ operator. }
\label{fig:p2}
\end{figure}

\section{Resource estimates}\label{appsec:resource-estimates}

Here, using the subroutines constructed in Appendix \ref{appsec:subroutines}, we prove Results~\ref{result:schwinger-block-encoding:main}-\ref{cor:vacuum-amplitude-estimate-main} in the main text.
The discussion that follows is summarized as Table \ref{tab:algo}.

\begin{table}
    \centering
    \scriptsize
\caption{Cost of high-level algorithms. $N$ denotes the system size.\label{tab:algo}}
\renewcommand{\arraystretch}{2}
    \begin{tabular}{p{2cm}p{0.6cm}p{3.2cm}p{2.5cm}p{4.5cm}}\hline\hline
         Algorithm &  Param &  T Cost & \# of ancilla& Note\\\hline
          Block-encoding of $H_S$ ($U_{H_S}$)& $\varepsilon$ & \raggedright$4\times \text{c}P_{2}(N, \frac{\varepsilon}{14\alpha_S}, \frac{\varepsilon}{14\alpha_S})+2\times P_1(N, \frac{\varepsilon}{14\alpha_S})+\text{c3-}V_{XX}+\text{c3-}V_{YY}+\text{c2-}V_{Z}+\text{c3-}V_{Z2}+\text{c4-}V_{Z2}+R_0(b+3)$ & $P_2(N)+P_1(N,\text{unreusable})+2$ & Normalization constant is $\alpha_S = w(N-1) + \frac{m}{2}N +  \frac{J\theta}{2\pi}\sum_{l=1, l: \mathrm{even}}^{N-1} l +  \left(\frac{J\theta}{2\pi}+\frac{J}{2}\right)\sum_{l=1, l: \mathrm{odd}}^{N-1} l + \frac{J}{8}\sum_{l=1}^{N-1} l^2$. \\
 Block-encoding of $e^{-iH_St}$ ($U_{\text{time}}$)& $\varepsilon,t$&\raggedright $3r U_{H_S}(\frac{\varepsilon}{3t})+ 6(2r+1)R_Y(\frac{\varepsilon}{18(2r+1)}) + 3(r+1)R_0(2b+4) + 3\text{c-}U_{H_S}(\frac{\varepsilon}{3t}) + 2R_0(2b+5)$  & $U_{H_S}(N,\frac{\varepsilon}{3t})$ &  $r$ is the smallest even number $\geq 2\alpha_S |t| + 3\ln{(9/\varepsilon)}$. $b=\lceil\log_2 N\rceil$.\\
 Estimating vacuum persistence amplitude & $\varepsilon, t$& $4Q_{\text{est}}U_{\text{time}}(t,\varepsilon/2)+2Q_{\text{est}}R_0(2b+5+N)+2Q_{\text{est}}R_0(2b+6+N)$& $\text{max}(R_0(2b+6+N), U_{\text{time}}(N))$ &$Q_{\mathrm{est}} =  \left\lceil\frac{5.874534}{\varepsilon} \ln{\left( 2.08 \ln{\left( \frac{2}{\varepsilon} \right)} \right)} \right\rceil $, $b=\lceil\log_2 N\rceil$.\\
    \hline \hline
    \end{tabular}
\end{table}

\subsection{Resource estimates for block-encoding}\label{appsubsec:resource-block}

Here, we prove Result \ref{result:schwinger-block-encoding:main}. For readers' convenience, we restate the result as follows:

\setcounter{result}{0}

\begin{result}[Block-encoding of the Schwinger model Hamiltonian]
Let $N\geq8\in\mathbb{N}$ be a system size and let $H_S$ be the Schwinger model Hamiltonian given by Eq. (\ref{eq:schwinger}). 
Let $b=\lceil\log_2{N}\rceil$.
For any $\varepsilon>0$, we can implement an unitary $U_{H_S}$ which is an $(\alpha_S, 2b+3, \varepsilon)$-block-encoding of 
\begin{equation}
H_{S,\mathrm{mod}}\equiv
H_S - \frac{J}{8}\sum_{l=1}^{N-1} l^2 - J\sum_{n=1}^{N-1}\left(\frac{1}{2}\frac{1+(-1)^{n-1}}{2}+\frac{\theta}{2\pi}\right)^2 ,
\end{equation}
where
\begin{equation}
\alpha_S = w(N-1) + \frac{m}{2}N +  \frac{J\theta}{2\pi} \sum_{\substack{l=1 \\ l: \mathrm{even}}}^{N-1} l +\left(\frac{J\theta}{2\pi}+\frac{J}{2}\right) \sum_{\substack{l=1 \\ l: \mathrm{odd}}}^{N-1} l + \frac{J}{8}\sum_{l=1}^{N-1} l^2,
\end{equation}
using the cost listed in Table \ref{tab:summary-block-encoding}.
\end{result}
\begin{proof}
We have seen in the main text that the circuit in Fig. \ref{fig:schwinger-block} is a $(\alpha_S, 2b+3, 0)$-block-encoding of $H_{S, \text{mod}}$ with ideal operations $P_1$ and $P_2$.
We therefore analyze the error introduced by using approximate $P_1$ and $P_2$ below.

Let $P_1$, $P_2$, $P_2'$ be unitaries defined in Lemmas \ref{lem:p1} and \ref{lem:p2}. 
We can construct a unitary $\tilde{P}_1, \tilde{P}_2$ such that $\|P_1-\tilde{P}_1\|\leq \varepsilon_1$, $\|P_2'-\tilde{P}_2\|\leq \varepsilon_2$ from Lemmas \ref{lem:p1} and \ref{lem:p2}.
Let $U_{H_S}$ be the exact block-encoding of the Schwinger model Hamiltonian in Fig. \ref{fig:schwinger-block}, $U_{H_S}'$ be its approximation using $P_2'$ instead of the ideal $P_2$, $\tilde{U}_{H_S}'$ be its approximation using $\tilde{P}_2$ instead of the ideal $P_2$, and $\tilde{U}_{H_S}$ be its approximation using $\tilde{P}_1$ and $\tilde{P}_2$ instead of the ideal $P_1$ and $P_2$. 
$\tilde{U}_{H_S}$ is the unitary that is actually implemented.
First, we can observe that
\begin{equation}\label{eq:thm1:first-appro}
\|\tilde{U}_{H_S}' - \tilde{U}_{H_S} \| \leq 2\varepsilon_1
\end{equation}
since there are two $P_1$'s in the circuit that are approximated by $\tilde{P}_1$.
Next, we can also observe that
\begin{equation}\label{eq:thm1:second-appro}
\| U_{H_S}' - \tilde{U}_{H_S}' \| \leq 4\varepsilon_2 
\end{equation}
since there are four $P_2'$'s that are approximated by $\tilde{P}_2$.

Finally, we evaluate $\|(\bra{0}\otimes I) U_{H_S} (\ket{0}\otimes I) - (\bra{0}\otimes I) U_{H_S}' (\ket{0}\otimes I) \|$. 
$U_{H_S}'$ block-encodes $H_{XX},H_{YY},H_{Z}$ exactly since it uses the ideal $P_1$. 
Therefore, we evaluate the error of $H_{Z\text{even}}, H_{Z\text{odd}}, H_{Z^2}$. 
First, observe that $P_1$ generates a state like, 
\begin{equation}\label{eq:thm1:p1-def}
P_1 \ket{0^b} = \left( \sqrt{p_{S1}} \frac{1}{\sqrt{\alpha_{S1}}} \sum_{\substack{n=1 \\ n: \mathrm{even}}}^{N-1} \sqrt{n}\ket{n} + \sqrt{p_{S2}} \frac{1}{\sqrt{\alpha_{S2}}} \sum_{\substack{n=1 \\ n: \mathrm{odd}}}^{N-1} \sqrt{n}\ket{n}  + \sqrt{p_{S3}} \frac{1}{\sqrt{\alpha_{S3}}} \sum_{n=1}^{N-1} n\ket{n} + \dots \right),
\end{equation}
where we omit the three ancilla qubits. 
The above equation, Eq. \eqref{eq:thm1:p1-def}, should be interpreted as the one that defines $p_{S1}$, $p_{S2}$ and $p_{S3}$ to match the definition of $P_1$, Eq. \eqref{eqapp1.25}.
$U_{H_S}$ approximately encodes $H_{Z\text{even}}$, $H_{Z\text{odd}}$ and $H_{Z^2}$ as,
\begin{align}\label{eq:thm1:Zeven}
(\bra{0^b}\bra{0^b})P_1^\dag \tilde{P}_2^\dag V_{Z^2} \tilde{P}_2 P_1 (\ket{0^b}\ket{0^b})
&= \frac{p_{S1}}{\alpha_{S1}}  \sum_{\substack{n=1 \\ n: \mathrm{even}}}^{N-1} n \left((1-\delta_n)\frac{1}{n}\sum_{i=0}^{n-1} Z_i + \delta_n \right), \\ 
(\bra{0^b}\bra{0^b})P_1^\dag \tilde{P}_2^\dag V_{Z^2} \tilde{P}_2 P_1 (\ket{0^b}\ket{0^b})
&= \frac{p_{S2}}{\alpha_{S2}}  \sum_{\substack{n=1 \\ n: \mathrm{odd}}}^{N-1} n \left((1-\delta_n) \frac{1}{n} \sum_{i=0}^{n-1} Z_i + \delta_n \right), \label{eq:thm1:Zodd} \\
\label{eq:thm1:Z2}
    (\bra{0^b}\bra{0^b})P_1^\dag \tilde{P}_2^\dag V_{Z^2} \tilde{P}_2 (I\otimes R_0)) \tilde{P}_2^\dag V_{Z^2} \tilde{P}_2 P_1 (\ket{0^b}\ket{0^b})
&= \frac{2 p_{S3}}{\alpha_{S3}} \sum_{n=1}^{N-1} n^2 \left( (1-\delta_n) \frac{1}{n} \sum_{i=0}^{n-1} Z_i + \delta_n \right)^2 - p_{S3}
\end{align}
where $\delta_n$ is defined in Lemma \ref{lem:p2}. 
As a result, 
\begin{align}\label{eq:thm1:third appro}
&\|(\bra{0}\otimes I) U_{H_S} (\ket{0}\otimes I) - (\bra{0}\otimes I) U_{H_S}' (\ket{0}\otimes I) \| \\
\begin{split}\nonumber
=& \left\| \frac{p_{S1}}{\alpha_{S1}}  \sum_{\substack{n=1 \\ n: \mathrm{even}}}^{N-1} n \left(\delta_n\frac{1}{n}\sum_{i=0}^{n-1} Z_i - \delta_n\right) + \frac{p_{S2}}{\alpha_{S2}}  \sum_{\substack{n=1 \\ n: \mathrm{odd}}}^{N-1} n \left(\delta_n \frac{1}{n} \sum_{i=0}^{n-1} Z_i - \delta_n\right) \right. \\
&\left. 
+ \frac{2 p_{S3}}{\alpha_{S3}} \sum_{n=1}^{N-1} n^2 \left( \frac{1}{n} \sum_{i=0}^{n-1} Z_i \right)^2
- \frac{2 p_{S3}}{\alpha_{S3}} \sum_{n=1}^{N-1} n^2 \left( (1-\delta_n) \frac{1}{n} \sum_{i=0}^{n-1} Z_i + \delta_n \right)^2
\right\| 
\end{split}\\ \nonumber
\leq& 2 p_{S1} \varDelta + 2 p_{S2} \varDelta + \frac{2 p_{S3}}{\alpha_{S3}} \sum_{n=1}^{N-1} n^2  \left\|  \left( \frac{1}{n} \sum_{i=0}^{n-1} Z_i \right)^2 - \left( (1-\delta_n) \frac{1}{n} \sum_{i=0}^{n-1} Z_i + \delta_n \right)^2
\right\| \\ \nonumber
\leq &  2 p_{S1} \varDelta + 2 p_{S2} \varDelta + \frac{2 p_{S3}}{\alpha_{S3}} \sum_{n=1}^{N-1} n^2 \left(
\left\| \frac{1}{n} \sum_{i=0}^{n-1} Z_i \right\| \cdot
\left\| \delta_n \frac{1}{n} \sum_{i=0}^{n-1} Z_i - \delta_n \right\|  
+ 
\left\| \delta_n \frac{1}{n} \sum_{i=0}^{n-1} Z_i - \delta_n \right\| \cdot
\left\| (1-\delta_n) \frac{1}{n} \sum_{i=0}^{n-1} Z_i + \delta_n \right\| 
\right) \\ \nonumber
\leq &  2 p_{S1} \varDelta + 2 p_{S2} \varDelta + \frac{2 p_{S3}}{\alpha_{S3}} \sum_{n=1}^{N-1} n^2 \left( 2\varDelta + 2\varDelta \right) \\ 
\leq & 2 p_{S1} \varDelta + 2 p_{S2} \varDelta + 8 p_{S3} \varDelta 
\leq 8 \varDelta, 
\end{align}
where in the last inequality we use $p_{S1}+p_{S2}+p_{S3}\leq1$. 
We therefore have,
\begin{align}\label{eq:thm1:fourth-appro}
\left\| (\bra{0}\otimes I) U_{H_S} (\ket{0}\otimes I) - (\bra{0}\otimes I) \tilde{U}_{H_S} (\ket{0}\otimes I) \right\| 
\leq 2\varepsilon_1 + 4\varepsilon_2 + 8 \varDelta. 
\end{align}
Setting $\varepsilon_1 = \varepsilon_2 = \varDelta = \varepsilon/(14 \alpha_S)$, we get the $(\alpha_S, 2m+3, \varepsilon)$-block-encoding of $H_{S,\mathrm{mod}}$: 
\begin{equation}\label{eq:thm1:fifth-appro}
\left\| H_{S,\mathrm{mod}} - \alpha_S(\bra{0}\otimes I) \tilde{U}_{H_S} (\ket{0}\otimes I) \right\|
\leq \varepsilon. 
\end{equation}
Counting T gates and ancilla qubits with the above error parameters, we obtain the stated result. 
\end{proof}

\subsection{Resource estimates for Hamiltonian simulation and amplitude estimation}\label{appsubsec:hamsim-ampest}

\subsubsection{Hamiltonian simulation}

We use the following Lemma that gives the detailed cost for Hamiltonian simulation using its block-encoding. 
Note that the Z rotations used in the Lemma is assumed to be exact ones without error.

\begin{lem}[Robust block-Hamiltonian simulation \cite{gilyen2019quantum, berry2014exponential}]\label{lem:hamiltonian-simulation-resource}
Let $t\in\mathbb{R}$ and $\varepsilon\in(0,1)$. 
Suppose that $U_H$ is an $(\alpha,l,\varepsilon/2|t|)$-block-encoding a Hamiltonian $H$ and $r'$ is the smallest even number $\geq 2\alpha|t| + 3\ln{(6/\varepsilon)}$. 
Then we can implement a unitary which is an $(1,l+2,\varepsilon)$-block-encoding of $e^{-iHt}$ with $3r'$ uses of $U_H$ or its inverse, $3$ uses of controlled $U_H$ or its inverse, $6r+3$ uses of controlled Z rotation gates, $3(r+1)$ uses of reflection operators $2\ket{0^{l+1}}\bra{0^{l+1}}-I$, and $2$ uses of reflection operators $2\ket{0^{l+2}}\bra{0^{l+2}}-I$. We provide the circuit to realize this in Fig.~\ref{fig:Hamiltonian_simulation}.
\end{lem}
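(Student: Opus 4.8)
The plan is to reduce the statement to the standard quantum eigenvalue transformation (QET) construction for Hamiltonian simulation of Ref.~\cite{gilyen2019quantum} and then to track each primitive so as to recover the stated constants. Write $\tilde{A} = (\bra{0^l}\otimes I)U_H(\ket{0^l}\otimes I)$ for the block that $U_H$ actually encodes and set $\tilde{H}=\alpha\tilde{A}$; by hypothesis $\|H-\tilde{H}\|\le \varepsilon/(2|t|)$, and the operator to be implemented is $e^{-iHt}=g(H/\alpha)$ with $g(x)=e^{-i\alpha t x}$ acting on a spectrum contained in $[-1,1]$.

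First I would approximate $g$ by polynomials of definite parity. Using the Jacobi--Anger expansion, $\cos(\alpha t x)$ and $\sin(\alpha t x)$ are written as Chebyshev series with Bessel-function coefficients; truncating both at degree $r'$, the smallest even integer $\ge 2\alpha|t|+3\ln(6/\varepsilon)$, yields real even and odd polynomials $P_c,P_s$ bounded by $1$ on $[-1,1]$ such that $P_c-iP_s$ approximates $e^{-i\alpha t x}$ to error at most $\varepsilon/2$ on $[-1,1]$. This truncation bound is exactly the one of Ref.~\cite{gilyen2019quantum}, and the threshold $2\alpha|t|+3\ln(6/\varepsilon)$ is where the query count originates, consistent with the $6\alpha|t|+9\ln(6/\varepsilon)=3r'$ figure quoted in Sec.~\ref{subsec:hamiltonian_simulation_intro}.

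Next I would realize these polynomials by QET. A degree-$r'$ transformation of $\tilde{A}$ uses $r'$ interleaved applications of $U_H$ and $U_H^\dagger$, each separated by a projector-controlled phase $e^{i\phi_k(2\Pi-I)}$ with $\Pi=\ket{0^{l+1}}\bra{0^{l+1}}$, implemented by one reflection $2\ket{0^{l+1}}\bra{0^{l+1}}-I$ together with one (controlled) single-qubit $Z$ rotation carrying $\phi_k$. The even part $P_c$ and the odd part $P_s$ are generated from the \emph{same} query string by two phase sequences selected by one extra ancilla qubit, on which Hadamard and $S$ gates assemble the linear combination $\tfrac12\big(P_c(\tilde{A})-iP_s(\tilde{A})\big)\approx \tfrac12 e^{-iHt}$. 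This step accounts for the additional ancilla ($l+1\to l+2$), the $3$ controlled-$U_H$ needed to reconcile the opposite parities of the two branches, and the doubling of the rotation count to about $2r'$ per block, while the shared query string keeps the $U_H$ count at $r'$ per block. Finally, since the encoded amplitude is $1/2$, one round of oblivious amplitude amplification, $W\,R\,W^\dagger\,R\,W$ with $R=2\ket{0^{l+2}}\bra{0^{l+2}}-I$, boosts it to give the $(1,l+2,\cdot)$-block-encoding of $e^{-iHt}$; the three copies of $W$ multiply the per-block counts by three, producing $3r'$ uses of $U_H$, $6r+3$ controlled $Z$ rotations, $3(r+1)$ inner reflections, and the $2$ outer reflections on $l+2$ qubits.

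The main obstacle is the robustness bookkeeping rather than the circuit layout: I must show that using the merely approximate block-encoding does not spoil the total error beyond $\varepsilon$. I would split the budget in two. The polynomial-truncation error contributes at most $\varepsilon/2$ by the choice of $r'$. The block-encoding perturbation contributes at most $\varepsilon/2$ through a robustness bound of QET of the form $|t|\,\|H-\tilde{H}\|\le |t|\cdot \varepsilon/(2|t|)=\varepsilon/2$, which holds rigorously by the robustness results established in Refs.~\cite{gilyen2019quantum, berry2014exponential} (the naive Lipschitz estimate being only heuristic since $\tilde{H}$ need not be Hermitian). The triangle inequality then yields total error $\varepsilon$, and assembling the primitive-by-primitive counts from the previous paragraph completes the proof.
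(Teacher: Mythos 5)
Your proposal is correct and takes essentially the same route as the paper, which proves this lemma only by deferring to the QET construction of Ref.~\cite{gilyen2019quantum} and the (robust) oblivious amplitude amplification of Ref.~\cite{berry2014exponential}, presenting exactly the circuit you describe in Fig.~\ref{fig:Hamiltonian_simulation}: a shared query string of $r'$ uses of $U_H$/$U_H^\dagger$ with controlled phases $\phi_j^{(\sin)},\phi_j^{(\cos)}$ selecting the even/odd (cosine/sine) branches on one extra ancilla, the parity-reconciling controlled-$U_H$, and one round $W R W^\dagger R W$ of amplification, yielding the stated counts $3r'$, $3$, $6r+3$, $3(r+1)$, and $2$. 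Your error split ($\varepsilon/2$ from Jacobi--Anger truncation at degree $r' \geq 2\alpha|t|+3\ln(6/\varepsilon)$, $\varepsilon/2$ from QET robustness to the imperfect block-encoding, correctly noting that the naive Lipschitz bound is not rigorous since the encoded block need not be Hermitian) matches the bookkeeping of the cited references.
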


\begin{figure}
  \centering
  \subfloat[]{
    \includegraphics[width=1\textwidth]{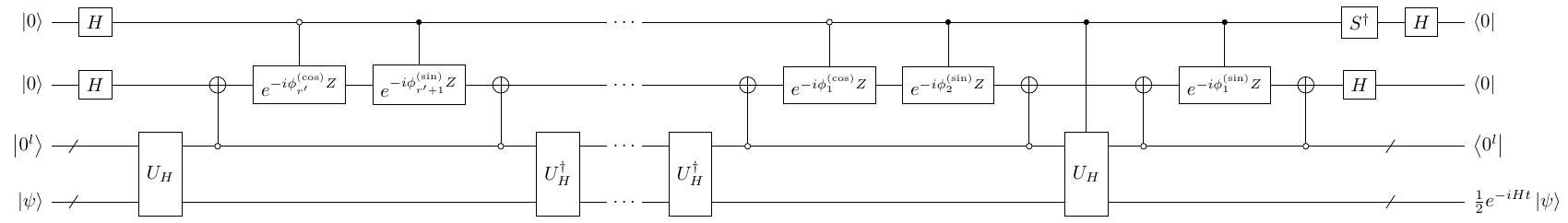}
    \label{fig:Hamiltonian_simulation_step1}
  }
  \\ 
  \subfloat[]{
    \includegraphics[width=0.4\textwidth]{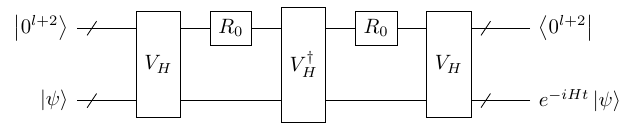}\hfill
    \label{fig:oblivious_amplitude_amplification}
  }
  \caption{\label{fig:Hamiltonian_simulation}Circuit for robust block-Hamiltonian simulation (Lemma \ref{lem:hamiltonian-simulation-resource}). (a) Circuit for realizing \(\frac{1}{2}e^{-iHt}\). $\phi_{j}^{\rm (sin)}$ and $\phi_{j}^{\rm (cos)}$ are real parameters predetermined by quantum signal processing technique \cite{gilyen2019quantum}. (b) Circuit for realizing $e^{-iHt}$ via oblivious amplitude amplification. $V_H$ denotes the whole circuit of (a).}
  \label{fig:combined_figures}
\end{figure}

We now restate the cost for simulatng the Schwinger model Hamiltonian which is Result \ref{cor:simulation-schwinger-main} in the main text, and give its proof. 
\setcounter{cor}{0}
\begin{result}[Simulating the Schwinger model Hamiltonian]\label{cor:simulation-schwinger}
Let $t\in \mathbb{R}$, $\varepsilon\in(0,1)$, $N\geq 8 \in \mathbb{N}$ be a system size, $b=\lceil \log_2{N} \rceil$, and $r$ be the smallest even number $\geq 2\alpha_S |t| + 3\ln{(9/\varepsilon)}$. 
Let $\alpha_S$, $C$ and $H_{S,\mathrm{mod}}$ be as defined in Result~\ref{result:schwinger-block-encoding:main}.
Suppose that $U_{H_S}$ is a unitary which is an $(\alpha_S,2b+3,\varepsilon/(3|t|))$-block-encoding of $H_{S,\mathrm{mod}}$ and let $C_{H_S}$ be the number of T gates required to implement $U_{H_S}$ given in Result \ref{result:schwinger-block-encoding:main}. 
We can implement a unitary $U_{\mathrm{time}}$ which is an $(1,2b+5,\varepsilon)$-block-encoding of $e^{-i H_{S,\mathrm{mod}} t}$ using 
\begin{equation}
 r \left( 3C_{H_S} + 48 \left\lceil\log_2{\left(\frac{18(2r+1)}{\varepsilon} \right)}\right\rceil + 24 b + 12 C + 24 \right) \\
+ 3C_{H_S} + 24 \left\lceil\log_2{\left(\frac{18(2r+1)}{\varepsilon} \right)}\right\rceil + 40 b + 6C + 120
 \end{equation}
T gates and
\begin{equation}
 6b + \max{(2\lceil \log_2 N' \rceil + \lceil \log_2 (N'-1) \rceil, 3\lceil \log_2 N'' \rceil)} + 6
\end{equation}
ancilla qubits.
\end{result}
\begin{proof}
We use Lemma \ref{lem:hamiltonian-simulation-resource} to obtain $(1,2b+5,2\varepsilon/3)$-block-encoding of $e^{-i H_{S,\mathrm{mod}} t}$ with exact Z rotations, and set the error caused by approximate Z rotations due to the Clifford+T decomposition as $\varepsilon/3$ to achieve $\varepsilon$-precision for the overall procedure. Using these error parameters and counting the number of T gates and ancilla qubits, we obtain the stated result.
Note that controlled $U_{H_S}$ can be implemented using $C_{H_S}+24$ T gates by adding another control qubit for every SELECT circuit in Fig.~\ref{fig:schwinger-block}. 
\end{proof}

\subsubsection{Amplitude estimation}

Next, we provide the cost analysis for Chebyshev amplitude estimation.
Let $\ket{\psi}$ be a state, $\Pi$ be a projector, and $\omega=\|\Pi \ket{\psi}\|$. 
The objective of Chebyshev amplitude estimation is to construct a random variable $\hat{\omega}$ satisfying $\mathrm{Pr}[|\hat{\omega}-\omega|\geq \varepsilon]\leq \delta$ by calling reflection operators $R_{\Pi} = 2\Pi - I$ and $R_{\psi} = 2\ket{\psi}\bra{\psi}-I$.
Rough skecth of the algorithm is as follows: for some integer $d$, apply $\cdots R_{\Pi} R_{\psi} R_{\Pi}$ to $\ket{\psi}$ in such a way that $R_{\Pi}$ and $R_{\psi}$ are applied $d$ times in total. Then, at the end of the circuit, measure in the eigenbasis of $R_{\Pi}$ or $R_{\psi}$. 
This can be performed with e.g. the Hadamard test using the controlled-$R_\Pi$ or controlled-$R_{\psi}$. 
We repeat the above process appropriately increasing $d$ based on the past measurement results to get an estimate $\hat{\omega}$.

Let us denote the numbers of queries to $2\Pi - I$ and $2\ket{\psi}\bra{\psi}-I$, including its controlled versions for measurements, throughout the algorithm by $Q_{\Pi}$ and $Q_{\psi}$ respectively.
Note that the controlled versions of $2\Pi - I$ and $2\ket{\psi}\bra{\psi}-I$ in our problem setting can be realized with merely four more T gates than their original versions.
We can safely neglect this additional cost to realize the controlled reflections as the overall T count to implement the uncontrolled version of $2\ket{\psi}\bra{\psi}-I$ is over $10^8$ even for relatively small system sizes (see Fig. \ref{fig:comp-T-overall} in the main text). 
As $d$ is determined from the random process, that is, the history of the measurement, $Q_{\Pi}$ and $Q_{\psi}$ varies probabilistically.
Their distribution depend on $\omega$, $\varepsilon$, and $\delta$.
Note that it has been observed that $Q_{\Pi} \approx Q_{\psi}$ (see the proof of Proposition 3 of Ref. \cite{rall2023amplitude}), as can be intuitively speculated from the structure of the algorithm.

To evaluate $Q_{\Pi}$ and $Q_{\psi}$ for our specific problem setting, we numerically simulate the algorithm based on the program code provided in Ref.~\cite{rall2023amplitude} for $\varepsilon = 0.005$ and $\delta = 0.05$.
In Fig. \ref{fig:chebae_analysis}, we show the distribution of $Q_{\psi}+Q_{\Pi}$ of 1000 trials each for different $\omega$ in this error setting.
We can observe that we need $Q_{\psi}+Q_{\Pi}\approx 2000$ on average for most challenging cases (small $\omega$'s) and less than 2000 when $\omega$ is larger.
Combined with the emprical property of Chebyshev amplitude estimation that $Q_{\Pi} \approx Q_{\psi}$, we conclude that $Q_{\psi}=Q_{\Pi}=1000$ can achieve $\varepsilon = 0.005$ and $\delta = 0.05$ on average.

Let us now argue the correctness of Result \ref{cor:vacuum-amplitude-estimate-main}. For convenience, we restate it below:

\begin{result}[Estimating the vacuum persistence amplitude based on empirical assumption]\label{cor:vacuum-amplitude-estimate}
Let $t\in \mathbb{R}$, $N\geq 8 \in \mathbb{N}$ be a system size, $b=\lceil \log_2{N} \rceil$, $\ket{\mathrm{vac}} =\ket{1 0 1 0 \cdots }$, and
$C_{\mathrm{time}}$ be the number of T gates required to implement a $(1,2b+5,0.005)$-block-encoding of $e^{-i H_{S,\mathrm{mod}} t}$ via Result \ref{cor:simulation-schwinger-main}. 
Then we can sample from a random variable $\hat{G}$ satisfying $\mathrm{Pr}[|\hat{G}- |\braket{\mathrm{vac}|e^{-iH_S t}|\mathrm{vac}}| |\geq 0.01]\leq 0.05 $ using about $2000(C_{\mathrm{time}} + 4N + 8b + 12)$ T gates on average and $\max(N+2b+3, 6b + \max(2\lceil \log_2 N' \rceil + \lceil \log_2 (N'-1) \rceil, 3\lceil \log_2 N'' \rceil) + 6)$ ancilla qubits.
\end{result}

As disscussed in the main text, to obtain the vaccum persistence amplitude,
we set $\Pi = \ket{\text{vac}'}\bra{\text{vac}'}$, where $\ket{\text{vac}'} = \ket{0^{\nu -N}}\ket{\text{vac}}$ and $\nu = N+2\lceil \log_2{N} \rceil+5$, and implement $2\ket{\psi}\bra{\psi}-I$ via $U_{\text{time}}(2\ket{\text{vac}'}\bra{\text{vac}'}-I)U_{\text{time}}^\dagger$, where $U_{\text{time}}$ is the unitary which block-encodes $e^{-iH_{S,\text{mod}}t}$.
We therefore need two calls to $U_{\text{time}}$ and a single call of $2\ket{\text{vac}'}\bra{\text{vac}'}-I$ to implement $2\ket{\psi}\bra{\psi}-I$.
$2\ket{\text{vac}'}\bra{\text{vac}'}-I$ is equivalent to $2\ket{0^\nu}\bra{0^\nu}-I$ up to X gates and $2\ket{0^\nu}\bra{0^\nu}-I$ can be implemented with $4\nu - 8$ T gates.
We use the Chebyshev amplitude estimation with $Q_{\psi}=Q_{\Pi}=1000$ on average to achieve $\varepsilon=0.005$ and $\delta=0.05$.
This argument shows the correctness of the T count.

Next, we need $\nu-2$ ancilla qubits to implement $2\ket{0^{\nu}}\bra{0^{\nu}}-I$ with $4\nu-8$ T gates. We therefore see the correctness of the statement about ancilla qubits.

\begin{figure}
\centerline{
\includegraphics[width=125mm, page=1]{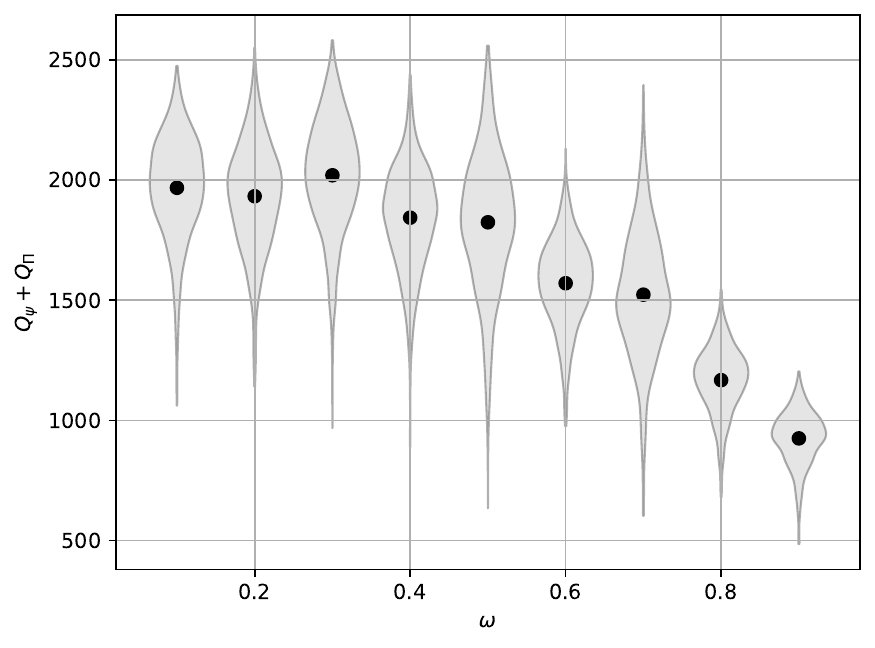}
}
\caption{The analysis of query complexity for Chebyshev amplitude estimation. $Q_\psi$ denotes the number of queries to the reflection $2\ket{\psi}\bra{\psi}-I$ and $Q_\Pi$ denotes the number of queries to the reflection $2\Pi-I$. Each violin has 1000 samples and the dots indicate the means. }
\label{fig:chebae_analysis}
\end{figure}

Finally, we provide error analysis.
First, note that $|\braket{\mathrm{vac}|e^{-iH_S t}|\mathrm{vac}}| = |\braket{\mathrm{vac}|e^{-i H_{S,\mathrm{mod}} t}|\mathrm{vac}}|$. 
By using $Q_{\psi} = Q_\Pi =1000$ queries on average, we can obtain an estimate $\hat{G}$ such that 
$|\hat{G}- | \bra{\text{vac}'} U_{\mathrm{time}} \ket{\text{vac}'} | |\leq 0.005$.
This $\hat{G}$ satisfies the following inequalities:
\begin{align*}
& \left|\hat{G}- |\braket{\mathrm{vac}|e^{-iH_S t}|\mathrm{vac}}| \right| \\
& \leq \left|\hat{G}- | \bra{\text{vac}'} U_{\mathrm{time}} \ket{\text{vac}'}  | \right| + \left| | \bra{\text{vac}'} U_{\mathrm{time}} \ket{\text{vac}'}  | - |\braket{\mathrm{vac}|e^{-i H_{S,\mathrm{mod}} t}|\mathrm{vac}}| \right| \\
& \leq 0.005 + \left| \bra{0^{2b+5}}\bra{\text{vac}} U_{\mathrm{time}} \ket{0^{2b+5}}\ket{\text{vac}} - \braket{\mathrm{vac}|e^{-i H_{S,\mathrm{mod}} t}|\mathrm{vac}} \right| \\
&= 0.005 + \left| \bra{0^{2b+5}} U_{\mathrm{time}} \ket{0^{2b+5}} - e^{-i H_{S,\mathrm{mod}} t} \right| \\
&\leq 0.005 + 0.005 = 0.01. 
\end{align*}
This concludes the argument to show the correctness of Result \ref{cor:vacuum-amplitude-estimate-main}.

In the above process, we used Chebyshev amplitude estimation to estimate the vacuum persistent amplitude. 
Here, we would like to provide comparison between Chebyshev amplitude estimation and simpler methods (such as the Hadamard test or direct measurement). 
In the case of the Hadamard test, the number of queries to the time evolution operator is given by Hoeffding's inequality:
Let $X_1, \dots , X_Q$ be independent random variables such that $X_q \in [0,1]$ and $\bar{X} = \frac{1}{Q}\sum_{q=1}^{Q}X_q$. Then, for any $\varepsilon \geq 1$, 
\begin{equation*}
\mathrm{P}\left( \left| \bar{X} - \mathrm{E}\left( \bar{X} \right) \right| \geq \varepsilon \right) < 2 e^{-2Q \varepsilon^2}, 
\end{equation*}
where $\mathrm{E}\left( \bar{X} \right)$ is the expected value of $\bar{X}$. 
To achieve $\varepsilon = 0.005$ and $\delta = 0.05$, we have to set the number of queries $Q$ as follows: 
\begin{align*}
2 e^{-2Q \varepsilon^2} &< \delta \\
Q &> \frac{1}{2 \varepsilon^2} \ln{\left( \frac{2}{\delta} \right)} \approx 73778. 
\end{align*}
On the other hand, in the case of Chebyshev amplitude estimation, the number of queries was about 2000. 
Therefore, even considering constant factors, Chebyshev amplitude estimation can be said to be better.

\end{document}